\tikzstyle{dot}=[circle,draw=gray!80,fill=gray!20,thick,inner sep=0pt,minimum size=12pt] 
\newtheorem{lemma}{Lemma}
\newtheorem{definition}{Definition}
\newtheorem{theorem}[lemma]{Theorem}
\DeclareMathOperator{\head}{\text{head}}
\DeclareMathOperator{\tail}{\text{tail}}
\DeclareMathOperator{\In}{\text{I}}
\DeclareMathOperator{\Out}{\text{O}}
\newcommand{\etal}{\textit{et al.}}
\newcommand{\ffield}{\mathbb{F}}
\newcommand{\fsize}{q}
\newcommand{\mwd}{{\rm MWD}}
\newcommand{\rank}{\mathrm{rank}}
\newcommand{\graph}{\mathcal{G}}
\newcommand{\edgeset}{\mathcal{E}}
\newcommand{\nodeset}{\mathcal{V}}
\newcommand{\sinkset}{\mathcal{T}}
\newcommand{\errset}{\ffield^{|\edgeset|}}
\newcommand{\sinknode}{t}
\newcommand{\sinkt}{t}
\newcommand{\msgdim}{\omega}
\newcommand{\msgset}{\mathcal{C}}
\newcommand{\mc}{\mathrm{maxflow}}
\newcommand{\bF}{\mathbf{F}}
\newcommand{\bx}{\mathbf{x}}
\newcommand{\bzero}{\mathbf{0}}
\newcommand{\by}{\mathbf{y}}
\newcommand{\bz}{\mathbf{z}}
\newcommand{\fin}{\bar{F}}
\newcommand{\fout}{F}
\newcommand{\tr}{\top}
\newcommand{\bigO}{\mathcal{O}}
\newcommand{\idxset}{\mathcal{L}}
\newcommand{\bA}{\mathbf{A}}
\newcommand{\badxz}{\Sigma}
\begin{document}
\title{Refined Coding Bounds and Code Constructions for Coherent Network Error Correction}

\author{Shenghao~Yang, 
Raymond~W.~Yeung,~\IEEEmembership{Fellow,~IEEE}, Chi~Kin~Ngai
\thanks{Shenghao Yang, Raymond W. Yeung and Chi~Kin~Ngai are with the
  Department of Information Engineering, The Chinese University of
  Hong Kong.
  Emails:~shenghao.yang@gmail.com,~whyeung@ie.cuhk.edu.hk,~ckngai@alumni.cuhk.net}
}

\maketitle

\begin{abstract}
Coherent network error correction is the error-control problem in network coding with the knowledge of the network codes at the source and sink nodes. 
With respect to a given set of local encoding kernels defining a linear network code, we obtain
refined versions of the Hamming bound, the Singleton bound and the Gilbert-Varshamov bound for coherent network error correction. Similar to its classical counterpart, this refined Singleton bound is tight for linear network codes. The tightness of this refined bound is shown by two construction algorithms of linear network codes achieving this bound. 
These two algorithms illustrate different design methods: one makes use of existing network coding algorithms for error-free transmission and the other makes use of classical error-correcting codes.
The implication of the tightness of the refined Singleton bound is that the sink nodes with higher maximum flow values can have higher error correction capabilities.
\end{abstract}

\begin{keywords} Network error correction, network coding, Hamming bound, Singleton bound, Gilbert-Varshamov bound, network code construction.
\end{keywords}

\section{Introduction}
\label{sec:intr}

Network coding has been extensively studied for multicasting information in a directed communication network when the communication links in the network are error free.
It was shown by Ahlswede \emph{et al.} \cite{flow} that the network capacity for multicast satisfies the max-flow min-cut theorem, and this capacity can be achieved by network coding.
Li, Yeung, and Cai \cite{linear} further showed that it is sufficient to consider linear network codes only.
Subsequently, Koetter and M\'{e}dard \cite{alg} developed a matrix framework for network coding.
Jaggi \emph{et al.} \cite{poly} proposed a deterministic polynomial-time algorithm to construct linear network codes.
Ho \emph{et al.} \cite{ho06j} showed that optimal linear network codes can be efficiently constructed by a randomized algorithm
with an exponentially decreasing probability of failure.

\subsection{Network Error Correction}

Researchers also studied how to achieve reliable
communication by network coding when the communication links are not
perfect. For example, network transmission may suffer from
link failures \cite{alg}, random errors \cite{nwc_err} and
maliciously injected errors \cite{jaggi08j}. 
We refer to these distortions in network transmission collectively as \emph{errors}, and the network coding techniques for combating errors as \emph{network error correction}.

Fig.~\ref{fig:8gaoq} shows one special case of network error correction with two nodes, one source node and one sink node, which are connected by parallel links. This is the model studied in classical algebraic coding theory \cite{macwilliams78,lin04}, a very rich research field for the past 50 years. 

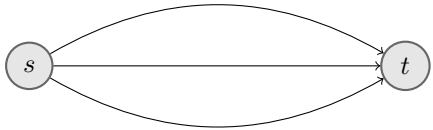
\begin{figure}
  \centering 
  \tikzstyle{dot}=[circle,draw=black!60,fill=gray!20,thick,inner sep=4pt,minimum size=10pt]
  \begin{tikzpicture}[scale=2.5]
    \node[dot] (s) at (-1,0) {$s$};
    \node[dot] (t) at (1,0) {$t$};
    \draw[->] (s) to (t);
    \draw[->] (s) to[out=30,in=150] (t);
    \draw[->] (s) to[out=-30,in=-150] (t);
  \end{tikzpicture}
  \caption[A classical error correction example]{This is a classical error correction example, where $s$ is the source node and $t$ is the sink node. This model is extensively studied by algebraic coding.} \label{fig:8gaoq}
\end{figure}

Cai and Yeung \cite{nwc_err, nec1, nec2} extended the study
of algebraic coding from classical error correction to
network error correction.  They generalized the Hamming
bound (sphere-packing bound), the Singleton bound and the
Gilbert-Varshamov bound (sphere-covering bound) in classical
error correction coding to network coding. 
Zhang studied network error correction in packet networks \cite{zhang08},
where an algebraic definition of the minimum distance for
linear network codes was introduced and the decoding problem
was studied.  
The relation between network codes and
maximum distance separation (MDS) codes in classical
algebraic coding \cite{singleton64} was clarified in
\cite{NWC_theory}.

In \cite{nwc_err, nec1, nec2}, the common assumption is that the sink
nodes know the network topology as well as the network code used in
transmission. This kind of network error correction is referred to as
\emph{coherent network error correction}.  By contrast, network error
correction without this assumption is referred to as \emph{noncoherent
  network error correction}.\footnote{Coherent and noncoherent
  transmissions for network coding are analogous to coherent and
  noncoherent transmissions for multiple antenna channels in wireless
  communications.}  When using the deterministic construction of
linear network codes \cite{linear, poly}, the network transmission is
usually regarded as ``coherent''.  For random network coding, the
network transmission is usually regarded as ``noncoherent''.  It is
possible, however, to use noncoherent transmission for
deterministically constructed network codes and use coherent
transmission for randomly constructed network codes.

In \cite{yang08eu}, Yang \etal\ developed a framework for
characterizing error correction/detection capabilities of
network codes for coherent network error correction.  Their
findings are summarized as follows.  First, the error
correction/detection capabilities of a network code are
completely characterized by a two-dimensional region of
parameters which reduces to the minimum Hamming distance when
1) the network code is linear, and 2) the weight measure on
the error vectors is the Hamming weight.  For a nonlinear
network code, two different minimum distances are needed for
characterizing the capabilities of the code for error
correction and for error detection. This led to the
discovery that for a nonlinear network code, the number of
correctable errors can be more than half of the number of
detectable errors. (For classical algebraic codes, the
number of correctable errors is always the largest integer
not greater than half of the number of detectable errors.)
Further, for the general case, an equivalence relation on
weight measures was defined and it was shown that weight
measures belonging to the same equivalence class lead to the
same minimum weight decoder.  In the special case of network
coding, four weight measures, including the Hamming weight
and others that have been used in various works
\cite{zhang08, adv, weight}, were proved to be in the same
equivalence class for linear network codes.

Network error detection by random network coding has been
studied by Ho \emph{et al.} \cite{byz}.  Jaggi \etal\
\cite{adv,jaggi08j, eav} have developed random algorithms
for network error correction with various assumptions on the
adversaries.  
A part of the work by Zhang \cite{zhang08}
considers packet network error correction when the network
code is not known by receivers, where a sufficient condition
for correct decoding was given in terms of the minimum
distance.  
The distribution of the minimum distance when
applying random network coding was bounded by Balli, Yan and
Zhang \cite{balli07}.  They also studied decoding network
error-correcting codes beyond the error correction
capability \cite{yan07}.

Koetter and Kschischang \cite{koetter08j} introduced a general
framework for noncoherent network error correction. In their
framework, messages are modulated as subspaces, so a code for
noncoherent network error correction is also called a subspace
code. They
proved a Singleton bound, a sphere-packing bound and a sphere-covering
bound for subspace codes.  Using rank-metric codes, Silva and
Kschischang \cite{silva08j} constructed nearly optimal subspace codes
and studied the decoding algorithms.

\subsection{Paper Outline}

In this paper, we follow the framework provided in \cite{yang08eu} to
study the design of linear network codes for coherent
network error correction.

The coding bounds for coherent network error correction
obtained in \cite{nwc_err, nec1,nec2} take only one sink
node with the smallest maximum flow from the source node
into consideration. We observe that each sink node can be
considered individually and a sink node with larger
maximum flow can potentially have higher error
correction/detection capability. These observations lead to
the refined versions of the Hamming bound, the Singleton
bound and the Gilbert-Varshamov bound for network error
correction to be proved in this work.  By way of the weight
properties of network coding, the proof of these bounds are
as transparent as their classical counterparts for
linear network codes.  By contrast, the proofs of the
original versions of these bounds (not necessarily for
linear network codes) in \cite{nec1,nec2} are considerably
more complicated.  The refined Singleton bound was also
implicitly obtained by Zhang \cite{zhang08} independently.
When applying to classical error correction, these bounds
reduce to the classical Hamming bound, the classical
Singleton bound and the classical Gilbert-Varshamov bound,
respectively.

Similar to its classical counterpart, this refined Singleton bound is
tight for linear network codes. The tightness of this refined bound is
shown by two construction algorithms of linear network codes achieving
the bound.  A linear network code consists of two parts, a codebook
and a set of local encoding kernels (defined in Section
\ref{sec:prob-linear}).  Our first algorithm finds a codebook based on
a given set of local encoding kernels.  The set of local encoding
kernels that meets our requirement can be found by the polynomial-time
algorithm in \cite{poly}.  The second algorithm finds a set of local
encoding kernels based on a given classical error-correcting code
satisfying a certain minimum distance requirement as the codebook.
These two algorithms illustrate different design methods.  The set of
local encoding kernels determines the transfer matrices of the
network.  The first algorithm, similar to the classical algebraic
coding, designs a codebook for the transfer matrices.  The second algorithm,
instead, designs transfer matrices to match a codebook.

Various parts of this paper have appeared in \cite{cons,
  refined}.  Subsequent to \cite{cons}, based on the idea of
static network codes \cite{alg}, Matsumoto
\cite{Matsumoto2007} proposed an algorithm for constructing
linear network codes achieving the refined Singleton bound.
In contrast to ours, Matsumoto's algorithm designs the codebook
and the local encoding kernels together.  The complexity and
field size requirements of these three algorithms are
compared.

This paper is organized as follows. In Section
\ref{sec:prob-linear}, we formulate the network error
correction problem and review some previous works.  The
refined coding bounds for coherent network error correction
are proved in Section~\ref{sec:bounds}. In
Section~\ref{sec:tightness}, the tightness of the refined
Singleton bound is proved, and the first construction
algorithm is given. In Section~\ref{sec:algorithms}, we
introduce another construction algorithm that can achieve
the refined Singleton bound.  In the last section, we
summarize our work and discuss future work.

\section{Network Error-Correcting Problem}
\label{sec:prob-linear}

\subsection{Problem Formulation}

Let $\ffield$ be a finite field with $q$ elements. Unless otherwise
specified, all the algebraic operations in this paper are over this
field.  A communication network is represented by a directed acyclic
graph (DAG). (For a comprehensive discussion of directed acyclic
graph, please refer to \cite{wikidag} and the references therein.) A
DAG is an ordered pair $\graph=(\nodeset,\edgeset)$ where $\nodeset$
is the set of nodes and $\edgeset$ is the set of edges.  There can be
multiple edges between a pair of nodes, each of which represents a
communication link that can transmit one symbol in the finite field
$\ffield$.  For an edge $e$ from node $a$ to $b$, we call $a$ ($b$)
the tail (head) of the edge, denoted by $\tail(e)$ ($\head(e)$).  Let
$\In(a)=\{e \in \edgeset : \head(e) = a \}$ and $\Out(a)=\{e \in
\edgeset : \tail(e) = a \}$ be the sets of incoming edges and outgoing
edges of node $a$, respectively.

A directed path in $\graph$ is a sequence of edges
$\{e_i\in \edgeset :i=1,2,\cdots,k\}$ such that 
$\head(e_i)=\tail(e_{i+1})$ for $i=1,2,\cdots,k-1$.
Such a directed path is also called a path from $\tail(e_1)$
to $\head(e_k)$.
A directed acyclic graph gives rise to a partial order
$\leq$ on its nodes, where $a\leq b$ when there
exists a directed path from $a$ to $b$ in the
DAG. Similarly, a DAG gives rise to a partial order $\leq$ on
the edges, where $e \leq e'$ when $e=e'$ or  $\head(e)\leq
\tail(e')$. In other word, $e\leq e'$ if there exists a
directed path from $\tail(e)$ to $\head(e')$ that uses both
$e$ and $e'$. We call this partial order on the edges the \emph{
associated partial order} on the edges.  We extend the
associated partial order on the edges to a total order on
the edges such that for all $e$ and $e'$ in $\edgeset$,
either $e\leq e'$ or $e'\leq e$. Such an extension is not
unique, but we fix one in our discussion and write
$\edgeset=\{e_i:i=1,2,\cdots, |\edgeset|\}$.

A \emph{multicast network} is an ordered triple $(\graph,s,\sinkset)$
where $\graph$ is the network, $s \in \nodeset$ is the source node and
$\sinkset\subset \nodeset$ is the set of sink nodes.  The source node
contains the messages that are demanded by all the sink nodes.
Without loss of generality (WLOG), we assume $\In(s)=\emptyset$. Let
$n_s=|\Out(s)|$. The source node $s$ encodes its message into a row
vector $\mathbf{x}=[x_e, e\in \Out(s)] \in \ffield^{n_s}$, called the
\textit{codeword}.  The set of all codewords is the {\em codebook},
denoted by $\mathcal{C}$.  Note that we do not require $\mathcal{C}$
to be a subspace.  The source node $s$ transmits a codeword by
mapping its $n_s$ components onto the edges in $\Out(s)$.  For any
node $v\neq s$ with $\In(v)=\emptyset$, we assume that this node
outputs the zero element of $\ffield$ to all its outgoing edges.

An {\em error vector} $\mathbf{z}$ is an $|\edgeset|$-dimensional row
vector over $\ffield$ with the $i$th component representing the error
on the $i$th edge in $\edgeset$. An {\em error pattern} is a subset of
$\edgeset$. Let $\rho_{\bz}$ be the error pattern corresponding
to the non-zero components of error vector~$\mathbf{z}$.
An error vector $\bz$ is said to \emph{match} an error
pattern $\rho$ if $\rho_{\bz}\subset \rho$. 
The set of all error vectors
that {match} error pattern $\rho$ is denoted by
$\rho^*$. 
Let $\fin_e$ and $\fout_e$ be the input and output of edge
$e$, respectively, and let the error on the edge be $z_e$. 
The relation between $\fout_e$, $\fin_e$ and $z_e$ is given by
\begin{equation}
  \label{eq:39}
  \fout_e=\fin_e + z_e.
\end{equation}

For any set of edges $\rho$, form two row vectors
\begin{equation*}
 \fout_{\rho} =[\fout_e,e\in \rho],
\end{equation*}
and
\begin{equation*}
 \fin_{\rho}  =[\fin_e,e\in \rho].
\end{equation*}
A network code on network
$\mathcal{G}$ is a codebook $\mathcal{C}\subseteq
\mathbb{F}^{n_s}$ and a family of local encoding functions
$\{\bar{\beta}_e:e\in\edgeset\setminus \Out(s)\}$, where
$\bar{\beta}_e:\ffield^{|\In(\tail(e))|} \rightarrow\ffield$, such
that
\begin{equation}\label{eq:38zoa}
 \fin_e=\bar{\beta}_e(\fout_{\In(\tail(e))}).
\end{equation}

Communication over the network with the network code defined above is
in an upstream-to-downstream order: a node applies its local encoding
functions only after it receives the outputs from all its incoming
edges. Since the network is acyclic, this can be achieved in
light of the partial order on the nodes.  With $\fin_{\Out(s)}=\bx$
and an error vector $\bz$, the symbol $\fin_e$, $\forall
e\in\edgeset$, can be determined inductively by (\ref{eq:39}) and
(\ref{eq:38zoa}).  When we want to indicate the dependence of $\fin_e$
and $\fout_e$ on $\bx$ and $\bz$ explicitly, we will write them as
$\fin_e(\bx,\bz)$ and $\fout_e(\bx,\bz)$, respectively.

A network code is \emph{linear} 
if $\bar{\beta}_e$ is a linear function for all $e\in
\edgeset\setminus \Out(s)$, 
i.e.,
\begin{equation*}
  \fin_e=\sum_{e'\in \edgeset} \beta_{e',e}\fout_{e'},
\end{equation*}
where $\beta_{e',e}$ is called the
\emph{local encoding kernel} from edge $e'$ to edge $e$.
The local encoding kernel $\beta_{e',e}$ can be non-zero only if
$e'\in \In(\tail(e))$.
Define the $|\edgeset|\times|\edgeset|$ one-step transformation
matrix $\mathbf{K}=[K_{i,j}]$ in network $\mathcal{G}$ as
$K_{i,j}=\beta_{e_i,e_j}$. 
For an acyclic network, $\mathbf{K}^N=\mathbf{0}$ for some positive integer
$N$ (see \cite{alg} and \cite{yangthesis} for details). Define the transfer matrix of the network by
$\bF=(\mathbf{I}-\mathbf{K})^{-1}$ \cite{alg}.

For a set of edges $\rho$, define a $|\rho| \times |\edgeset|$  matrix 
$\mathbf{A}_{\rho}=[A_{i,j}]$  by
\begin{equation}\label{eq:a}
  A_{i,j}=\left\{\begin{array}{ll}1& \text{if}\ e_j\ \mathrm{is\
        the}\ i\mathrm{th\ edge\ in}\ \rho, \\ 0 & \mathrm{otherwise.}
    \end{array}
  \right.
\end{equation}
By applying the order on $\edgeset$ to $\rho$, the $|\rho|$ nonzero
columns of $\mathbf{A}_{\rho}$ form an identity matrix. 
To simplify notation, we write $\bF_{\rho,\rho'}=\mathbf{A}_{\rho}\bF\mathbf{A}_{\rho'}^{\top}$.
For input $\bx$ and error vector $\bz$, the output of the
edges in $\rho$ is
\begin{align}  
  \fout_{\rho}(\bx,\bz) & = (\bx \mathbf{A}_{\Out(s)}+\bz)\bF
  \mathbf{A}_{\rho}^{\top} \label{eq:ak78a}\\
   & = \bx \bF_{\Out(s),\rho} + \bz\bF\mathbf{A}_{\rho}^{\top} . \label{eq:467md}
\end{align}

Writing $F_v(\bx,\bz)=\fout_{\In(v)}(\bx,\bz)$ for a node $v$, 
the received vector for a sink node $t$ is
\begin{align}
  \fout_t(\bx,\bz)  =  \mathbf{x}\bF_{s,t}+\mathbf{z}\bF_t, \label{eq:4343a}
\end{align}
where $\bF_{s,t}=\bF_{\Out(s),\In(t)}$, and
$\bF_t=\bF \mathbf{A}_{\In(t)}^{\top}$. Here $\bF_{s,t}$ and $\bF_t$
are the transfer
matrices for message transmission and error transmission, respectively.


\subsection{An Extension of Classical Error Correction}

In this paper, we study error correction coding over the channel given
in \eqref{eq:4343a}, in which $\bF_{s,t}$ and $\bF_{t}$ are known by
the source node $s$ and the sink node $t$.  The channel transformation
is determined by the transfer matrices. In classical error correction
given in Fig.\ref{fig:8gaoq}, the transfer matrices are identity
matrices.  Thus, linear network error correction is an extension of classical
error correction with general transfer matrices.  Our work
follows this perspective to extend a number of results in classical error
correction to network error correction.

Different from classical error correction, network error correction
provides a new freedom for coding design---
the local encoding kernels can be chosen under the constraint of the
network topology. 
One of
our coding algorithm in this paper makes use of this freedom.



\subsection{Existing Results}

In \cite{yang08eu}, Yang \etal\ developed a framework for
characterizing error correction/detection capabilities of linear
network codes for coherent network error correction. They define
equivalence classes of weight measures on error vectors. Weight
measures in the same equivalence class have the same characterizations
of error correction/detection capabilities and induce the same minimum
weight decoder. Four weight measures, namely the Hamming weight and
the others that have been used in the works \cite{zhang08, adv, weight},
are proved to be in the same equivalence class for linear network
codes.  Henceforth, we only consider the Hamming
weight on error vectors in this paper. For sink node $t$ and
nonnegative integer $c$, define 
\begin{equation}\label{eq:deaofae}
 \Phi_t(c) = \{\bz\bF_t:\bz\in\ffield^{|\edgeset|},\ w_H(\bz)\leq c\},
\end{equation}
where $w_H(\bz)$ is the Hamming weight of a vector $\bz$.

\begin{definition}\label{def:3a9176q4}
Consider a linear network code with codebook $\msgset$. For each sink node $t$, define the \emph{distance measure} 
\begin{align}
  D_t(\bx_1,\bx_2) = \min\{c: (\bx_1-\bx_2)\bF_{s,t}
  \in\Phi_t(c)\} \label{eq:dt} 
\end{align}
and define the \emph{minimum distance} of the codebook
\begin{equation}\label{eq:def8af}
 d_{\min,t} = \min_{\bx_1\neq\bx_2\in\msgset} D_t(\bx_1,\bx_2).
\end{equation}
\end{definition}

We know that $D_t$ is a translation-invariant metric
\cite{yang08eu}.
Consider $\bx_1, \bx_2\in\msgset$.
For any $\bz$ with $\bz\bF_t=(\bx_1-\bx_2)\bF_{s,t}$, we have
$(\bx_1-\bx_2)\bF_{s,t}\in \Phi_t(w_H(\bz))$. Thus $$D_t(\bx_1,\bx_2)
\leq \min_{\bz:(\bx_1-\bx_2)\bF_{s,t} = \bz\bF_t} w_H(\bz).$$
On the other hand, we see that $(\bx_1-\bx_2)\bF_{s,t}
  \in\Phi_t(D_t(\bx_1,\bx_2))$. So, there exists $\bz\in\ffield^{|\edgeset|}$ with $w_H(\bz)=D_t(\bx_1,\bx_2)$ and $(\bx_1-\bx_2)\bF_{s,t}=\bz\bF_t$. Thus, 
$$D_t(\bx_1,\bx_2)
\geq \min_{\bz:(\bx_1-\bx_2)\bF_{s,t} = \bz\bF_t} w_H(\bz).$$
Therefore, we can equivalently write
\begin{equation}
  D_t(\bx_1,\bx_2) = \min_{\bz:(\bx_1-\bx_2)\bF_{s,t} = \bz\bF_t}
  w_H(\bz). \label{eq:dt2}
\end{equation}

\begin{definition}
  \emph{Minimum Weight Decoder I} at a sink node $t$, denoted by
  $\mwd_t^I$, decodes a received vector $\by$ as follows: First, find
  all the solutions of the equation
\begin{equation}\label{eq:ge:38g}
  F_t(\bx,\bz)= \by
\end{equation}
with $\bx\in\msgset$ and $\bz\in\errset$ as variables.  A pair $(\bx,
\bz)$, consisting of the message part $\bx$ and the error part $\bz$,
is said to be a solution if it satisfies (\ref{eq:ge:38g}), and
$(\bx,\bz)$ is a minimum weight solution if $w_H(\bz)$ achieves the
minimum among all the solutions. 
If all the minimum weight solutions
have the identical message parts, the decoder outputs the common
message part as the decoded message. Otherwise, the decoder outputs a
warning that errors have occurred.
\end{definition}

A code is \emph{$c$-error-correcting} at sink node $t$ if all error
vectors $\bz$ with $w_H(\bz)\leq c$ are correctable by $\mwd_t^I$.

\begin{theorem}[\cite{yang08eu}]\label{the:2}
 A linear network code is $c$-error-correcting at sink node $t$ if and only if $d_{\min,t}\geq 2c+1$.
\end{theorem}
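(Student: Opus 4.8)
The plan is to adapt, almost verbatim, the classical argument that a block code corrects $c$ errors if and only if its minimum distance is at least $2c+1$; the only new ingredient is that the governing metric is $D_t$ rather than the Hamming metric on codewords. Concretely, I would rely on the equivalent form \eqref{eq:dt2} of the distance, which expresses $D_t(\bx_1,\bx_2)$ as the least Hamming weight of an error vector $\bz$ with $(\bx_1-\bx_2)\bF_{s,t}=\bz\bF_t$, together with the ordinary triangle inequality $w_H(\bz-\bz')\le w_H(\bz)+w_H(\bz')$ for the Hamming weight. Thus ``sufficiency'' means $d_{\min,t}\ge 2c+1\Rightarrow c$-error-correcting, and ``necessity'' will be proved in the contrapositive form $d_{\min,t}\le 2c\Rightarrow$ not $c$-error-correcting.

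\textbf{Sufficiency.} Assume $d_{\min,t}\ge 2c+1$. Suppose $\bx_1\in\msgset$ is transmitted and the error vector $\bz_1$ has weight at most $c$, so the sink receives $\by=\bx_1\bF_{s,t}+\bz_1\bF_t$. I would take an arbitrary solution $(\bx_2,\bz_2)$ of $F_t(\bx,\bz)=\by$ with $\bx_2\ne\bx_1$; subtracting the two expansions of $\by$ gives $(\bx_1-\bx_2)\bF_{s,t}=(\bz_2-\bz_1)\bF_t$, so \eqref{eq:dt2} and the triangle inequality yield
\[
2c+1\le d_{\min,t}\le D_t(\bx_1,\bx_2)\le w_H(\bz_2-\bz_1)\le w_H(\bz_1)+w_H(\bz_2)\le c+w_H(\bz_2),
\]
hence $w_H(\bz_2)\ge c+1$. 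Since $(\bx_1,\bz_1)$ is itself a solution of weight at most $c$, the minimum error weight over all solutions is at most $c$, so no solution whose message part differs from $\bx_1$ can be a minimum weight solution; therefore every minimum weight solution has message part $\bx_1$ and $\mwd_t^I$ decodes $\bx_1$ correctly. As $\bx_1$ and $\bz_1$ were arbitrary with $w_H(\bz_1)\le c$, the code is $c$-error-correcting.

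\textbf{Necessity.} Assume $d_{\min,t}=d\le 2c$; I construct an uncorrectable error vector of weight at most $c$. Pick $\bx_1\ne\bx_2\in\msgset$ attaining $D_t(\bx_1,\bx_2)=d$ and, by \eqref{eq:dt2}, an error vector $\bz$ with $w_H(\bz)=d$ and $(\bx_1-\bx_2)\bF_{s,t}=\bz\bF_t$. Split the support $\rho_{\bz}$ into two parts each of size at most $c$ (possible since $d\le 2c$) and let $\bz',\bz''$ be the restrictions of $\bz$ to these parts, so $\bz=\bz'+\bz''$ with $w_H(\bz'),w_H(\bz'')\le c$. Then $\bx_1\bF_{s,t}+(-\bz')\bF_t=\bx_2\bF_{s,t}+\bz''\bF_t$; call this common vector $\by$. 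Thus $(\bx_1,-\bz')$ and $(\bx_2,\bz'')$ are two solutions of $F_t(\bx,\bz)=\by$ with distinct message parts, both of error weight at most $c$. A short case analysis then finishes the proof: if the minimum weight solutions of $\by$ do not all share a single message part, $\mwd_t^I$ issues a warning and hence fails for, say, the transmission $(\bx_2,\bz'')$; if they all share a common message part $\bx^{*}$, then $\bx^{*}$ differs from $\bx_1$ or from $\bx_2$, and decoding $\by$ is incorrect for the corresponding transmission. Either way some error vector of weight at most $c$ is not correctable, so the code is not $c$-error-correcting.

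\textbf{Expected obstacle.} Essentially everything is bookkeeping with \eqref{eq:dt2} and the triangle inequality, so there is no deep difficulty; the one place that needs a little care is the case analysis in the converse, namely verifying that \emph{whatever} tie-breaking behaviour $\mwd_t^I$ exhibits at the ambiguous received vector $\by$, at least one of the two transmissions $(\bx_1,-\bz')$, $(\bx_2,\bz'')$ ends up decoded wrongly or triggers a warning. Once this is spelled out, the equivalence follows exactly as in the classical setting.
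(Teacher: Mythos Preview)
Your argument is correct and is exactly the standard adaptation of the classical $2c+1$ proof to the metric $D_t$ via \eqref{eq:dt2}. Note, however, that the present paper does not actually give a proof of Theorem~\ref{the:2}: it is quoted from \cite{yang08eu} and used as a black box, so there is no ``paper's own proof'' to compare against here. What you wrote is precisely the argument one expects from \cite{yang08eu}, and every step checks out, including the case analysis in the converse (the key point being that $\bx^{*}$ cannot equal both $\bx_1$ and $\bx_2$, and that a warning output already counts as a failure to correct).
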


For two subsets $V_1,V_2\subset \mathbb{F} ^{n_s}$, define
\begin{equation*}
	V_1+V_2=\{\mathbf{v}_1+\mathbf{v}_2:\mathbf{v}_1\in
	V_1,\mathbf{v}_2 \in V_2\}.
\end{equation*}
For $\mathbf{v}\in \mathbb{F} ^{n_s}$ and $V \subset \mathbb{F}
^{n_s}$, we also write $\{\mathbf{v}\}+V$ as $\mathbf{v}+V$.  For
sink node $t$ and nonnegative integer $c$, define the \emph{decoding
  sphere} of a codeword $\mathbf{x}$ as
\begin{align} 
  \Phi_t(\mathbf{x},c) & = \{F_t(\bx,\bz): \bz
  \in\ffield^{|\edgeset|}, w_H(\bz)\leq c\}\nonumber \\
   & = \bx \bF_{s,t} + \Phi_t(c)\label{eq:28dfq}
\end{align}

\begin{definition}
  If $\Phi_t(\bx,c)$ for all $\bx\in \msgset$ are nonempty and
  disjoint, \emph{Minimum Weight Decoder II} at sink node $t$, denoted
  by $\mwd^{II}_t(c)$, decodes a received vector $\by$ as follows: If
  $\by \in \Phi_t(\bx, c)$ for some $\bx\in \msgset$, the decoder
  outputs $\bx$ as the decoded message. If $\by$ is not in any of the
  decoding spheres, the decoder outputs a warning that errors have
  occurred.
\end{definition}

A code is \emph{$c$-error-detecting} at sink node $t$ if
$\mwd_t^{II}(0)$ exists and all error vector $\bz$ with $0<
w_H(\bz)\leq c$ are detectable by $\mwd_t^{II}(0)$.

\begin{theorem}[\cite{yang08eu}]\label{the:4398zk}
  A code is $c$-error-detecting at sink node $t$ if and only if
  $d_{\min,t}\geq c+1$.
\end{theorem}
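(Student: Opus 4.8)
The plan is to translate the statement into the concrete language of error vectors via the equivalent expression \eqref{eq:dt2} for the distance measure, namely that $D_t(\bx_1,\bx_2)$ equals the least Hamming weight of a $\bz\in\ffield^{|\edgeset|}$ with $\bz\bF_t=(\bx_1-\bx_2)\bF_{s,t}$. First I would unpack the hypothesis ``$c$-error-detecting''. By \eqref{eq:28dfq}, $\Phi_t(\bx,0)=\{\bx\bF_{s,t}\}$ for every $\bx$, so the decoder $\mwd_t^{II}(0)$ exists exactly when the vectors $\{\bx\bF_{s,t}:\bx\in\msgset\}$ are pairwise distinct; by \eqref{eq:dt2} and \eqref{eq:def8af} this is equivalent to $d_{\min,t}\ge 1$. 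Granting this, an error vector $\bz$ fails to be detected by $\mwd_t^{II}(0)$ precisely when there exist codewords $\bx\ne\bx'$ with $\bx\bF_{s,t}+\bz\bF_t=\bx'\bF_{s,t}$, i.e.\ $\bz\bF_t=(\bx'-\bx)\bF_{s,t}$.

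For the ``if'' direction, suppose $d_{\min,t}\ge c+1$. In particular $d_{\min,t}\ge 1$, so $\mwd_t^{II}(0)$ exists. Assume for contradiction that some $\bz$ with $0<w_H(\bz)\le c$ is not detected; take a confused pair $\bx\ne\bx'$ with $\bz\bF_t=(\bx'-\bx)\bF_{s,t}$. Then $\bz$ is feasible in the minimization \eqref{eq:dt2} for the pair $(\bx',\bx)$, so $d_{\min,t}\le D_t(\bx',\bx)\le w_H(\bz)\le c$, contradicting $d_{\min,t}\ge c+1$. Hence every $\bz$ with $0<w_H(\bz)\le c$ is detected, and the code is $c$-error-detecting.

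For the ``only if'' direction, suppose the code is $c$-error-detecting. Existence of $\mwd_t^{II}(0)$ gives $d_{\min,t}\ge 1$ as above. If $d_{\min,t}\le c$, pick $\bx\ne\bx'$ in $\msgset$ attaining $D_t(\bx,\bx')=d_{\min,t}=:d$, so $1\le d\le c$. Because the minimum in \eqref{eq:dt2} is attained, there is a $\bz$ with $w_H(\bz)=d$ and $\bz\bF_t=(\bx'-\bx)\bF_{s,t}$; since $d\ge 1$, this $\bz$ is nonzero and satisfies $0<w_H(\bz)\le c$. Transmitting $\bx$ with error $\bz$ then yields $F_t(\bx,\bz)=\bx\bF_{s,t}+\bz\bF_t=\bx'\bF_{s,t}\in\Phi_t(\bx',0)$, so $\mwd_t^{II}(0)$ outputs $\bx'\ne\bx$; thus $\bz$ is undetected, contradicting the hypothesis. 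Therefore $d_{\min,t}\ge c+1$.

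The reasoning is analogous in structure to (but considerably simpler than) the proof of Theorem~\ref{the:2}, and mirrors the classical fact that a code with minimum distance $d$ detects precisely the errors of weight at most $d-1$; the calculations are routine. I expect the only points needing care to be (i) handling the existence clause for $\mwd_t^{II}(0)$ on the same footing as the detection clause, rather than assuming it tacitly, and (ii) checking in the converse that the vector extracted from \eqref{eq:dt2} genuinely has Hamming weight in the range $(0,c]$ — in particular that it is nonzero, which is where $d_{\min,t}\ge 1$ enters — so that it really is covered by the $c$-error-detecting requirement, including the degenerate case $c=0$.
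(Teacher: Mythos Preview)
The paper does not supply its own proof of this theorem; it is quoted verbatim from \cite{yang08eu} as background. Your argument is correct and is exactly the expected one: you unfold the definition of $\mwd_t^{II}(0)$, use \eqref{eq:dt2} to translate between $D_t$ and Hamming weights of confusing error vectors, and handle the existence clause and the $c=0$ boundary case carefully. There is nothing further to compare against here.
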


Furthermore, we can use $\mwd_t^{II}(c)$, $c>0$, for joint error
correction and detection. Erasure correction is error correction with
the potential positions of the errors in the network known by the
decoder.  We can similarly characterize the erasure correction
capability of linear network codes by $d_{\min,t}$. Readers are
referred to \cite{yang08eu} for the details.

There exist coding bounds on network codes that corresponding to the
classical Hamming bound, Singleton bound and
Gilbert-Varshamov bound. We review some of the results in
\cite{nec1, nec2}. 
The \emph{maximum flow} from node $a$ to node $b$
is the maximum number of edge-disjoint paths from $a$ to $b$, denoted
by $\text{maxflow}(a,b)$.
Let
\begin{equation*}
 d_{\min}=\min_{t\in\mathcal{T}}d_{\min,t},
\end{equation*}
and
\begin{equation*}
 n=\min_{t\in\mathcal{T}}\mathrm{maxflow}(s,t).
\end{equation*}
In terms of the notion of minimum distance, the Hamming bound and the
Singleton bound for network codes obtained in
\cite{nec1} can be restated as
\begin{equation}\label{eq:hamming1}
  |\mathcal{C}|\leq \frac{q^{n}}{\sum_{i=0}^{\tau}\binom{n}{i}(q-1)^i},
\end{equation}
where $\tau=\lfloor \frac{d_{\min}-1}{2}  \rfloor$,
and
\begin{equation}\label{eq:singleton1}
  |\mathcal{C}|\leq q^{n-d_{\min} + 1},
\end{equation}
respectively, where $q$ is the field size.
The tightness of (\ref{eq:singleton1}) has been proved in \cite{nec2}.

\section{Refined Coding Bounds}\label{sec:bounds}

In this section, we present refined versions of the coding
bounds in \cite{nec1,nec2} for linear network codes. In
terms of the distance measures developed in \cite{yang08eu},
the proofs of these bounds are as transparent as the their
classical counterparts.

\subsection{Hamming Bound and Singleton Bound}

\begin{theorem}\label{the:hamming} 
  Consider a linear network code with codebook $\mathcal{C}$,
  $\rank(\bF_{s,t})=r_t$ and $d_{\min,t}>0$. Then $|\msgset|$
  satisfies
  \begin{enumerate} 
  \item the refined Hamming bound 
    \begin{equation}\label{eq:hamming} 
    |\mathcal{C}|\leq \min_{t\in\sinkset}\frac{q^{r_t}}{\sum_{i=0}^{\tau_t}\binom{r_t}{i}(q-1)^i}, 
    \end{equation} 
  where $\tau_t=\lfloor \frac{d_{\min,t}-1}{2}  \rfloor$, 
  and
  \item the refined Singleton bound 
  \begin{equation}\label{eq:singleton} 
    |\mathcal{C}|\leq q^{r_t-d_{\min,t}+1}, 
  \end{equation} 
  for all sink nodes $t$.  
  \end{enumerate} 
\end{theorem}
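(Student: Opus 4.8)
The plan is to reduce, for each sink node $t$, the entire coding problem at $t$ to a classical one: I will exhibit a classical block code over $\ffield$ of length exactly $r_t$ whose size equals $|\mathcal{C}|$ and whose minimum Hamming distance is at least $d_{\min,t}$. Once such a code is in hand, the refined Hamming bound (valid for each $t$, hence also after taking the minimum over $t\in\sinkset$) and the refined Singleton bound become nothing more than the classical sphere-packing bound and the classical Singleton bound applied to it; this is the sense in which the proofs can be made as transparent as their classical counterparts.

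To build the induced code I would argue as follows. From the definitions one has $\bF_{s,t}=\mathbf{A}_{\Out(s)}\bF_t$, so, since the columns of $\mathbf{A}_{\Out(s)}$ indexed by $\Out(s)$ form an identity, the subspace $W_t:=\{\bx\bF_{s,t}:\bx\in\ffield^{n_s}\}$ equals the span of the $n_s$ rows of $\bF_t$ indexed by $\Out(s)$, and $\dim W_t=\rank(\bF_{s,t})=r_t$. Pick a set $\rho^*\subseteq\Out(s)$ with $|\rho^*|=r_t$ whose rows of $\bF_t$ form a basis of $W_t$; identifying $\ffield^{r_t}$ with the set of vectors in $\ffield^{|\edgeset|}$ supported on $\rho^*$, the linear map $\eta:\ffield^{r_t}\to W_t$ given by $\eta(\bz)=\bz\bF_t$ is an isomorphism. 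Because $d_{\min,t}>0$, the map $\bx\mapsto\bx\bF_{s,t}$ is injective on $\mathcal{C}$ (two distinct codewords with the same image would give $D_t=0$), so $\mathcal{C}\bF_{s,t}:=\{\bx\bF_{s,t}:\bx\in\mathcal{C}\}\subseteq W_t$ has cardinality $|\mathcal{C}|$; set $\tilde{\mathcal{C}}:=\eta^{-1}(\mathcal{C}\bF_{s,t})\subseteq\ffield^{r_t}$, so $|\tilde{\mathcal{C}}|=|\mathcal{C}|$. For distinct $\tilde{\bx}_1,\tilde{\bx}_2\in\tilde{\mathcal{C}}$, arising from $\bx_1,\bx_2\in\mathcal{C}$, the vector $\bz:=\tilde{\bx}_1-\tilde{\bx}_2$ is an error vector supported on $\rho^*$ with $\bz\bF_t=(\bx_1-\bx_2)\bF_{s,t}$ whose Hamming weight equals the Hamming distance between $\tilde{\bx}_1$ and $\tilde{\bx}_2$ in $\ffield^{r_t}$; by \eqref{eq:dt2} this distance is at least $D_t(\bx_1,\bx_2)\geq d_{\min,t}$. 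Hence $\tilde{\mathcal{C}}$ has the announced parameters.

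It then only remains to quote the classical bounds for $\tilde{\mathcal{C}}\subseteq\ffield^{r_t}$. With $\tau_t=\lfloor(d_{\min,t}-1)/2\rfloor$, the Hamming balls of radius $\tau_t$ about the codewords of $\tilde{\mathcal{C}}$ are pairwise disjoint (because $d_{\min,t}\geq 2\tau_t+1$) and lie inside $\ffield^{r_t}$, giving $|\mathcal{C}|=|\tilde{\mathcal{C}}|\leq q^{r_t}/\sum_{i=0}^{\tau_t}\binom{r_t}{i}(q-1)^i$; and deleting any $d_{\min,t}-1$ coordinates of $\tilde{\mathcal{C}}$ leaves a code on which the projection is still injective, giving $|\mathcal{C}|=|\tilde{\mathcal{C}}|\leq q^{r_t-d_{\min,t}+1}$. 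Both conclusions hold for every $t\in\sinkset$. The step I expect to be the real crux is choosing the reduction correctly: a naive coordinate projection of the received vector at $t$ onto $r_t$ of its components also produces a length-$r_t$ code, but its minimum distance can be strictly smaller than $d_{\min,t}$, since the deleted components may carry error information, and then the bound is lost. Passing instead to source-edge error vectors through the isomorphism $\eta$ onto $W_t$ is precisely what keeps the minimum distance from shrinking, because $D_t$ is defined by the weight of an error vector that $\eta$ realizes with equality.
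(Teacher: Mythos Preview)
Your proof is correct and follows essentially the same approach as the paper's. Both arguments select a subset $\rho^*\subseteq\Out(s)$ of size $r_t$ whose corresponding rows of $\bF_{s,t}$ are independent, define the induced length-$r_t$ code by writing each $\bx\bF_{s,t}$ in these basis rows, and then verify that the Hamming distance of the induced code is at least $d_{\min,t}$ by observing that the difference of two induced codewords, viewed as an error vector supported on $\rho^*$, witnesses $(\bx_1-\bx_2)\bF_{s,t}=\bz\bF_t$ and hence upper-bounds $D_t(\bx_1,\bx_2)$ via \eqref{eq:dt2}; the classical Hamming and Singleton bounds are then quoted for the induced code. Your explicit framing in terms of the isomorphism $\eta:\ffield^{r_t}\to W_t$ and your closing remark on why a naive projection of the \emph{received} vector would not preserve the distance are helpful clarifications, but the underlying construction and the key inequality are the same as in the paper.
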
 
 
\textbf{Remark:} The refined Singleton bound can be rewritten as  
\begin{equation*} 
 d_{\min,t} \leq r_t - \log_q |\mathcal{C}| +1 \leq
 \mathrm{maxflow}(s,t) - \log_q |\mathcal{C}| +1,  
\end{equation*} 
for all sink nodes $t$, which suggests that the sink nodes with larger
maximum flow values can potentially have higher error correction
capabilities. We present network codes that achieve this bound in
Section \ref{sec:tightness} and \ref{sec:algorithms}.

\begin{proof} 
  Fix a sink node $t$.  Since $\rank(\bF_{s,t})=r_t$, we can find
  $r_t$ linearly independent rows of $\bF_{s,t}$.  Let $\rho_t \subset
  \Out(s)$ such that $|\rho_t|=r_t$ and $\bF_{\rho_t,\In(t)}$ is a
  full rank submatrix of $\bF_{s,t}$.  Note that $\rho_t$ can be
  regarded as an error pattern.  Define a mapping
  $\phi_t:\mathcal{C}\rightarrow \ffield^{r_t}$ by
  $\phi_t(\mathbf{x})=\mathbf{x}'$ if
  $\mathbf{x}'\bF_{\rho_t,\In(t)}=\mathbf{x}\bF_{s,t}$.  Since the
  rows of $\bF_{\rho_t,\In(t)}$ form a basis for the row space of
  $\bF_{s,t}$, $\phi_t$ is well defined.  The mapping $\phi_t$ is
  one-to-one because otherwise there exists $\mathbf{x}' \in
  \ffield^{r_t}$ such that
  $\mathbf{x}'\bF_{\rho_t,\In(t)}=\mathbf{x}_1 \bF_{s,t} =
  \mathbf{x}_2 \bF_{s,t}$ for distinct $\mathbf{x}_1, \mathbf{x}_2 \in
  \mathcal{C}$, a contradiction to the assumption that $d_{\min,t}>0$.
  Define
  \begin{align*} 
   \mathcal{C}_t & =\{\phi_t(\bx):\bx\in \mathcal{C}\}.
  \end{align*} 
  Since $\phi_t$ is a   one-to-one mapping,
   $|\mathcal{C}_t|=|\mathcal{C}|$. 
 
  We claim that, as a classical error-correcting code of length
  $r_t$, $\mathcal{C}_t$ has minimum distance
  $d_{\min}(\mathcal{C}_t) \geq d_{\min,t}$. 
  We prove this claim by contradiction. If $d_{\min}(\mathcal{C}_t)
  < d_{\min,t}$, it means there exist
  $\bx_1',\bx_2'\in\mathcal{C}_t$ such that
  $w_H(\bx_1'-\bx_2')<d_{\min,t}$. 
  Let  $\bx_1 =\phi_t^{-1}(\bx_1')$ and $\bx_2 =
  \phi_t^{-1}(\bx_2')$. We know that $\bx_1,\bx_2\in\mathcal{C}$, and
  \begin{align*}
    (\bx_1-\bx_2)\bF_{s,t} = (\bx_1'-\bx_2')\bF_{\rho_t,\In(t)} =
     \bz'\bF_{t},
  \end{align*}
  where $\bz'=(\bx_1'-\bx_2')\mathbf{A}_{\rho_t}$.
  Thus, 
  \begin{align*}
    D_t(\bx_1,\bx_2)  \leq w_H(\bz') = w_H(\bx_1'-\bx_2') < d_{\min,t},
  \end{align*}
  where the first inequality follows from \eqref{eq:dt2}.  So we have
  a contradiction to $d_{\min,t}\leq D_t(\bx_1,\bx_2)$ and hence
  $d_{\min}(\mathcal{C}_t)\geq d_{\min,t}$ as claimed.  Applying the
  Hamming bound and the Singleton bound for classical error-correcting
  codes to $\mathcal{C}_t$, we have
  \begin{equation*} 
    |\mathcal{C}_t| \leq \frac{q^{r_t}}{\sum_{i=0}^{\tau_t'}\binom{r_t}{i}(q-1)^i}  \leq \frac{q^{r_t}}{\sum_{i=0}^{\tau_t}\binom{r_t}{i}(q-1)^i}, 
  \end{equation*} 
  where $\tau_t'=\lfloor \frac{d_{\min}(\mathcal{C}_t)-1}{2}  \rfloor\geq \tau_t$,
  and 
  \begin{equation*} 
    |\mathcal{C}_t|\leq q^{r_t-d_{\min}(\mathcal{C}_t)+1} \leq q^{r_t-d_{\min,t}+1}. 
  \end{equation*} 
  The proof is completed by noting that
  $|\mathcal{C}|=|\mathcal{C}_t|$.  
\end{proof} 
\textbf{Remark:} Let $f$ be an upper bound on the size of a classical
block code in terms of its minimum distance such that $f$ is
monotonically decreasing. Examples of $f$ are the Hamming bound and
the Singleton bound. Applying this bound to $\mathcal{C}_t$, we have
\begin{equation*}
  |\mathcal{C}_t| \leq f(d_{\min}(C_t)).
\end{equation*}
Since $f$ is monotonically decreasing, together with
$d_{\min}(\mathcal{C}_t) \geq d_{\min,t}$ as shown in the above proof, we have
\begin{equation}
  \label{eq:b2}
  |\mathcal{C}| = |\mathcal{C}_t| \leq f(d_{\min}(C_t)) \leq f(d_{\min,t}).
\end{equation}
In other words, the bounds in \eqref{eq:b2} is simply the upper bound
$f$ applied to $\mathcal{C}$ as if $\mathcal{C}$ is a classical block
code with minimum distance $d_{\min,t}$.

 
\begin{lemma}\label{lemma:194kd} 
\begin{equation*} 
  \frac{q^{m}}{\sum_{i=0}^{\tau}\binom{m}{i}(q-1)^i} < 
  \frac{q^{m+1}}{\sum_{i=0}^{\tau}\binom{m+1}{i}(q-1)^i} 
\end{equation*} 
for $\tau\leq m / 2$. 
\end{lemma}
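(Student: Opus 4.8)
The plan is to reduce the inequality to a comparison of the two denominators after clearing the common factor $q^m$, i.e. to show
\[
  q \sum_{i=0}^{\tau}\binom{m}{i}(q-1)^i > \sum_{i=0}^{\tau}\binom{m+1}{i}(q-1)^i .
\]
First I would rewrite the left-hand side as $\sum_{i=0}^{\tau}\binom{m}{i}(q-1)^i + (q-1)\sum_{i=0}^{\tau}\binom{m}{i}(q-1)^i$ and use Pascal's rule $\binom{m+1}{i}=\binom{m}{i}+\binom{m}{i-1}$ to expand the right-hand side as $\sum_{i=0}^{\tau}\binom{m}{i}(q-1)^i + \sum_{i=0}^{\tau}\binom{m}{i-1}(q-1)^i$. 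After cancelling the common term $\sum_{i=0}^{\tau}\binom{m}{i}(q-1)^i$ from both sides, the claim becomes
\[
  (q-1)\sum_{i=0}^{\tau}\binom{m}{i}(q-1)^i \;>\; \sum_{i=0}^{\tau}\binom{m}{i-1}(q-1)^i .
\]

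Next I would shift the index on the right-hand side: writing $j=i-1$, the right-hand side equals $\sum_{j=0}^{\tau-1}\binom{m}{j}(q-1)^{j+1} = (q-1)\sum_{j=0}^{\tau-1}\binom{m}{j}(q-1)^{j}$. So the desired inequality is
\[
  (q-1)\sum_{i=0}^{\tau}\binom{m}{i}(q-1)^i \;>\; (q-1)\sum_{i=0}^{\tau-1}\binom{m}{i}(q-1)^i ,
\]
which, after dividing by $q-1>0$ (note $q\geq 2$ so this is legitimate), reduces to $\binom{m}{\tau}(q-1)^{\tau} > 0$. This is strictly positive because $\tau\leq m/2 \leq m$ guarantees $\binom{m}{\tau}\geq 1$, and $(q-1)^{\tau}\geq 1$. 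Strictness of the original inequality then follows.

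The only genuinely delicate point is making sure the index manipulations are valid at the boundary: when $\tau=0$ the right-hand sum $\sum_{i=0}^{\tau}\binom{m}{i-1}(q-1)^i$ is the single term $\binom{m}{-1}(q-1)^0 = 0$, so the reduced inequality is $(q-1)\cdot 1 > 0$ trivially (and indeed the hypothesis $\tau \le m/2$ plays no role here beyond ensuring $\binom{m}{\tau}$ is a well-defined nonzero binomial coefficient). I would state the convention $\binom{m}{-1}=0$ explicitly to keep the telescoping clean. No step requires anything beyond Pascal's rule and positivity of each term, so I do not anticipate a real obstacle; the hypothesis $\tau \leq m/2$ is used merely to guarantee $0 \le \tau \le m$ so that $\binom{m}{\tau} \ge 1$.
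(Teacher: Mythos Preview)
Your argument is correct. It differs from the paper's proof, which instead uses the identity $\binom{m}{i}=\tfrac{m-i+1}{m+1}\binom{m+1}{i}$ to rewrite
\[
  \frac{q^{m}}{\sum_{i=0}^{\tau}\binom{m}{i}(q-1)^i}
  = \frac{q^{m+1}}{\sum_{i=0}^{\tau}\tfrac{q(m-i+1)}{m+1}\binom{m+1}{i}(q-1)^i}
\]
and then observes that each coefficient $\tfrac{q(m-i+1)}{m+1}$ exceeds $1$ whenever $q\ge 2$ and $i\le \tau\le m/2$, so the denominator dominates $\sum_{i=0}^{\tau}\binom{m+1}{i}(q-1)^i$ termwise. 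Your route via Pascal's rule and a telescoping cancellation is arguably cleaner: it avoids the termwise comparison and, as you noticed, really only needs $0\le\tau\le m$ to make $\binom{m}{\tau}\ge 1$, so it proves the inequality under a weaker hypothesis than the paper states. The paper's approach, on the other hand, makes transparent exactly where the bound $\tau\le m/2$ enters (each term individually), which is slightly more informative if one later wants to sharpen the inequality quantitatively.
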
 
\begin{proof} 
 This inequality can be established by considering 
\begin{eqnarray} 
  \frac{q^{m}}{\sum_{i=0}^{\tau}\binom{m}{i}(q-1)^i} & = & 
  \frac{q^{m+1}}{\sum_{i=0}^{\tau}\frac{q(m-i+1)}{m+1}\binom{m+1}{i}(q-1)^i} \nonumber
  \\ &<& \frac{q^{m+1}}{\sum_{i=0}^{\tau}\binom{m+1}{i}(q-1)^i} 
  \label{eq:pf3}, 
\end{eqnarray} 
where (\ref{eq:pf3}) holds because $\frac{q(m-i+1)}{m+1}> 1$ given that $q\geq 
2$ and $i \le \tau \le m/2$. 
\end{proof} 
 

The refined Hamming bound and the refined Singleton bound, as we will
show, imply the bounds shown in (\ref{eq:hamming1}) and
(\ref{eq:singleton1}) but not vice versa.  The refined Hamming bound
implies
  \begin{align} 
    |\mathcal{C}| & \leq
    \frac{q^{r_t}}{\sum_{i=0}^{\tau_t}\binom{r_t}{i}(q-1)^i} \nonumber\\  
    & \leq  
    \frac{q^{r_t}}{\sum_{i=0}^{\tau}\binom{r_t}{i}(q-1)^i}\label{eq:hammid} \\ 
    & \leq  \frac{q^{\mathrm{maxflow}(s,t)}} 
    {\sum_{i=0}^{\tau}\binom{\mathrm{maxflow}(s,t)}{i}(q-1)^i} 
    \label{eq:hammid2} 
  \end{align} 
  for all sink nodes $t$, where (\ref{eq:hammid}) follows from $\tau =
  \lfloor \frac{d_{\min}-1}{2}\rfloor \leq \lfloor
  \frac{d_{\min,t}-1}{2} \rfloor = \tau_t$, and (\ref{eq:hammid2})
  follows from $r_t\leq\mathrm{maxflow}(s,t)$ and the inequality
  proved in Lemma \ref{lemma:194kd}.  By the same inequality, upon
  minimizing over all sink nodes~$t \in \mathcal{T}$, we obtain
  (\ref{eq:hamming1}).  Toward verifying the condition for applying
  the inequality in Lemma~\ref{lemma:194kd} in the above, we see $r_t
  \geq d_{\min,t}-1$ since $1\leq |\mathcal{C}|\leq
  q^{r_t-d_{\min,t}+1}$. Then
\begin{align*} 
\tau \leq  \tau_t \leq  \frac{ d_{\min,t} - 1}{ 2 }  
 \leq  \frac{ r_t}{2} 
\end{align*} 
for all $t \in \mathcal{T}$.

The refined Singleton bound is maximized when $r_t =
\mathrm{maxflow}(s,t)$ for all $t \in \mathcal{T}$.  This can be
achieved by a \emph{linear broadcast code} whose existence was proved in
\cite{linear}, \cite{NWC_theory}.  To show that the refined 
Singleton bound implies (\ref{eq:singleton1}), 
consider 
\begin{align*} 
  |\mathcal{C}| & \leq  q^{r_t-d_{\min,t}+1}\\ & \leq  
q^{r_t-d_{\min}+1} \\ & \leq  q^{\mathrm{maxflow}(s,t)-d_{\min}+1} 
\end{align*} 
for all sink nodes $t$.  Then (\ref{eq:singleton1}) is obtained upon 
minimizing over all $t \in \mathcal{T}$.

\subsection{Sphere-Packing Bound} 
 
For nonnegative integer $d$, define
  \begin{equation} \label{eq:q8vdelta} 
    \Delta_t(\mathbf{x},d)=\{\mathbf{x}'\in 
    \ffield^{n_s}:D_t(\mathbf{x}',\mathbf{x})\leq d\}. 
  \end{equation} 
Here $D_t(\cdot,\cdot)$ is defined in \eqref{eq:dt}.
Since $D_t$ is a translation invariant metric \cite{yang08eu}, we have 
$\Delta_t(\mathbf{x},d) = \bx+ \Delta_t(\bzero,d)$, which implies 
$|\Delta_t(\mathbf{x},d)|=|\Delta_t(\bzero,d)|$. 
Another fact is that  $\Delta_t(\mathbf{0},d)$ 
is closed under scalar multiplication, i.e., 
\begin{equation*} 
 \alpha\Delta_t(\mathbf{0},d)\triangleq \{\alpha\mathbf{x}:\mathbf{x}\in\Delta_t(\mathbf{0},d)\}=\Delta_t(d), 
\end{equation*} 
where $\alpha\in\mathbb{F}$ and $\alpha\neq 0$. 
 
\begin{lemma}\label{lemma:a81ki} 
 \begin{equation} 
  \label{eq:332sgs} 
  \binom{|\edgeset|}{d}q^{d} >
  |\Delta_t(\mathbf{0},d)|q^{-(n_s-r_t)} =|\Phi_t(d)| \geq 
  \sum_{i=0}^{d}\binom{r_t}{i}(q-1)^i, 
\end{equation} 
where $r_t=\rank(\bF_{s,t})$ and $d\leq r_t$. 
\end{lemma}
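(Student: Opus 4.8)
The plan is to prove the three relations in (\ref{eq:332sgs}) separately, working from the rightmost inequality to the leftmost. For the lower bound $|\Phi_t(d)| \ge \sum_{i=0}^{d}\binom{r_t}{i}(q-1)^i$, I would exhibit that many distinct elements of $\Phi_t(d)$. As in the proof of Theorem~\ref{the:hamming}, since $\bF_{s,t}=\bA_{\Out(s)}\bF_t$ the rows of $\bF_{s,t}$ are precisely the rows of $\bF_t$ indexed by $\Out(s)$, so from $\rank(\bF_{s,t})=r_t$ one may pick $\rho_t\subseteq\Out(s)$ with $|\rho_t|=r_t$ such that the $r_t$ rows of $\bF_t$ indexed by $\rho_t$ are linearly independent. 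For every error vector $\bz$ supported on $\rho_t$, the vector $\bz\bF_t$ is a linear combination of these independent rows, so $\bz\mapsto\bz\bF_t$ is injective on such error vectors; since $d\le r_t$, exactly $\sum_{i=0}^{d}\binom{r_t}{i}(q-1)^i$ of them have $w_H(\bz)\le d$, and these yield that many distinct points of $\Phi_t(d)$. (All such points also lie in the row space of $\bF_{s,t}$, which is the relevant observation for the middle step.)

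For the middle equality $|\Delta_t(\bzero,d)|\,q^{-(n_s-r_t)}=|\Phi_t(d)|$, I would first use that $\Phi_t(c)$ is nondecreasing in $c$ and $D_t$ is a metric \cite{yang08eu}, so that $\bx'\in\Delta_t(\bzero,d)$ iff $D_t(\bx',\bzero)\le d$ iff $\bx'\bF_{s,t}\in\Phi_t(d)$; hence $\Delta_t(\bzero,d)$ is the preimage of $\Phi_t(d)$ under the linear map $\bx'\mapsto\bx'\bF_{s,t}$ on $\ffield^{n_s}$. This map has rank $r_t$, so its kernel has $q^{n_s-r_t}$ elements and each value it attains has exactly $q^{n_s-r_t}$ preimages; counting $\Delta_t(\bzero,d)$ fiber by fiber over $\Phi_t(d)$ then gives the stated equality. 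I expect this to be the main obstacle: it requires that every element of $\Phi_t(d)$ actually be attained by this map, i.e.\ that $\Phi_t(d)$ be contained in the row space of $\bF_{s,t}$, and this is precisely the place where the hypothesis $d\le r_t$ must be brought in (together with the relation between $\bF_t$ and $\bF_{s,t}$); without it only ``$\ge$'' would survive in place of the equality.

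For the upper bound $\binom{|\edgeset|}{d}q^{d}>|\Phi_t(d)|$, observe that every element of $\Phi_t(d)$ is a linear combination of at most $d$ rows of $\bF_t$; choosing a $d$-element set $S\subseteq\edgeset$ that contains the support of $\bz$, together with the $d$ coefficients $z_e$ for $e\in S$, defines a surjection from the $\binom{|\edgeset|}{d}q^{d}$ such pairs onto $\Phi_t(d)$, so $|\Phi_t(d)|\le\binom{|\edgeset|}{d}q^{d}$. The inequality is strict for $d\ge 1$ because this surjection is far from injective: e.g.\ all $\binom{|\edgeset|}{d}$ choices of $S$ paired with the zero coefficient vector map to $\bzero$, so at least $\binom{|\edgeset|}{d}-1$ pairs are wasted (the degenerate case $d=|\edgeset|$, where this count is zero, can instead be handled by noting that $\bz\mapsto\bz\bF_t$ cannot be injective since $\bF_t$ has fewer columns than rows). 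Chaining the three relations gives (\ref{eq:332sgs}).
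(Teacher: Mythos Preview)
Your plans for the two inequalities are essentially the paper's own arguments: the lower bound via a full-row-rank submatrix $\bF_{\rho_t,\In(t)}$ of $\bF_t$ with $\rho_t\subset\Out(s)$, $|\rho_t|=r_t$, and the upper bound via the crude count $|\Phi_t(d)|\le|\{\bz:w_H(\bz)\le d\}|<\binom{|\edgeset|}{d}q^d$.

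For the middle equality you have put your finger on a real problem, but your proposed resolution does not work, and the equality is in fact false in general. The fiber-counting argument yields exactly $|\Delta_t(\bzero,d)|=q^{n_s-r_t}\cdot\bigl|\Phi_t(d)\cap\{\bx\bF_{s,t}:\bx\in\ffield^{n_s}\}\bigr|$, so the stated equality holds only if $\Phi_t(d)$ is contained in the row space of $\bF_{s,t}$. The hypothesis $d\le r_t$ does not force this: since $\bF=(\mathbf{I}-\mathbf{K})^{-1}$ contains the identity, the rows of $\bF_t$ indexed by $\In(t)$ are the standard basis of $\ffield^{|\In(t)|}$, so already $\Phi_t(1)$ contains every unit vector of $\ffield^{|\In(t)|}$; whenever $r_t<|\In(t)|$ these cannot all lie in the $r_t$-dimensional row space of $\bF_{s,t}$. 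Concretely, over $\mathrm{GF}(2)$ take one edge $s\to v$ and two edges $v\to t$ with both local kernels equal to $1$; then $n_s=r_t=d=1$, $|\Phi_t(1)|=4$, but $|\Delta_t(\bzero,1)|\,q^{-(n_s-r_t)}=2$. The paper's proof simply asserts that every $\by\in\Phi_t(d)$ has $q^{n_s-r_t}$ preimages under $\bx\mapsto\bx\bF_{s,t}$ without checking that such $\by$ lies in the image, so it shares the same gap; you are not missing an argument the paper supplies. What the fiber count does give unconditionally is $|\Delta_t(\bzero,d)|\,q^{-(n_s-r_t)}\le|\Phi_t(d)|$, and chaining this with your upper bound still yields the strict inequality $\binom{|\edgeset|}{d}q^d>|\Delta_t(\bzero,d)|\,q^{-(n_s-r_t)}$, which is the direction actually used later in the paper (e.g.\ in~(\ref{eq:427f0sg})).
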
 
\begin{proof} 
  Applying the definition of $D_t$, $\Delta_t(\mathbf{0},d)$ can be
  rewritten as
\begin{align}\label{eq:1kzra} 
   \Delta_t(\mathbf{0},d) & = \{\bx \in \ffield^{n_s}: \bx\bF_{s,t}\in \Phi_t(d)\}, 
\end{align}  
where $\Phi_t$ is defined in (\ref{eq:deaofae}).
Since the rank of $\bF_{s,t}$ is $r_t$, the
null space of $\bF_{s,t}$ defined as 
\begin{equation*}
  \text{Null}(\bF_{s,t}) = \{\bx:\bx \bF_{s,t} = \bzero \}
\end{equation*}
has dimension $n_s-r_t$.  By the theory of linear system of equations,
for each vector $\by$ in $\Phi_t(d)$, we have
$|\text{Null}(\bF_{s,t})| = q^{n_s-r_t}$ vector $\bx$ satisfies $\bx \bF_{s,t}=\by$, and all such $\bx$ are in $\Delta_t(\mathbf{0},d)$. Thus,
\begin{equation} \label{eq:18g940} 
  |\Delta_t(\mathbf{0},d)| = q^{n_s-r_t}|\Phi_t(d)|.
\end{equation} 
By the definition of $\Phi_t$, we have 
\begin{align}
 |\Phi_t(d)| & \leq |\{\bz\in\ffield^{|\edgeset|}:w_H(\bz)\leq d\}| 
 \nonumber \\ & < \binom{|\edgeset|}{d}q^{d}. \label{eq:1auqjaaa} 
\end{align} 
Together with (\ref{eq:18g940}), we obtain the first inequality in
(\ref{eq:332sgs}).

Since $\rank(\bF_{s,t})=r_t$, we can find $r_t$ linearly independent
rows of $\bF_{s,t}$.  Let $\rho_t \subset \Out(s)$ such that $|\rho_t|=r_t$ and
$\bF_{\rho_t,\In(t)}$ is a full row rank submatrix of $\bF_{s,t}$.
Note that 
$\bF_{\rho_t,\In(t)}$ is also a submatrix of $\bF_t$.  Since,
\begin{align*} 
 \Phi_t(d) = \{ \bz\bF_t : w_H(\bz) \leq d \} \supset \{\bz' \bF_{\rho_t,\In(t)}: w_H(\bz')\leq d\},
\end{align*} 
we have
\begin{align*}
  |\Phi_t(d)| & \geq |\{\bz' \bF_{\rho_t,\In(t)}: w_H(\bz')\leq d\}|  \\
  & = |\{\bz'\in\ffield^{r_t}:w_H(\bz')\leq d\}|  \\
  & = \sum_{i=0}^{d}\binom{r_t}{i}(q-1)^i.
\end{align*}
The proof is complete.
\end{proof} 

Using the idea of sphere packing, we have the following stronger 
version of the refined Hamming bound in Theorem~\ref{the:hamming}.  

\begin{theorem}[Sphere-packing bound]\label{the:packing} 
  A linear network code with codebook $\mathcal{C}$ 
  and positive minimum distance $d_{\min,t}$ for all sink nodes $t$ 
  satisfies 
  \begin{align*} 
    |\msgset| \leq \frac{q^{r_t}}{|\Phi_t(\tau_{t})|}, 
  \end{align*} 
  where $\tau_t=\lfloor \frac{d_{\min,t}-1}{2}  \rfloor$.  
\end{theorem}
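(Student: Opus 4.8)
The plan is to run the classical sphere-packing argument, but with the balls measured by the network distance $D_t$ and then translated back into a count of codewords via Lemma~\ref{lemma:a81ki}. Fix a sink node $t$ and recall $\Delta_t(\bx,d)=\{\bx'\in\ffield^{n_s}:D_t(\bx',\bx)\le d\}$ from \eqref{eq:q8vdelta}. Since $D_t$ is a translation-invariant metric \cite{yang08eu}, all the balls $\Delta_t(\bx,\tau_t)$, $\bx\in\msgset$, have the same size $|\Delta_t(\bzero,\tau_t)|$.

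First I would show that the balls $\Delta_t(\bx,\tau_t)$ for distinct codewords $\bx\in\msgset$ are pairwise disjoint. If some $\bx'$ lay in $\Delta_t(\bx_1,\tau_t)\cap\Delta_t(\bx_2,\tau_t)$ with $\bx_1\neq\bx_2\in\msgset$, the triangle inequality for $D_t$ would give $D_t(\bx_1,\bx_2)\le D_t(\bx_1,\bx')+D_t(\bx',\bx_2)\le 2\tau_t\le d_{\min,t}-1<d_{\min,t}$, contradicting the definition \eqref{eq:def8af} of $d_{\min,t}$. Since all these disjoint balls sit inside $\ffield^{n_s}$, summing their sizes yields $|\msgset|\cdot|\Delta_t(\bzero,\tau_t)|\le q^{n_s}$.

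Next I would invoke Lemma~\ref{lemma:a81ki}, which requires $\tau_t\le r_t$. This holds because $1\le|\msgset|\le q^{r_t-d_{\min,t}+1}$ forces $r_t\ge d_{\min,t}-1\ge 2\tau_t\ge\tau_t$ (as already noted in the discussion following Lemma~\ref{lemma:194kd}). The lemma then gives $|\Delta_t(\bzero,\tau_t)|=q^{\,n_s-r_t}\,|\Phi_t(\tau_t)|$. Substituting into the inequality from the previous step,
\begin{equation*}
  |\msgset|\cdot q^{\,n_s-r_t}\,|\Phi_t(\tau_t)|\le q^{n_s},
\end{equation*}
and since $\Phi_t(\tau_t)$ is nonempty (it contains $\bzero$) we may divide, obtaining $|\msgset|\le q^{r_t}/|\Phi_t(\tau_t)|$, as claimed.

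There is no real obstacle here: the argument is the textbook sphere-packing bound with $D_t$ in place of the Hamming metric, and the only two points needing a line of care are the disjointness of the balls (immediate from the triangle inequality and $2\tau_t\le d_{\min,t}-1$) and checking $\tau_t\le r_t$ so that Lemma~\ref{lemma:a81ki} is applicable. The fact that this bound refines Theorem~\ref{the:hamming} then follows from the last inequality of \eqref{eq:332sgs}, namely $|\Phi_t(\tau_t)|\ge\sum_{i=0}^{\tau_t}\binom{r_t}{i}(q-1)^i$.
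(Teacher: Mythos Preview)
Your proposal is correct and follows essentially the same route as the paper's own proof: disjointness of the balls $\Delta_t(\bx,\tau_t)$ via the triangle inequality for $D_t$, then the counting inequality $|\msgset|\,|\Delta_t(\bzero,\tau_t)|\le q^{n_s}$, and finally the equality $|\Delta_t(\bzero,\tau_t)|=q^{n_s-r_t}|\Phi_t(\tau_t)|$ from Lemma~\ref{lemma:a81ki}. Your write-up is actually slightly more careful than the paper's, in that you explicitly verify the hypothesis $\tau_t\le r_t$ of Lemma~\ref{lemma:a81ki} (the paper simply cites the lemma without checking this).
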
 
\begin{proof} 
  For different codewords $\bx_1$ and $\bx_2$, we show that
  $\Delta_t(\mathbf{x}_1,\tau_{t})$ and
  $\Delta_t(\mathbf{x}_2,\tau_{t})$ are disjoint by contradiction. 
  Let 
  \begin{align*} 
    \bx \in \Delta_t(\mathbf{x}_1,\tau_{t}) \cap \Delta_t(\mathbf{x}_2,\tau_{t}).
  \end{align*} 
  By the definition of $\Delta_t$ in (\ref{eq:q8vdelta}), we have  
  $D_t(\bx_1,\bx)\leq \tau_t$ and $D_t(\bx_2,\bx)\leq \tau_t$.
  Applying the triangle inequality of $D_t$, we have
  \begin{align*} 
   D_t(\bx_1,\bx_2) & \leq D_t(\bx_1,\bx) + D_t(\bx_2,\bx) \\ 
    & \leq 2\tau_t \\ 
    & \leq d_{\min,t}-1, 
  \end{align*} 
  which is a contradiction to the definition of $d_{\min,t}$.
  Therefore, $q^{n_s}\geq \sum_{\bx\in\msgset}|\Delta_t(\mathbf{x},\tau_{t})| 
  =|\msgset||\Delta_t(\bzero,\tau_t)|$.  The proof is
  complete by considering the equality in Lemma~\ref{lemma:a81ki}.
\end{proof} 
 
Applying the second inequality in Lemma~\ref{lemma:a81ki},
Theorem~\ref{the:packing} implies the refined Hamming bound
in Theorem~\ref{the:hamming}. Thus Theorem~\ref{the:packing}
gives a potentially tighter upper bound on $|\msgset|$ than
the refined Hamming bound, although the former is less
explicit than the latter.
 
\subsection{Gilbert Bound and Varshamov Bound} 
 
We have the following sphere-covering type bounds for linear network codes. 
 
\begin{theorem}[Gilbert bound]\label{the:gilbert} 
  Given a set of local encoding kernels, let $|\mathcal{C}|_{\max}$ be 
  the maximum possible size of codebooks such that the network 
  code has positive minimum distance $d_{\min,t}$ for each sink node $t$. 
  Then, 
  \begin{equation}\label{eq:gv} 
    |\mathcal{C}|_{\max} \geq \frac{q^{n_s}}{|\Delta(\mathbf{0})|}, 
  \end{equation} 
  where 
  \begin{equation}\label{eq:delta0} 
    \Delta(\mathbf{0})=\cup_{t\in\sinkset}\Delta_t(\mathbf{0},d_{\min,t}-1). 
  \end{equation} 
 
\end{theorem}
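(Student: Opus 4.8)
The plan is to mimic the classical Gilbert (greedy sphere-covering) argument, with the classical Hamming balls replaced by the sets $\Delta_t(\cdot,\cdot)$ of \eqref{eq:q8vdelta}. The first step is to set up the right ``collision'' relation on $\ffield^{n_s}$. Using that each $D_t$ is a translation-invariant metric, so $\Delta_t(\bx,d)=\bx+\Delta_t(\bzero,d)$, one gets $\bx_1+\Delta(\bzero)=\cup_{t\in\sinkset}\Delta_t(\bx_1,d_{\min,t}-1)$, whence
\[
 \bx_2\in\bx_1+\Delta(\bzero)\iff D_t(\bx_1,\bx_2)\le d_{\min,t}-1\ \text{for some}\ t\in\sinkset .
\]
Since $\Delta_t(\bzero,d)$ is closed under scalar multiplication, in particular under multiplication by $-1$, the set $\Delta(\bzero)$ satisfies $-\Delta(\bzero)=\Delta(\bzero)$, so the collision relation above is symmetric in $\bx_1$ and $\bx_2$; this symmetry is what will guarantee that the code produced below meets the distance requirement for every \emph{ordered} pair of distinct codewords.

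Next I would build the codebook greedily: initialize $\msgset=\emptyset$, and as long as $\cup_{\bx'\in\msgset}(\bx'+\Delta(\bzero))\neq\ffield^{n_s}$, pick any $\bx$ outside this union and add it to $\msgset$. The process terminates because $\ffield^{n_s}$ is finite, and it adds at least one codeword since the empty union is not all of $\ffield^{n_s}$. For the resulting $\msgset$ there are two claims to check. First, $\msgset$ is admissible: for distinct $\bx_1,\bx_2\in\msgset$, the one added later, say $\bx_2$, was chosen outside $\bx_1+\Delta(\bzero)$, so by the displayed equivalence $D_t(\bx_1,\bx_2)\ge d_{\min,t}$ for every $t\in\sinkset$; since each $d_{\min,t}\ge 1$, the linear network code with codebook $\msgset$ has minimum distance at least $d_{\min,t}$ (in particular positive) at every sink. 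Second, $\msgset$ is large: at termination every point of $\ffield^{n_s}$ lies in some $\bx'+\Delta(\bzero)$ with $\bx'\in\msgset$, so, since translation preserves cardinality,
\[
 q^{n_s}\le\sum_{\bx'\in\msgset}|\bx'+\Delta(\bzero)|=|\msgset|\,|\Delta(\bzero)| .
\]
Rearranging gives $|\msgset|\ge q^{n_s}/|\Delta(\bzero)|$, and since $|\mathcal{C}|_{\max}\ge|\msgset|$, the bound \eqref{eq:gv} follows.

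I do not expect a genuine obstacle: this is the textbook Gilbert bound transported to the network setting, and the quoted properties of $D_t$ from \cite{yang08eu} (translation invariance, the metric axioms, and scalar-multiplication closure of $\Delta_t(\bzero,\cdot)$) are exactly what is needed. The only points requiring care are bookkeeping ones: keeping straight the direction of the inequality defining $\Delta_t$ (recall $\Delta_t(\bx,d)=\{\bx':D_t(\bx',\bx)\le d\}$) together with the symmetry of $D_t$, so that the greedy relation is symmetric; and the degenerate case in which some $\Delta_t(\bzero,d_{\min,t}-1)$ already equals $\ffield^{n_s}$, where $|\Delta(\bzero)|=q^{n_s}$ and the bound merely asserts $|\mathcal{C}|_{\max}\ge 1$, which holds because a single-codeword codebook trivially has positive minimum distance.
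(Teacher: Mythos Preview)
Your argument is correct and is essentially the same sphere-covering proof the paper gives: the paper starts from a maximum-size codebook and argues that any uncovered point could be adjoined without violating the distance constraints, whereas you explicitly run the greedy construction and then bound $|\mathcal{C}|_{\max}$ by the size of the code produced; both conclude with the identical covering inequality $q^{n_s}\le |\msgset|\,|\Delta(\bzero)|$. One small remark: you justify the symmetry of the collision relation via $-\Delta(\bzero)=\Delta(\bzero)$, but since each $D_t$ is a metric (hence symmetric) you already have $\bx_2\in\bx_1+\Delta(\bzero)\iff\bx_1\in\bx_2+\Delta(\bzero)$ directly, without invoking scalar-multiplication closure.
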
 
\begin{proof} 
 Let $\mathcal{C}$ be a codebook with the maximum possible size, and let
 \begin{align*} 
   \Delta(\mathbf{c})=\cup_{t\in\sinkset}\Delta_t(\mathbf{c},d_{\min,t}-1). 
 \end{align*} 
 For any $\mathbf{x}\in\ffield^{n_s}$, 
 there exists a codeword $\mathbf{c} \in \mathcal{C}$ and a sink node 
 $t$ such that 
 \begin{align*} 
  D_t(\mathbf{x},\mathbf{c})\leq d_{\min,t}-1,
 \end{align*} 
 since otherwise we could add $\mathbf{x}$ to the codebook while keeping 
 the minimum distance.
 By definition, we know
 \begin{align*}
   \Delta(\mathbf{c}) = \cup_{t\in\sinkset} \{\mathbf{x}\in 
    \ffield^{n_s}:D_t(\mathbf{x},\mathbf{c})\leq d_{\min,t}-1\}.
 \end{align*}
 Hence, the whole space $\ffield^{n_s}$ is contained in the 
 union of $\Delta(\mathbf{c})$ over all codewords $\mathbf{c} \in 
 \mathcal{C}$, i.e., 
 \begin{align*} 
  \mathbb{F} ^{n_s} = \cup_{\mathbf{c} \in \mathcal{C}}\Delta(\mathbf{c}). 
 \end{align*} 
 Since $\Delta(\mathbf{c})=\mathbf{c}+\Delta(\mathbf{0})$, we have 
 $|\Delta(\mathbf{c})|=|\Delta(\mathbf{0})|$. 
 So we deduce that 
 $q^{n_s}\leq |\mathcal{C}||\Delta(\mathbf{0})|$. 
\end{proof} 


We say a codebook is \emph{linear} if it is a
vector space. 

\begin{lemma}\label{lemma:is}
  Consider a linear network code with linear
codebook $\mathcal{C}$. The minimum distance $d_{\min,t}\geq
d$ if and only if 
\begin{equation*}
  \mathcal{C}\cap \Delta_t(\bzero, d-1) = \{\bzero\}.
\end{equation*}
\end{lemma}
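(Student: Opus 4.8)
The plan is to prove both directions of the equivalence by unwinding the definitions of $d_{\min,t}$, the distance measure $D_t$, and the ball $\Delta_t(\bzero, d-1)$, using crucially that $\mathcal{C}$ is a vector space so that differences of codewords are again codewords.

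First I would prove the contrapositive of the ``only if'' direction. Suppose $\mathcal{C}\cap \Delta_t(\bzero, d-1) \neq \{\bzero\}$, so there exists a nonzero $\bx \in \mathcal{C}$ with $\bx \in \Delta_t(\bzero, d-1)$, i.e., $D_t(\bx, \bzero) \leq d-1$ by the definition of $\Delta_t$ in \eqref{eq:q8vdelta}. Since $\mathcal{C}$ is linear, $\bzero \in \mathcal{C}$, so $\bx$ and $\bzero$ are two distinct codewords, and hence $d_{\min,t} \leq D_t(\bx, \bzero) \leq d-1 < d$ by the definition \eqref{eq:def8af}. This establishes that $d_{\min,t} \geq d$ forces $\mathcal{C}\cap \Delta_t(\bzero, d-1) = \{\bzero\}$.

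For the ``if'' direction, again I would argue by contrapositive. Suppose $d_{\min,t} < d$, i.e., $d_{\min,t} \leq d-1$. Then there exist distinct $\bx_1, \bx_2 \in \mathcal{C}$ with $D_t(\bx_1, \bx_2) \leq d-1$. Here I use translation invariance of $D_t$ (noted in the text just before Lemma \ref{lemma:a81ki}): $D_t(\bx_1, \bx_2) = D_t(\bx_1 - \bx_2, \bzero)$. Set $\bx = \bx_1 - \bx_2$. Linearity of $\mathcal{C}$ gives $\bx \in \mathcal{C}$, and $\bx \neq \bzero$ since $\bx_1 \neq \bx_2$. From $D_t(\bx, \bzero) \leq d-1$ we get $\bx \in \Delta_t(\bzero, d-1)$, so $\mathcal{C}\cap \Delta_t(\bzero, d-1) \supsetneq \{\bzero\}$, completing the contrapositive.

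This argument is essentially routine; there is no serious obstacle. The only two facts one must be careful to invoke are (i) that $\mathcal{C}$ being a vector space means it contains $\bzero$ and is closed under subtraction, which is exactly what lets us pass between ``pairs of distinct codewords'' and ``nonzero codewords,'' and (ii) the translation invariance of the metric $D_t$, already recorded in the excerpt. Everything else is direct substitution into the definitions of $\Delta_t$, $D_t$, and $d_{\min,t}$.
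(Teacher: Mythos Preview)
Your proof is correct and follows essentially the same approach as the paper: both directions hinge on the linearity of $\mathcal{C}$ (so that $\bzero\in\mathcal{C}$ and differences of codewords are codewords) together with the translation invariance of $D_t$. The only cosmetic difference is that for the ``if'' direction the paper first rewrites $d_{\min,t}=\min_{\bx\in\mathcal{C},\,\bx\neq\bzero}D_t(\bx,\bzero)$ and then uses the inclusion $\Delta_t(\bzero,d-1)\subset\Delta_t(\bzero,d_{\min,t}-1)$, whereas you argue the contrapositive directly; the content is the same.
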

\begin{proof}
If there exists $\bx \in \mathcal{C}\cap \Delta_t(\bzero,
d-1)$ and $\bx \neq \bzero$, then
$D_t(\bzero, \bx)<d$. Since $\bzero\in \mathcal{C}$, we have
$d_{\min,t} <d$. This proves the sufficient condition. 

Now we prove the necessary condition. 
For $\bx_1,\bx_2\in\mathcal{C}$, $\bx_1-\bx_2\in\mathcal{C}$. 
Since
\begin{equation*}
  D_t(\bx_1,\bx_2) = D_t(\bx_1-\bx_2,\bzero),
\end{equation*}
we have
\begin{equation*}
  d_{\min,t}=\min_{\bx\in \mathcal{C},\bx\neq \bzero} D_t(\bx,\bzero).
\end{equation*}
Thus,
\begin{equation*}
  \mathcal{C}\cap \Delta_t(\bzero, d_{\min,t}-1) = \{\bzero\}.
\end{equation*}
The proof is completed noting that
$\Delta_t(\bzero,d_{\min,t}-1)\supset \Delta_t(\bzero,d-1)$.
\end{proof}

\begin{theorem}[Varshamov bound]\label{the:varshamov} 
  Given a set of local encoding kernels, let $\msgdim_{\max}$ be 
  the maximum possible dimension of \emph{linear} codebooks such 
  that the network code has positive minimum distance $d_{\min,t}$ 
  for each sink node $t$. 
  Then, 
  \begin{equation} \label{43niojh} 
	  \msgdim_{\max} \ge n_s - \log_q |\Delta(\mathbf{0})|, 
  \end{equation} 
  where $\Delta(\mathbf{0})$ is defined in (\ref{eq:delta0}). 
\end{theorem}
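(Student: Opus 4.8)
The plan is to mimic the classical greedy (Varshamov-style) construction of a linear code whose parity-check structure avoids a forbidden set, here replaced by the set $\Delta(\mathbf{0})$ which records, for each sink $t$, the vectors that are too close to $\mathbf{0}$ under the metric $D_t$. The key technical fact I would use is Lemma~\ref{lemma:is}: for a \emph{linear} codebook $\mathcal{C}$, having $d_{\min,t}\ge d$ for all $t$ is equivalent to $\mathcal{C}\cap\Delta_t(\mathbf{0},d_{\min,t}-1)=\{\mathbf{0}\}$ for all $t$, i.e. $\mathcal{C}\cap\Delta(\mathbf{0})=\{\mathbf{0}\}$. So the whole question reduces to: how large a subspace of $\ffield^{n_s}$ can we fit whose only intersection with $\Delta(\mathbf{0})$ is the zero vector?

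First I would construct the subspace one basis vector at a time. Start with $\mathcal{C}_0=\{\mathbf{0}\}$, dimension $0$. Suppose we have built a subspace $\mathcal{C}_k$ of dimension $k$ with $\mathcal{C}_k\cap\Delta(\mathbf{0})=\{\mathbf{0}\}$; I want to pick a new vector $\mathbf{v}$ so that $\mathcal{C}_{k+1}=\mathcal{C}_k+\ffield\mathbf{v}$ still meets $\Delta(\mathbf{0})$ only at $\mathbf{0}$. A candidate $\mathbf{v}\notin\mathcal{C}_k$ fails exactly when some nonzero element of $\mathcal{C}_{k+1}$, which has the form $\mathbf{u}+\alpha\mathbf{v}$ with $\mathbf{u}\in\mathcal{C}_k$ and $\alpha\in\ffield$, lands in $\Delta(\mathbf{0})$. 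If $\alpha=0$ this cannot happen by the induction hypothesis (and $\mathbf{u}\ne\mathbf{0}$). If $\alpha\ne 0$, then $\mathbf{u}+\alpha\mathbf{v}\in\Delta(\mathbf{0})$ is equivalent, using that $\Delta(\mathbf{0})$ (being a union of the $\Delta_t(\mathbf{0},\cdot)$, each of which is closed under nonzero scalar multiplication as noted before Lemma~\ref{lemma:a81ki}) is itself closed under nonzero scalar multiplication, to $\alpha^{-1}\mathbf{u}+\mathbf{v}\in\Delta(\mathbf{0})$, i.e. $\mathbf{v}\in -\alpha^{-1}\mathbf{u}+\Delta(\mathbf{0})$. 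Ranging over all $\mathbf{u}\in\mathcal{C}_k$ and all $\alpha\ne 0$, the set of bad $\mathbf{v}$ is contained in $\bigcup_{\mathbf{w}\in\mathcal{C}_k}(\mathbf{w}+\Delta(\mathbf{0}))$, whose size is at most $|\mathcal{C}_k|\cdot|\Delta(\mathbf{0})| = q^k|\Delta(\mathbf{0})|$. Hence as long as $q^k|\Delta(\mathbf{0})| < q^{n_s}$, there is a vector $\mathbf{v}$ avoiding all bad sets, and we can extend to dimension $k+1$.

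Therefore the construction proceeds as long as $k < n_s - \log_q|\Delta(\mathbf{0})|$, so we obtain a linear codebook of dimension at least $\lceil n_s - \log_q|\Delta(\mathbf{0})|\rceil$, and in particular $\msgdim_{\max}\ge n_s - \log_q|\Delta(\mathbf{0})|$, which is \eqref{43niojh}. Two small points I would be careful about. One is that the resulting $\mathcal{C}$ must have $d_{\min,t}>0$ for every $t$, which holds because $\mathcal{C}\cap\Delta_t(\mathbf{0},0)=\{\mathbf{0}\}$ is implied by $\mathcal{C}\cap\Delta(\mathbf{0})=\{\mathbf{0}\}$ together with $\Delta_t(\mathbf{0},0)\subseteq\Delta_t(\mathbf{0},d_{\min,t}-1)\subseteq\Delta(\mathbf{0})$ — one should check the edge case where some prescribed $d_{\min,t}=1$, in which case $\Delta_t(\mathbf{0},0)=\{\mathbf{0}\}$ trivially. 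The other, and the only genuinely delicate step, is the reduction of "$\mathbf{u}+\alpha\mathbf{v}\in\Delta(\mathbf{0})$ for some $\alpha\ne0$" to a union over $\mathbf{w}\in\mathcal{C}_k$ of translates of $\Delta(\mathbf{0})$ of total size $q^k|\Delta(\mathbf{0})|$: here I am implicitly using that $\Delta(\mathbf{0})$ is closed under multiplication by every nonzero scalar, which follows from the same property of each $\Delta_t(\mathbf{0},d_{\min,t}-1)$ stated in the text, and that $\{-\alpha^{-1}\mathbf{u}:\mathbf{u}\in\mathcal{C}_k\}=\mathcal{C}_k$ since $\mathcal{C}_k$ is a subspace. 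That bookkeeping is the main obstacle, but it is routine once the scalar-closure of $\Delta(\mathbf{0})$ is in hand.
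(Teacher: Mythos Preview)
Your proof is correct and is essentially the same as the paper's: both hinge on Lemma~\ref{lemma:is} to recast the distance constraints as $\mathcal{C}\cap\Delta(\mathbf{0})=\{\mathbf{0}\}$, use the closure of $\Delta(\mathbf{0})$ under nonzero scalar multiplication, and bound $|\mathcal{C}+\Delta(\mathbf{0})|\le|\mathcal{C}|\,|\Delta(\mathbf{0})|$. The only cosmetic difference is packaging: the paper starts from a maximal $\mathcal{C}$ and shows $\mathcal{C}+\Delta(\mathbf{0})=\ffield^{n_s}$ (otherwise any $\mathbf{g}$ outside would allow extension, contradicting maximality), whereas you run the same extension step greedily from $\{\mathbf{0}\}$; the core ``extend by one vector if $\mathcal{C}_k+\Delta(\mathbf{0})\ne\ffield^{n_s}$'' argument is identical in both.
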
 
\begin{proof} 
  Let $\mathcal{C}$ be a linear codebook with the maximum
  possible dimension. 
  By Lemma~\ref{lemma:is}, $\mathcal{C}\cap \Delta(\bzero) =\{\bzero\}$.
  We claim that
  \begin{equation} 
     \mathbb{F} ^{n_s}=\Delta(\mathbf{0})+\mathcal{C}. 
     \label{eq:claim} 
  \end{equation} 
  If the claim is true, then
  \begin{align*}
    q^{n_s} = |\Delta(\mathbf{0})+ \mathcal{C}| \leq
    |\Delta(\mathbf{0})||\mathcal{C}| =
    |\Delta(\mathbf{0})|q^{\msgdim_{\max}},
  \end{align*}
  proving (\ref{43niojh}).
	 
  Since $\mathbb{F}
  ^{n_s}\supset\Delta(\mathbf{0})+\mathcal{C}$, so we only
  need to show
  $\mathbb{F} ^{n_s}\subset\Delta(\mathbf{0})+\mathcal{C}$. 
  Assume there exists
  \begin{equation}
    \mathbf{g}\in 
    \mathbb{F} ^{n_s}\setminus(\Delta(\mathbf{0})+\mathcal{C}). 
    \label{eq:asg} 
  \end{equation} 
  Let $\mathcal{C}'=\mathcal{C}+ \langle \mathbf{g}
  \rangle$.  Then $\mathcal{C}'$ is a subspace with
  dimension $\msgdim_{\max}+1$.  If
  $\mathcal{C}'\cap\Delta(\mathbf{0})\neq\{\mathbf{0}\}$,
  then there exists a non-zero vector
  \begin{equation} \label{4nlnl} \mathbf{c}+\alpha
    \mathbf{g}\in\Delta(\mathbf{0}),
  \end{equation} 
  where $\mathbf{c}\in \mathcal{C}$ and $\alpha\in
  \mathbb{F}$. Here, $\alpha \ne 0$, otherwise we have
  $\mathbf{c}=\mathbf{0}$ because
  $\mathcal{C}\cap\Delta(\mathbf{0})=\{\mathbf{0}\}$.  Since
  $\Delta_t(\mathbf{0},d_{\min,t}-1)$ is closed under scalar
  multiplication for all $t \in \sinkset$, see from
  (\ref{eq:delta0}) that the same holds for
  $\Delta(\mathbf{0})$.  Thus from (\ref{4nlnl}),
  \begin{equation*}
    \mathbf{g} \in \Delta(\mathbf{0}) -
    \alpha^{-1}\mathbf{c} \subset
    \Delta(\mathbf{0})+\mathcal{C},
  \end{equation*}
  which is a contradiction to (\ref{eq:asg}).  Therefore,
  $\mathcal{C}'\cap\Delta(\mathbf{0})=\{\mathbf{0}\}$.  By
  Lemma~\ref{lemma:is}, $\mathcal{C}'$ is a codebook such
  that the network code has unicast minimum distance larger
  than or equal to $d_{\min,t}$, which is a contradiction on
  the maximality of $\mathcal{C}$.  The proof is completed.
\end{proof} 






\section{Tightness of the Singleton Bound and Code Construction}
\label{sec:tightness}

For an $(\msgdim, (r_t:t\in \sinkset), (d_t:t\in\sinkset))$
linear network code, we refer to one for which the codebook
$\msgset$ is an $\msgdim$-dimensional subspace of
$\ffield^{n_s}$, the rank of the transfer matrix $\bF_{s,t}$
is $r_t$, and the minimum distance for sink node $t$ is at
least $d_t$, $t \in \mathcal{T}$.
In this section, we propose an algorithm to construct
$(\msgdim, (r_t:t\in \sinkset), (d_t:t\in\sinkset))$ linear
network codes that can achieve the refined Singleton bound.

\subsection{Tightness of the Singleton Bound} 
 
\begin{theorem} \label{the:ach_sin} Given a set of local
  encoding kernels with $r_t=\rank(\bF_{s,t})$ over a finite
  field with size $\fsize$, for every
  \begin{equation} 
  0 < \msgdim \leq \min_{t \in \mathcal{T}} r_t, 
  \label{n4onlkn} 
  \end{equation} 
  there exists a codebook $\mathcal{C}$  with
  $|\mathcal{C}|=q^\msgdim$ such that  
  \begin{equation}  
  d_{\min,t}= r_t-\msgdim+1  
  \label{u8bldo} 
  \end{equation} 
   for all sink nodes $t$, provided that $q$ is sufficiently large. 
\end{theorem}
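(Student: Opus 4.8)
The plan is to take $\msgset$ to be a \emph{linear} codebook, i.e., an $\msgdim$-dimensional subspace of $\ffield^{n_s}$, so that $|\msgset|=q^{\msgdim}$ automatically. Since the refined Singleton bound (Theorem~\ref{the:hamming}) already gives $d_{\min,t}\le r_t-\msgdim+1$ for every sink $t$ whenever $d_{\min,t}>0$, it suffices to produce such an $\msgset$ with $d_{\min,t}\ge r_t-\msgdim+1$ for all $t$. By Lemma~\ref{lemma:is}, this is equivalent to
\begin{equation*}
  \msgset\cap\Delta_t(\bzero,\,r_t-\msgdim)=\{\bzero\}\qquad\text{for every } t\in\sinkset,
\end{equation*}
which is meaningful because $r_t-\msgdim\ge 0$ by \eqref{n4onlkn}; note also that $\Delta_t(\bzero,0)=\text{Null}(\bF_{s,t})\subseteq\Delta_t(\bzero,r_t-\msgdim)$, so this condition in particular forces $\bF_{s,t}$ to be injective on $\msgset$ (and hence $d_{\min,t}>0$).

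Next I would unpack the set $\Delta_t(\bzero,r_t-\msgdim)$. By \eqref{eq:1kzra} it equals $\{\bx:\bx\bF_{s,t}\in\Phi_t(r_t-\msgdim)\}$, and $\Phi_t(c)=\bigcup_{\rho\subseteq\edgeset,\,|\rho|=c}\{\bz\bF_t:\bz\in\rho^*\}$, where each $\{\bz\bF_t:\bz\in\rho^*\}$ is the row space of $\bF_{\rho,\In(t)}$ --- because $\bz\in\rho^*$ means $\bz=\bz'\mathbf{A}_\rho$ for some $\bz'\in\ffield^{|\rho|}$, and then $\bz\bF_t=\bz'\bF_{\rho,\In(t)}$ --- a space of dimension at most $|\rho|=c$. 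Hence $\Delta_t(\bzero,r_t-\msgdim)$ is the union, over the $\binom{|\edgeset|}{r_t-\msgdim}$ error patterns $\rho$ of that size, of the subspaces $\bF_{s,t}^{-1}(\text{row space of }\bF_{\rho,\In(t)})$. Since $\text{Null}(\bF_{s,t})$ has dimension $n_s-r_t$, each of these preimages has dimension at most $(r_t-\msgdim)+(n_s-r_t)=n_s-\msgdim$. Therefore $\mathcal{B}:=\bigcup_{t\in\sinkset}\Delta_t(\bzero,r_t-\msgdim)$ is a union of at most $M:=\sum_{t\in\sinkset}\binom{|\edgeset|}{r_t-\msgdim}$ subspaces of $\ffield^{n_s}$, each of dimension $\le n_s-\msgdim$, and $M$ is bounded in terms of the network alone, in particular independent of $q$.

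The construction is then a greedy selection over a large field. I would pick basis vectors $\mathbf{g}_1,\dots,\mathbf{g}_\msgdim$ of $\msgset$ one at a time, maintaining the invariant that $\langle\mathbf{g}_1,\dots,\mathbf{g}_i\rangle$ meets each subspace $B$ occurring in $\mathcal{B}$ only at $\bzero$ (and that the $\mathbf{g}$'s stay linearly independent). Given $\mathbf{g}_1,\dots,\mathbf{g}_i$ with $i<\msgdim$, a candidate $\mathbf{g}_{i+1}$ violates the invariant only if it lies in $B+\langle\mathbf{g}_1,\dots,\mathbf{g}_i\rangle$ for some such $B$, or in $\langle\mathbf{g}_1,\dots,\mathbf{g}_i\rangle$ itself; each of these is a subspace of dimension at most $(n_s-\msgdim)+i\le n_s-1$, so the forbidden set has at most $(M+1)q^{n_s-1}<q^{n_s}$ elements whenever $q>M+1$, leaving a valid choice of $\mathbf{g}_{i+1}$. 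After $\msgdim$ steps we obtain a subspace $\msgset$ of dimension $\msgdim$, so $|\msgset|=q^{\msgdim}$, with $\msgset\cap\Delta_t(\bzero,r_t-\msgdim)=\{\bzero\}$ for all $t$. By Lemma~\ref{lemma:is}, $d_{\min,t}\ge r_t-\msgdim+1>0$, and the refined Singleton bound then pins $d_{\min,t}$ down to $r_t-\msgdim+1$, which is \eqref{u8bldo}.

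The only nonroutine step is the dimension bookkeeping in the second paragraph: one must notice that pulling the ``union of $\le c$-dimensional subspaces'' shape of $\Phi_t(r_t-\msgdim)$ back through $\bF_{s,t}$ produces subspaces of dimension at most $n_s-\msgdim$. This is precisely where the hypothesis $\msgdim\le\min_t r_t$ enters, and it is exactly what keeps every forbidden subspace in the final counting step proper, so that a sufficiently large field guarantees success. The reduction via Lemma~\ref{lemma:is}, the greedy selection, and the matching upper bound from Theorem~\ref{the:hamming} are all routine. (One could instead phrase the last step probabilistically, or as a genericity statement on the Grassmannian of $\msgdim$-dimensional subspaces; the explicit union bound above is the most self-contained.)
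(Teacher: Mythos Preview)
Your proposal is correct and follows essentially the same approach as the paper: greedily select basis vectors $\mathbf{g}_1,\dots,\mathbf{g}_\msgdim$ so that $\mathbf{g}_i\notin\Delta_t(\mathbf{0},r_t-\msgdim)+\langle\mathbf{g}_1,\dots,\mathbf{g}_{i-1}\rangle$ for every $t$, verify (as you note is routine, and as the paper proves by a short induction using closure of $\Delta_t(\mathbf{0},\cdot)$ under scalar multiplication) that this forces $\msgset\cap\Delta_t(\mathbf{0},r_t-\msgdim)=\{\mathbf{0}\}$, invoke Lemma~\ref{lemma:is}, and close with the refined Singleton bound. The only cosmetic difference is that you make the subspace decomposition of $\Delta_t(\mathbf{0},r_t-\msgdim)$ explicit via error patterns to bound the forbidden set, whereas the paper bounds $|\Delta_t(\mathbf{0},r_t-\msgdim)|$ directly through Lemma~\ref{lemma:a81ki}; both routes yield the same field-size threshold $q>\sum_{t\in\sinkset}\binom{|\edgeset|}{r_t-\msgdim}$ up to an immaterial $+1$.
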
 
\begin{proof} 
  Fix an $\msgdim$ which satisfies (\ref{n4onlkn}).  We will
  construct an $\msgdim$-dimensional linear codebook which
  together with the given set of local encoding kernels
  constitutes a linear network code that satisfies
  (\ref{u8bldo}) for all $t$.  Note that (\ref{n4onlkn}) and
  (\ref{u8bldo}) imply
  \begin{equation*}
    d_{\min,t} \geq 1. 
  \end{equation*}

  We construct the codebook $\mathcal{C}$ by finding a
  basis. Let 
  $\mathbf{g}_1,\cdots,\mathbf{g}_\msgdim\in\ffield^{n_s}$ be a 
  sequence of vectors 
  obtained as follows. For each $i$, $1\leq i\leq \msgdim$, choose 
  $\mathbf{g}_i$ such   that 
  \begin{equation} 
	  \mathbf{g}_i\notin \Delta_t(\mathbf{0},r_t- 
          \msgdim)+ \langle \mathbf{g}_1,\cdots,\mathbf{g}_{i-1} \rangle 
	  \label{eq:gti} 
  \end{equation} 
  for each  sink node $t$. 
  As we will show, this implies   
  \begin{equation} 
	  \Delta_t(\mathbf{0},r_t-\msgdim)\cap 
          \langle \mathbf{g}_1,\cdots,\mathbf{g}_{i} \rangle 
	  =\{\mathbf{0}\}	 
	  \label{eq:gtg} 
  \end{equation} 
  for each sink node $t$.  If such
  $\mathbf{g}_1,\cdots,\mathbf{g}_\msgdim$ exist, then we claim that
  $\mathcal{C}= \langle \mathbf{g}_1,\cdots,\mathbf{g}_\msgdim
  \rangle$ is a codebook with the desired properties.  To verify this
  claim, first, we see that $\mathbf{g}_1,\cdots,\mathbf{g}_\msgdim$
  are linearly independent since (\ref{eq:gti}) holds for $i=1,\cdots,
  \msgdim$; second, we have $d_{\min,t}\geq r_t-\msgdim+1$ since
  (\ref{eq:gtg}) holds for $i=\msgdim$ (ref Lemma \ref{lemma:is}).
  Note that by (\ref{eq:singleton}), the refined Singleton bound, we
  indeed have $d_{\min,t}= r_t-\msgdim+1$, namely (\ref{u8bldo}) for
  any sink node $t$.

  Now we show that $\mathbf{g}_i$ satisfying (\ref{eq:gti}) exists 
  if the field size $q$ is sufficiently large. Observe that 
  \begin{align} 
    \lefteqn{|\Delta_t(\mathbf{0},r_t-\msgdim)+ 
      \langle \mathbf{g}_1,\cdots,\mathbf{g}_{i-1} \rangle 
      |}\nonumber \\ 
    & \leq |\Delta_t(\mathbf{0},r_t-\msgdim)|q^{i-1}\nonumber \\ 
    & \leq \binom{|\edgeset|}{r_t-\msgdim}q^{r_t-\msgdim}q^{n_s-r_t}q^{i-1} \label{eq:427f0sg}\\ 
    & = \binom{|\edgeset|}{r_t-\msgdim}q^{n_s-\msgdim+i-1}, \nonumber
  \end{align} 
  where (\ref{eq:427f0sg}) follows from Lemma
  \ref{lemma:a81ki}. 
  If
  \begin{equation}\label{eq:fll}
    q^{n_s} > \sum_{t\in\sinkset}\binom{|\edgeset|}{r_t-\msgdim}q^{n_s-\msgdim+i-1},
  \end{equation}
  we have
  \begin{equation*}
    \ffield^{n_s} \setminus \cup_{t} (\Delta_t(\mathbf{0},r_t-\msgdim)+ 
      \langle \mathbf{g}_1,\cdots,\mathbf{g}_{i-1} \rangle)
      \neq \emptyset,
  \end{equation*}
  i.e., there exists a $g_i$ satisfying \eqref{eq:gti}.  Therefore, if
  $q$ satisfies \eqref{eq:fll} for all $i=1,\cdots, \msgdim$, or
  equivalently
  \begin{equation}\label{eq:upperboundqq} 
   q>   \sum_{t\in\sinkset}\binom{|\edgeset|}{r_t-\msgdim}, 
  \end{equation} 
   then there exists a vector that can be chosen as $\mathbf{g}_i$ for 
   $i=1,\cdots,\msgdim$. 
 
   Fix $\mathbf{g}_1,\cdots,\mathbf{g}_i$ that satisfy (\ref{eq:gti}). 
   We now prove by induction that (\ref{eq:gtg}) holds for $\mathbf{g}_1,\cdots,\mathbf{g}_i$. 
   If (\ref{eq:gtg}) does not hold for $i=1$, then there exists a 
   non-zero vector 
   $\alpha \mathbf{g}_1\in\Delta_t(\mathbf{0},r_t-\msgdim)$, where 
   $\alpha\in\mathbb{F}$. 
   Since $\Delta_t(\mathbf{0},r_t-\msgdim)$ is closed under scalar 
   multiplication and $\alpha\neq 0$, 
   we have $\mathbf{g}_1\in\Delta_t(\mathbf{0},r_t-\msgdim)$, a contradiction to 
   (\ref{eq:gti}) for $i=1$. 
   Assume  (\ref{eq:gtg}) holds for $i\leq k-1$. 
   If (\ref{eq:gtg}) does not hold for $i=k$, 
   then there exists a non-zero vector 
   \begin{equation*} 
     \sum_{i=1}^k\alpha_i 
   \mathbf{g}_i\in\Delta_t(\mathbf{0},r_t-\msgdim), 
   \end{equation*} 
   where $\alpha_i\in\mathbb{F} $. 
   If $\alpha_k=0$, 
   \begin{equation*} 
     \sum_{i=1}^{k-1}\alpha_i 
   \mathbf{g}_i\in\Delta_t(\mathbf{0},r_t-\msgdim), 
   \end{equation*} a 
   contradiction to the assumption that (\ref{eq:gtg}) holds for 
   $i=k-1$. Thus $\alpha_k\neq 0$. 
   Again, by $\Delta_t(\mathbf{0},r_t-\msgdim)$ being closed under 
   scalar multiplication, we have 
   \begin{align*} 
	\mathbf{g}_k & \in \Delta_t(\mathbf{0},r_t-\msgdim) 
        -\alpha_k^{-1}\sum_{i=1}^{k-1}\alpha_i \mathbf{g}_i\\ 
	 & \subset \Delta_t(\mathbf{0},r_t-\msgdim)+ 
         \langle \mathbf{g}_1,\cdots,\mathbf{g}_{k-1} \rangle, 
   \end{align*} 
   a contradiction to $\mathbf{g}_k$ satisfying (\ref{eq:gti}). 
   The proof is completed. 
\end{proof}

\subsection{The First Construction Algorithm} 
 
The proof of Theorem \ref{the:ach_sin} gives a construction
algorithm for an $(\msgdim, (r_t:t\in \sinkset),
(d_t:t\in\sinkset))$ linear network code and it also
verifies the correctness of the algorithm when the field
size is sufficiently large.  
This algorithm, called Algorithm 1, makes use of existing algorithms
(e.g., the Jaggi-Sanders algorithm \cite{poly}) to construct the local
encoding kernels.  The pseudo code of Algorithm 1 is shown below.
 
\begin{algorithm}[ht] \label{alg:alg-1} 
 \SetKwInOut{Output}{output} 
 \SetKwInOut{Input}{input} 
 \Input{($\graph$, $s$, $\sinkset$), ($r_t:t\in \sinkset$), $\msgdim$, ($d_t:t\in\sinkset$) with $r_t\leq \mc(s,t)\ \forall t\in\sinkset$ } 
 \Output{local encoding kernels and $\msgset$} 
 \SetLine 
 \Begin{ 
 \label{line:421} Construct a set of local encoding kernels such that $\rank(\bF_{s,t})=r_t$\;  
  \For{$i \leftarrow 1, \msgdim$}{ 
 \label{line:iai}  find $\mathbf{g}_i$ such that $\mathbf{g}_i\notin \cup_t\Delta_t(\mathbf{0},d_t-1) 
          + \langle \mathbf{g}_1,\cdots,\mathbf{g}_{i-1} \rangle$ \;  } 
  \caption{Construct network codes that achieve the refined Singleton bound.} \label{alg:2ka} 
  } 
\end{algorithm} 
 
The analysis of the complexity of the algorithm requires the following lemma implied by Lemma 5 and 8 in \cite{poly}. 
\begin{lemma}\label{lemma:1izea0} 
 Suppose $m\leq q$, the field size, and $\mathcal{B}_k\subset \ffield^n$, $k=1,\cdots, m$, are subspaces with $\dim(\mathcal{B}_k)<n$. A vector $\mathbf{u} \in \ffield^n\setminus \cup_{k=1}^{m} \mathcal{B}_k$ can be found in time $\bigO(n^3m+nm^2)$. 
\end{lemma}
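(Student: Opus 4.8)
The plan is to prove Lemma~\ref{lemma:1izea0} by exhibiting an explicit procedure that finds a vector $\mathbf{u}\in\ffield^n\setminus\cup_{k=1}^m\mathcal{B}_k$ and bounding its running time. The key combinatorial fact, which is essentially Lemma~5 of \cite{poly}, is the following: if $m\le q$ and $\mathcal{B}_1,\dots,\mathcal{B}_m$ are proper subspaces of $\ffield^n$, then for any fixed $n$ vectors $\mathbf{v}_1,\dots,\mathbf{v}_n$ that span $\ffield^n$, there is a point in $\mathrm{span}(\mathbf{v}_1,\dots,\mathbf{v}_n)$ avoiding all the $\mathcal{B}_k$, and moreover one can be found by a greedy coordinate-by-coordinate search along the $\mathbf{v}_j$'s. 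First I would recall why avoidance is possible at all: each $\mathcal{B}_k$, being a proper subspace, misses at least a $(1-1/q)$ fraction of $\ffield^n$, so $|\cup_k\mathcal{B}_k|\le m\cdot q^{n-1}\le q\cdot q^{n-1}=q^n$; strictly, one argues that since the $\mathcal{B}_k$ are proper one can do slightly better and the union cannot cover everything when $m\le q$ (this is exactly the counting in Lemma~5 of \cite{poly}).

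The algorithmic heart is the greedy build-up. Start with any spanning set, e.g.\ the standard basis $\mathbf{e}_1,\dots,\mathbf{e}_n$ of $\ffield^n$. Maintain a partial vector which after step $j$ lies in $\mathrm{span}(\mathbf{e}_1,\dots,\mathbf{e}_j)$ and is such that it can still be completed, within the remaining span directions, to a vector outside every $\mathcal{B}_k$. At step $j$ one updates the current candidate $\mathbf{u}^{(j-1)}$ to $\mathbf{u}^{(j)}=\mathbf{u}^{(j-1)}+\alpha\,\mathbf{e}_j$ by choosing $\alpha\in\ffield$; the point of Lemma~5/Lemma~8 of \cite{poly} is that among the $q$ choices of $\alpha$ at most one is ``bad'' for each $\mathcal{B}_k$ that is not already permanently avoided, so with $m\le q$ choices available a good $\alpha$ always exists, and checking which $\alpha$ are bad for a given $\mathcal{B}_k$ reduces to a linear-algebra computation over $\mathcal{B}_k$. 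Concretely, for each $k$ one precomputes (once) a parity-check description of $\mathcal{B}_k$, i.e.\ a matrix $\mathbf{H}_k$ with $\mathcal{B}_k=\ker\mathbf{H}_k$; this costs $\bigO(n^3)$ per subspace by Gaussian elimination, hence $\bigO(n^3m)$ total. Then membership $\mathbf{u}\in\mathcal{B}_k$ is the test $\mathbf{u}\mathbf{H}_k^{\top}=\bzero$, and at each of the $n$ steps, for each of the $m$ subspaces, determining the unique forbidden value of $\alpha$ (or detecting that none is forbidden) is an $\bigO(n)$ update of the running product $\mathbf{u}^{(j-1)}\mathbf{H}_k^{\top}$ together with $\mathbf{e}_j\mathbf{H}_k^{\top}$; across all steps and subspaces this is $\bigO(n^2 m)$, and then selecting a value of $\alpha$ not forbidden by any of the $m$ subspaces costs $\bigO(m)$ per step, i.e.\ $\bigO(nm)$ — wait, one must be slightly careful: naively scanning the $q$ candidate values against $m$ forbidden values is $\bigO(qm)$ per step, so instead one marks the (at most $m$) forbidden values in a length-$q$ scratch array, giving $\bigO(m)$ per step plus $\bigO(nm^2)$ overall from a cruder bookkeeping; collecting terms yields the stated $\bigO(n^3m+nm^2)$.

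Thus the steps in order are: (i) state and invoke the counting bound that a good $\mathbf{u}$ exists when $m\le q$; (ii) reduce each $\mathcal{B}_k$ to kernel form by Gaussian elimination, $\bigO(n^3m)$; (iii) run the greedy augmentation over the standard basis, maintaining the invariant that the partial vector is still completable, and at each step use the one-bad-value-per-subspace property to pick the next coefficient; (iv) tally the arithmetic to get $\bigO(n^3m+nm^2)$; (v) note correctness follows because the invariant is preserved (this is where Lemma~8 of \cite{poly} is quoted to guarantee the invariant can always be maintained). I expect the main obstacle to be purely expository: the substance is entirely contained in Lemmas~5 and~8 of \cite{poly}, so the ``proof'' is really a matter of (a) correctly translating those lemmas — which are stated there in the network-coding language of extending a transfer matrix one edge at a time — into the clean subspace-avoidance statement used here, and (b) doing the complexity accounting carefully enough to land exactly on $\bigO(n^3m+nm^2)$ rather than a weaker bound. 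Since the paper explicitly says the lemma ``is implied by Lemma~5 and~8 in \cite{poly},'' the cleanest route is to present it essentially as a corollary: set up the notation, observe that the sets to be avoided in Algorithm~1 are unions of (affine translates of) proper subspaces, and cite \cite{poly} for both the existence and the running time, filling in only the short dictionary between the two formulations.
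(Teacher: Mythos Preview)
Your greedy procedure iterates over the $n$ coordinates of $\ffield^n$; the paper (following Lemma~8 of \cite{poly}) iterates over the $m$ subspaces. This is not cosmetic: the coordinate-by-coordinate greedy can fail under the hypothesis $m\le q$. Take $n=2$, $q=m=2$, $\mathcal{B}_1=\langle(1,0)\rangle$, $\mathcal{B}_2=\langle(1,1)\rangle$ over $\mathrm{GF}(2)$; the unique vector outside $\mathcal{B}_1\cup\mathcal{B}_2$ is $(0,1)$. At step~$1$ neither subspace individually rules out $u_1=1$ (each can still be avoided by some $u_2$), so your per-subspace test permits $u_1=1$; but then at step~$2$ the forbidden values are $u_2=0$ (from $\mathcal{B}_1$) and $u_2=1$ (from $\mathcal{B}_2$), exhausting the field. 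Your stated invariant, ``still completable to a vector outside every $\mathcal{B}_k$,'' would indeed reject $u_1=1$, but then the assertion ``at most one $\alpha$ is bad for each $\mathcal{B}_k$'' is simply false: in this example no $\alpha$ is bad for either subspace alone, yet $u_1=1$ is bad jointly. You give no efficient test for joint completability, so the argument has a real gap.

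The paper's construction reverses the roles. For each $k$ it first extracts a single nonzero vector $\mathbf{a}_k$ orthogonal to $\mathcal{B}_k$ (Gaussian elimination, $\bigO(n^3)$ each, $\bigO(n^3m)$ total), and then builds $\mathbf{u}_1,\dots,\mathbf{u}_m$ maintaining $\mathbf{u}_i\mathbf{a}_k^{\tr}\neq 0$ for all $k\le i$. The update is $\mathbf{u}_{i+1}=\alpha\mathbf{u}_i+\mathbf{b}_{i+1}$ with $\mathbf{b}_{i+1}\mathbf{a}_{i+1}^{\tr}\neq 0$, so $\mathbf{u}_{i+1}\mathbf{a}_{i+1}^{\tr}\neq 0$ for every $\alpha$, while for each $j\le i$ exactly one $\alpha$ kills $\mathbf{u}_{i+1}\mathbf{a}_j^{\tr}$ because $\mathbf{u}_i\mathbf{a}_j^{\tr}\neq 0$ by induction. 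That forbids at most $i\le m-1<q$ values, so a good $\alpha$ exists; across the $m$ steps the inner products cost $\bigO(nm^2)$, which is exactly the second term you were unable to pin down. If you want to cite \cite{poly} rather than reprove, note that Lemma~8 there already uses this subspace-by-subspace scheme, not a coordinate sweep.
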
 
\begin{proof} 
 For each $\mathcal{B}_k$ find a vector $\mathbf{a}_k\in \ffield^n$ such that $\mathbf{a}_k\mathbf{b}^\tr =0$, $\forall \mathbf{b}\in \mathcal{B}_k$. This vector $\mathbf{a}_k$ can be obtained by solving the system of linear equations  
 \begin{equation*} 
  \mathbf{B}_k \mathbf{a}_k^\tr = \bzero, 
 \end{equation*} 
 where $\mathbf{B}_k$ is formed by juxtaposing a set of vectors that form a basis of $\mathcal{B}_k$. 
 The complexity of solving this system of linear equations is $\bigO(n^3)$.  
  
 We inductively construct $\mathbf{u}_1, \mathbf{u}_2, \cdots, \mathbf{u}_m$ such that $\mathbf{u}_i\mathbf{a}_k^\tr \neq 0$ for all $1\leq k\leq i\leq m$. If such a construction is feasible, then $\mathbf{u}_m \notin \mathcal{B}_k$, $\forall k\leq m$. Thus, $\mathbf{u}=\mathbf{u}_m\notin \cup_{k=1}^{m} \mathcal{B}_k$ is the desired vector. 
 
 Let $\mathbf{u}_1$ be any vector such that $\mathbf{u}_1\mathbf{a}_1^\tr \neq 0$.  For $1\leq i\leq m-1$, if $\mathbf{u}_i\mathbf{a}_{i+1}^\tr \neq 0$, we set $\mathbf{u}_{i+1}=\mathbf{u}_i$. Otherwise, find $\mathbf{b}_{i+1}$ such that $\mathbf{b}_{i+1}\mathbf{a}_{i+1}^\tr\neq 0$. We choose  
 \begin{equation} \label{eq:b9adf} 
 \alpha \in \ffield\setminus \{-(\mathbf{b}_{i+1}\mathbf{a}_{j}^\tr)/(\mathbf{u}_{i}\mathbf{a}_{j}^\tr): 1\leq j\leq i\}, 
 \end{equation} 
 and define  
 \begin{equation*} 
 \mathbf{u}_{i+1}=\alpha\mathbf{u}_i+\mathbf{b}_{i+1}. 
 \end{equation*} 
 The existence of such an $\alpha$ follows from $q\geq m >i$. 
 
 By construction, we know that 
 \begin{align*} 
  \mathbf{u}_{i+1}\mathbf{a}_{i+1}^\tr & = \alpha\mathbf{u}_i\mathbf{a}_{i+1}^\tr+\mathbf{b}_{i+1}\mathbf{a}_{i+1}^\tr \\ 
   & = \mathbf{b}_{i+1}\mathbf{a}_{i+1}^\tr \\ & \neq 0. 
 \end{align*} 
 If $\mathbf{u}_{i+1}\mathbf{a}_j^\tr =\alpha\mathbf{u}_i\mathbf{a}_{j}^\tr+\mathbf{b}_{i+1}\mathbf{a}_{j}^\tr =  0$ for some $1\leq j\leq i$, we have $\alpha = -(\mathbf{b}_{i+1}\mathbf{a}_{j}^\tr)/(\mathbf{u}_{i}\mathbf{a}_{j}^\tr)$, a contradiction to (\ref{eq:b9adf}). So, $\mathbf{u}_{i+1}\mathbf{a}_j^\tr\neq 0$ for all $j$ such that $1\leq j\leq i+1$.  
  
 Similar to the analysis in \cite[Lemma 8]{poly}, the construction of $\mathbf{u}$ takes time $\bigO(nm^2)$. Therefore, the overall time complexity is $\bigO(n^3m+nm^2)$. 
\end{proof} 
 
We analyze the time complexity of Algorithm 1 for the
representative special case that $r_t=r$ and $d_t=d$ for all
$t\in\sinkset$, where $r\leq \min_{t\in\sinkset}\mc(s,t)$
and $d\leq r-\msgdim+1$.  In the pseudo code, Line 2 can be
realized using the Jaggi-Sanders algorithm with complexity
$\bigO(|\edgeset||\sinkset|n(n+|\sinkset|))$, where
$n=\min_{t\in\sinkset}\mc(s,t)$ \cite{poly}.
Line 3-5 is a loop that runs Line 4 $\omega$
times. Considering $\Delta_t(\bzero, d-1)$ as the union of
$\binom{|\edgeset|}{d-1}$ subspaces  of $\ffield^r$, Line 4
can be realized in time
$\bigO(n_s^3|\sinkset|\binom{|\edgeset|}{d-1}+n_s(|\sinkset|\binom{|\edgeset|}{d-1})^2)$
as proved in Lemma \ref{lemma:1izea0}. Repeating $\msgdim$
times, the complexity of Line 3-5 is
\begin{equation*} 
  \bigO(\msgdim n_s^3|\sinkset|\xi+\msgdim n_s|\sinkset|^2\xi^2), 
\end{equation*} 
where $\xi = \binom{|\edgeset|}{d-1}$. 
The overall complexity is
\begin{equation*}
  \bigO(\msgdim n_s|\sinkset|\xi(n_s^2+|\sinkset|\xi) + |\edgeset||\sinkset|n(n+|\sinkset|))).
\end{equation*}
Comparing the complexities of constructing the local encoding kernels
(Line 2) and finding the codebook (Line 3-5), the latter term in the
above dominates when $d>1$.

To guarantee the existence of the code, we require the field size to
be sufficiently large.  From (\ref{eq:upperboundqq}) in the
proof of Theorem \ref{the:ach_sin}, all finite fields with
size larger than $|\sinkset|\binom{|\edgeset|}{r-\msgdim}$ are
sufficient. It is straightforward to show that this
algorithm can also be realized randomly with high success
probability if the field size is much larger than necessary.

\section{The Second Construction Algorithm}
\label{sec:algorithms}


Algorithm 1 can be regarded as finding a codebook for the given
transfer matrices. In this section, we study network error correction
from a different perspective by showing that we can also shape the
transfer matrices by designing proper local encoding
kernels. Following this idea, we give another algorithm that
constructs an $(\msgdim, (r_t:t\in \sinkset), (d_t:t\in\sinkset))$
linear network code.

\subsection{Outline of Algorithm 2}

We first give an informal description of this algorithm.
The second algorithm, called Algorithm 2, starts with a classical
error-correcting code as the codebook. The main task of the
algorithm is to design a set of local encoding kernels such
that the minimum distances of the network code, roughly speaking,
are the same as the classical error-correcting code.

It is complicated to design all the local encoding kernels
altogether.  Instead, we use an inductive method: we begin
with the simplest network that the source and the sink nodes
are directed connected with parallel edges; we then extend
the network by one edge in each iteration until the network
becomes the one we want.  For each iteration, we only need
to choose the local encoding kernels associated with the new
edge.

We have two major issues to solve in the above method: the first is
how to extend the network; the second is how to choose the local
encoding kernels. In Section \ref{sec:evol-netw-code}, we define a
sequence of networks $\graph^{i}$ for a given network $\graph$. The
first network is the simplest one as we described, the last one is the
network $\graph$, and $\graph^{i+1}$ has one more edge than
$\graph^i$.  In Section \ref{sec:algorithm}, we give an algorithm that
designs the local encoding kernels inductively. Initially, we choose a
classical error-correcting code that satisfies certain minimum
distance constraint. The local encoding kernels of $\graph^{i+1}$ is
determined as follows: Except for the new edge, all the local encoding
kernels in $\graph^{i+1}$ are inherited from $\graph^i$.  The new
local encoding kernels (associated with the new edge) is chosen to
guarantee 1) the preservation of the minimum distance of the network
code, and 2) the existence of the local encoding kernels to be chosen
in the next iteration.  We find a \emph{feasible condition} on the new
local encoding kernels to be chosen such that these criteria are
satisfied.

When $d_t=1$ for all sink nodes $t$, this algorithm degenerates to the
Jaggi-Sanders algorithm for designing linear network codes for the
error-free case.


\subsection{Iterative Formulation of Network Coding}  
\label{sec:evol-netw-code}  

In this and the next subsections, we describe the algorithm formally.
At the beginning, the algorithm finds $r_t$ edge-disjoint paths from
the source node $s$ to each sink node $\sinknode$ using a maximum flow
algorithm (for example, finding the augmenting paths).  We assume that
every edge in the network is on at least one of the
$\sum_{t\in\sinkset}r_t$ paths we have found. Otherwise, we delete the
edges and the nodes that are not on any such path, and consider the
coding problem for the new network.  Note that a network code for the
new network can be extended to the original network without changing
the minimum distances by assigning zero to all the local
encoding kernels associated with the deleted edges.

We consider a special order on the set of edges such that 1)
it is consistent with the partial order on the set of edges;
2) the first $n_s$ edges are in $\Out(s)$.
The order on the paths to a particular sink node is determined by the
first edges on the paths. 

Given a DAG $\graph$, we construct a sequence of graphs
$\graph^i=(\nodeset^i,\edgeset^i), i=0,1,\cdots, |\edgeset|-n_s$ as
follows. First, $\graph^0$ consists of a subgraph of $\mathcal{G}$
containing only the edges in $\Out(s)$ (and the associated nodes) and
all the sink nodes.  Following the order on $\mathcal{E}$, in the
$i$th iteration $\graph^{i-1}$ is expanded into $\graph^i$ by
appending the next edge (and the associated node) in $\mathcal{E}$.
This procedure is repeated until $\graph^i$ eventually becomes
$\mathcal{G}$. Note that $\graph^i$ contains $n_s+i$ edges and
$\graph^{|\edgeset|-n_s} =\graph$.  A sink node $t$ has $r_t$ incoming
edges in $\graph^i$, where the $j$th edge is the most downstream edge
in the truncation in $\graph^i$ of the $j$th edge-disjoint path from
the source node $s$ to sink node $t$ in $\mathcal{G}$.  With a slight
abuse of notation, we denote the set of incoming edges of a sink
node $t$ in $\graph^i$ as $\In(t)$, when $\graph^i$ is implied by the
context.  Fig. \ref{fig:8bafa} illustrates $\graph^0$ and $\graph^1$
when $\graph$ is the butterfly network.
 
 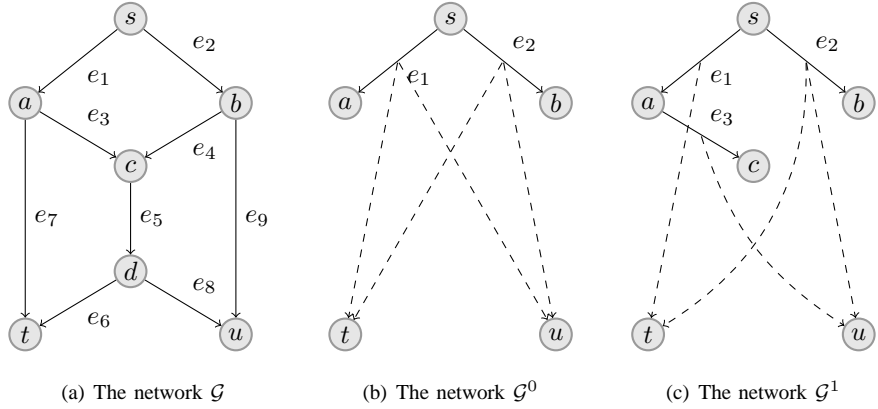
\begin{figure} 
	\centering  

    \subfigure[The network $\graph$]{ 
	\begin{tikzpicture}[scale=1.4] 
	\node[dot] (s) at (0,0) {$s$}; 
	\node[dot] (a) at (-1,-0.8) {$a$} edge [<-] node[auto,swap] {$e_1$} (s); 
	\node[dot] (b) at (1,-0.8) {$b$} edge [<-] node[auto,swap] {$e_2$} (s); 
	\node[dot] (c) at (0,-1.4) {$c$} edge [<-] node[auto,swap] {$e_4$} (b) edge [<-] node[auto,swap] {$e_3$} (a); 
	\node[dot] (d) at (0,-2.4) {$d$} edge [<-] node[auto,swap] {$e_5$} (c); 
	\node[dot] (t) at (-1,-3) {$t$} edge [<-] node[auto,swap] {$e_6$} (d) edge [<-] node[auto,swap] {$e_7$}  (a); 
	\node[dot] (u) at (1,-3) {$u$} edge [<-] node[auto,swap] {$e_8$}  (d) edge [<-] node[auto,swap] {$e_9$}  (b); 
	\end{tikzpicture} 
	} 
    \hspace{5pt} 
    \subfigure[The network $\graph^0$]{ 
        \begin{tikzpicture}[scale=1.4] 
	\node[dot] (s) at (0,0) {$s$}; 
	\node[dot] (a) at (-1,-0.8) {$a$} edge [<-] node[auto,swap] {$e_1$} (s); 
	\node[dot] (b) at (1,-0.8) {$b$} edge [<-] node[auto,swap] {$e_2$} (s); 
	\node[] (c) at (0,-1.4) {}; 
	\node[] (d) at (0,-2.4) {}; 
	\node[dot] (t) at (-1,-3) {$t$}; 
	\node[dot] (u) at (1,-3) {$u$}; 
	 
	\node[inner sep=0pt] (sa) at (-1/2,-0.8/2) {}; 
	\node[inner sep=0pt] (sb) at (1/2,-0.8/2) {}; 
	\draw [dashed, ->] (sa) to (t); 
	\draw [dashed, ->] (sa) to (u); 
	\draw [dashed, ->] (sb) to (t); 
	\draw [dashed, ->] (sb) to (u); 
	\end{tikzpicture} 
	} 
    \hspace{8pt} 
    \subfigure[The network $\graph^1$]{ 
	\begin{tikzpicture}[scale=1.4] 
	\node[dot] (s) at (0,0) {$s$}; 
	\node[dot] (a) at (-1,-0.8) {$a$} edge [<-] node[auto,swap] {$e_1$} (s); 
	\node[dot] (b) at (1,-0.8) {$b$} edge [<-] node[auto,swap] {$e_2$} (s); 
	\node[dot] (c) at (0,-1.4) {$c$} edge [<-] node[auto,swap] {$e_3$} (a); 
	\node[] (d) at (0,-2.4) {}; 
	\node[dot] (t) at (-1,-3) {$t$}; 
	\node[dot] (u) at (1,-3) {$u$}; 
	 
	\node[inner sep=0pt] (sa) at (-1/2,-0.8/2) {}; 
	\node[inner sep=0pt] (sb) at (1/2,-0.8/2) {}; 
	\node[inner sep=0pt] (ac) at (-1/2,-1.1) {}; 
	\draw [dashed, ->] (sa) to (t); 
	\draw [dashed, ->] (ac) to [bend right=20] (u); 
	\draw [dashed, ->] (sb) to [bend left=30] (t); 
	\draw [dashed, ->] (sb) to (u); 
	\end{tikzpicture} 
	} 
	\caption{An example of $\graph^0$ and $\graph^1$. The dashed lines are not new edges but indicate the incoming edges of $t$ and $u$. In $\graph^0$, both $t$ and $u$ have $e_1$ and $e_2$ as their incoming edges. In $\graph^1$, $\In(t)=\{e_1,e_2\}$ and $\In(u)=\{e_3,e_2\}$.}\label{fig:8bafa} 
 \end{figure}

The network $\graph^i$ is a multicast network with the source node $s$ and the set of sinks $\sinkset$. 
The algorithm chooses a proper codebook, and then constructs local encoding kernels starting with $\graph^0$. Except for the new edge, all the local encoding kernels in $\graph^{i+1}$ are inherited from $\graph^i$.  
We define ${\mathbf K}^i$, $\bF^i$, $\fout_{\rho}^i$, $\bz^i$ and $\bA_\rho^i$ for $\graph^i$ in view of ${\mathbf K}$, $\bF$, $\fout_{\rho}$, $\bz$ and $\bA_\rho$ defined for $\graph$  in Section~\ref{sec:prob-linear}, respectively. 
Writing $\fout_t^i= \fout_{\In(t)}^i$, 
we have  
\begin{equation} \label{eq:24kafa} 
\fout_t^i(\bx,\bz^i)=(\bx \bA^i_{\Out(s)}+\bz^i)  
\bF^i (\bA^i_{\In(t)})^\tr, 
\end{equation}  
in view of (\ref{eq:ak78a}). 
Further, we can define the minimum distance $d_{\min,t}^i$ corresponding to the sink node $t$ at the $i$th iteration as in~(\ref{eq:def8af}).

Consider a matrix $\mathbf M$. Let $(\mathbf M)_\idxset$ be the
submatrix of $\mathbf M$ containing the columns with indices in
$\idxset$, and $\mathbf M^{\backslash\idxset}$ be the submatrix
obtained by deleting the columns of $\mathbf M$ with indices in
$\idxset$.  If $\idxset=\{j\}$, we also write $\mathbf M^{\backslash
  j}$ and $(\mathbf{M})_j$ for $\mathbf M^{\backslash \{j\}}$ and
$(\mathbf{M})_{\{j\}}$, respectively.


In the following, we give an iterative
formulation of $\fout_t^i$ for $i>0$. 
Let $e$ be the edge added to $\graph^{i-1}$ to form $\graph^{i}$, and
let ${\bf k}_{e}=[\beta_{e',e}:e'\in \edgeset_{i-1}]$ be an
$(n_s+i-1)$-dimensional column vector.  In the $i$th
iteration, we need to determine the component $\beta_{e',e}$ of
$\mathbf{k}_e$ with $e'\in \In(\tail(e))$. All other components of
$\mathbf{k}_e$ are zero. 
Using $\mathbf{k}_e$, we have
\begin{align}  
  \bF^{i} & =  \left({\bf I}-{\bf K}^{i}\right)^{-1}
  \nonumber \\  
  & =  \left({\bf I}-\begin{bmatrix}{\bf K}^{i-1} &{\bf k}_{e} \\
      \bzero & 0\end{bmatrix}\right)^{-1} \nonumber \\ 
  & =  \begin{bmatrix}{\bf I}-{\bf K}^{i-1} &-{\bf k}_{e} \\
    \bzero & 1\end{bmatrix}^{-1}  \nonumber  \\ 
  & =  \begin{bmatrix}({\bf I}-{\bf K}^{i-1})^{-1} & ({\bf
      I}-{\bf K}^{i-1})^{-1}{\bf k}_{e} \\ \bzero & 1\end{bmatrix}  
  \nonumber \\ 
   & = \begin{bmatrix}\bF^{i-1} & \bF^{i-1}{\bf k}_{e}\\ \mathbf{0} &
     1\end{bmatrix}. \label{eq:852kg}  
\end{align}  
The matrix $\bA_{\Out(s)}^{i}$ has one more column with zero components than $\bA_{\Out(s)}^{i-1}$, i.e.,  
\begin{equation} \label{eq:8bak1b} 
  \bA_{\Out(s)}^{i}=  
  \begin{bmatrix}  
    \bA_{\Out(s)}^{i-1} & \mathbf{0}  
  \end{bmatrix}.  
\end{equation}  
If the edge $e$ is not on any path from the source node $s$  
to sink node $t$,  we only need to append a column with zero components to $\bA_{\In(t)}^{i-1}$  
to form $\bA_{\In(t)}^{i}$, i.e.,  
\begin{equation} \label{eq:biqlafk} 
  \bA_{\In(t)}^{i}=  
  \begin{bmatrix}  
    \bA_{\In(t)}^{i-1} & \mathbf{0}  
  \end{bmatrix}.  
\end{equation}  
For this case, we can readily obtain from (\ref{eq:24kafa}),  (\ref{eq:852kg}), (\ref{eq:8bak1b}) and (\ref{eq:biqlafk}) that 
\begin{align}  
  \fout_t^{i}(\bx,\bz^{i}) & = (\bx \bA_{\Out(s)}^{i}+\bz^{i}) \bF^{i} (\bA_{\In(t)}^i)^\tr \nonumber \\ 
    & = (\bx \bA_{\Out(s)}^{i}+\bz^{i}) \begin{bmatrix} \bF^{i-1} \\ \bzero \end{bmatrix} (\bA_{\In(t)}^{i-1})^\tr \nonumber \\ 
    & = (\bx \bA_{\Out(s)}^{i-1}+(\bz^{i})^{\backslash i})
    \bF^{i-1} (\bA_{\In(t)}^{i-1})^\tr \nonumber \\  
    & = \fout_t^{i-1}(\bx,(\bz^{i})^{\backslash i}). \label{eq:2028dq}  
\end{align}  
Note that $(\bz^{i})^{\backslash i}$ is an $(n_s+i-1)$-dimensional
error vector obtained by deleting the $i$th component of
$\bz^{i}$, which corresponds to $e$.

If edge $e$ is on the $j$th edge-disjoint path from the source node
$s$ to sink node $t$, to form $\bA_{\In(t)}^{i}$, we need to first
append a column with zero components to $\bA_{\In(t)}^{i-1}$, and then
move the `$1$' in the $j$th row to the last component of that row.
That is, if
\begin{equation*}
  \bA_{\In(t)}^{i-1} =\begin{bmatrix}  
  {\bf b}_1 \\ {\bf b}_2 \\ \vdots \\ {\bf b}_{r_t} \end{bmatrix},
\end{equation*}
then  
\begin{equation}  
  \label{eq:32q8ff}  
  \bA_{\In(t)}^{i} =  
  \begin{bmatrix}  
    {\bf b}_1  & 0 \\ \vdots & \vdots \\ {\bf b}_{j-1} & {0} \\  \bzero & 1 \\ {\bf b}_{j+1} & 0 \\ \vdots & \vdots \\ {\bf b}_{r_t} & 0  
  \end{bmatrix}.  
\end{equation}  

We can then obtain $F_t^{i}(\bx,\bz^{i})$ from (\ref{eq:24kafa}),  (\ref{eq:852kg}), (\ref{eq:8bak1b}) and (\ref{eq:32q8ff}) as 
\begin{align}  
  (\fout_t^{i}(\bx,\bz^{i}))_j  
  & =  
  (\bx\bA_{\Out(s)}^{i}+\bz^{i})\bF^{i}((\bA_{\In(t)}^{i})^\tr)_j 
  \label{eq:98eroa}    \\  
  & = (\bx\bA_{\Out(s)}^{i}+\bz^{i})\begin{bmatrix} \bF^{i-1}{\bf k}_{e} \\ 1 \end{bmatrix} \nonumber \\
  & =  (\bx\bA_{\Out(s)}^{i-1}+(\bz^{i})^{\backslash  
    i})\bF^{i-1}{\bf k}_{e} +(\bz^{i})_{i}, \nonumber
\end{align}  
and  
\begin{align}  
  \lefteqn{(\fout_t^{i}(\bx,\bz^{i}))^{\backslash {j}}} \nonumber \\ & =  
  (\bx\bA_{\Out(s)}^{i}+\bz^{i})\bF^{i}
  ((\bA_{\In(t)}^{i})^\tr)^{\backslash {j}} \nonumber \\ 
  & = (\bx\bA_{\Out(s)}^{i}+\bz^{i})\begin{bmatrix}\bF^{i-1} \\ \bzero \end{bmatrix}
  ((\bA_{\In(t)}^{i})^\tr)^{\backslash {j}}  \nonumber \\ 
  & =  (\fout_t^{i}(\bx,(\bz^{i})^{\backslash i}))^{\backslash {j}}.   \label{eq:13yrd}  
\end{align}  

\subsection{Algorithm 2}  
\label{sec:algorithm}  

\begin{algorithm}[htbp]\label{alg:18fa} 
  \label{alg:algorithm-2}  
   \SetKwInOut{Output}{output} 
 \SetKwInOut{Input}{input} 
 \Input{($\graph$, $s$, $\sinkset$), ($r_t:t\in \sinkset$), $\msgdim$, ($d_t$:$t\in\sinkset$)} 
 \Output{local encoding kernels and codebook $\mathcal C$}
   \SetLine 
   \Begin{ 
    \For{each sink node $\sinknode$}{  
      \label{l:b7a}choose $r_t$ edge disjoint paths from $s$ to $\sinknode$\;  
      initialize $\bA_{\In(t)}$\;  
    }  
    Find a linear codebook $\msgset$ with $d_{\min,t}^0\geq
    d_t$, $\forall t\in\sinkset$\;
    $\bF\leftarrow {\bf I}$, $\bA_{\Out(s)}\leftarrow {\bf I}$\; 
    \For{each $e\in\mathcal{E}\setminus \Out(s)$ in an  
       upstream to downstream order} { 
       $\Gamma\leftarrow \emptyset$\;  
       \For{each sink node $\sinknode$} { 
        \eIf{no chosen path from $s$ to $t$ crosses $e$} { 
          $\bA_{\In(t)} \leftarrow \begin{bmatrix} \bA_{\In(t)} &  
       \mathbf{0} \end{bmatrix}$\;} 
        ( \ $e$ is on the $j$th path from $s$ to $\sinknode$ ){ 
         \For{each $\idxset$ with $|\idxset|\leq d_t-1$ and  
       $j\notin \idxset$} { 
           \For{each $\rho$ with $|\rho|=d_t-1-|\idxset|$} { 
     	    \label{l:bi2akf}find $\bx_0\neq \mathbf{0}$ and $\bz_0$ matching  
     $\rho$ such that  
     $(\fout_t(\bx_0,-\bz_0))^{\setminus (\idxset\cup\{j\})}=  
     \mathbf{0}$\; \label{line:2ki} 
             \If{ exist $\bx_0$ and $\bz_0$ } { 
               \label{l:8bliq}$\Gamma \leftarrow \Gamma\cup\{{\bf k}$: $(\bx_0\bA-\bz_0)\bF{\bf k} =0 \}$\;\label{line:qii8} 
              } 
            }  
          } 
	} 
        update $\bA_{\In(t)}$ using (\ref{eq:32q8ff})\;  
      } 
    choose a vector ${\bf k}_e$ in $\mathbb{F}_q^{|\In(\tail(e))|}\setminus \Gamma$\; \label{line:28dao} 
    $\bF\leftarrow \begin{bmatrix}\bF & \bF {\bf k}_e\\ \mathbf{0} & 1\end{bmatrix}$\;  
   } 
   } 
    \caption{Construct $(\msgdim,
(r_t:t\in \sinkset), (d_t:t\in\sinkset))$ linear network
code}  
\end{algorithm}

Let $e$ be the edge appended to the graph in the $i$th iteration for
$i>0$.  We choose ${\bf k}_e$ such that the following \emph{feasible
  condition} is satisfied:
\begin{equation} 
(\fout_t^{i}(\bx,-\bz^{i}))^{\backslash\idxset}\neq \mathbf{0} 
  \label{eq:22rpo}  
\end{equation}  
for all combinations of 
\begin{enumerate} 
\renewcommand{\labelenumi}{C\arabic{enumi})} 
 \item \label{i:c1} $t\in\sinkset$, 
 \item \label{i:c2} $\idxset\subset \{1,2,\ldots,r_t\}$ with $0\leq |\idxset|\leq d_t-1$, 
 \item \label{i:c3} non-zero $\bx\in\mathcal{C}$, and  
 \item \label{i:c4} error vector $\bz^{i}$ with $w_H(\bz^{i})\leq d_t-1-|\idxset|$. 
\end{enumerate} 
If the feasible condition is satisfied for sink node $t$ and $\idxset
=\emptyset$, we have
\begin{equation*}
  \bx \bA_{\Out(s)}^{i}\bF^{i}(\bA_{\In(t)}^{i})^\tr\neq
  \bz^{i-1}\bF_t^{i},
\end{equation*}
for all $\bz^{i}$ and $\bx$ satisfying C\ref{i:c3} and C\ref{i:c4}.
If $\mathcal C$ is a subspace, we have
$d_{\min,t}^i \geq d_t$. Since the feasible condition is
required for each iteration, when the algorithm terminates,
the code constructed for $\graph$ satisfies $d_{\min,t}\geq
d_t$.  
Algorithm 2 is also called the \emph{distance preserving
  algorithm} since the algorithm keeps the minimum distance
larger than or equal to $d_{t}$ in each iteration.
Even though the feasible
condition is stronger than necessary for $d_{\min,t}^i\geq
d_t, t\in\sinkset$, as we will see, it is required for the
existence of the local encoding kernels for $k>i$ such that
the feasible condition is satisfied.

\begin{theorem} \label{the:0134kapqjf} Given a linear
  codebook with $d_{\min,t}^0 \geq d_t$ for all
  $t\in\sinkset$, there exist local encoding kernels such
  that the feasible condition is satisfied for $i= 1,\cdots,
  |\edgeset|-n_s$ when the field size is larger than
  $\sum_{t\in\sinkset}\binom{r_t+|\mathcal{E}|-2}{d_t-1}$.
\end{theorem}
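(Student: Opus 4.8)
The plan is to prove the statement by induction on $i$, in the slightly stronger form: for $i=0,1,\dots,|\edgeset|-n_s$ the local encoding kernels of $\graph^i$ (extending those of $\graph^{i-1}$) can be chosen so that the feasible condition \eqref{eq:22rpo} holds on $\graph^i$, provided $q>\sum_{t\in\sinkset}\binom{r_t+|\edgeset|-2}{d_t-1}$. The base case $i=0$ carries no choice: $\graph^0$ has only the edges $\Out(s)$ and $\bF^0=\mathbf I$, so $\fout_t^0(\bx,-\bz^0)$ is $\bx-\bz^0$ restricted to the $r_t$ coordinates of $\In(t)$, and Definition~\ref{def:3a9176q4} gives $d_{\min,t}^0=\min_{\bx\neq\bzero}w_H(\bx|_{\In(t)})$. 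Hence for nonzero $\bx\in\msgset$, $|\idxset|\le d_t-1$ and $w_H(\bz^0)\le d_t-1-|\idxset|$, the Hamming-weight triangle inequality yields $w_H\big((\bx-\bz^0)|_{\In(t)}\big)\ge d_t-w_H(\bz^0)\ge|\idxset|+1$, so deleting the coordinates in $\idxset$ leaves a nonzero vector, which is \eqref{eq:22rpo} on $\graph^0$.

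For the inductive step, assume \eqref{eq:22rpo} on $\graph^{i-1}$ and let $e$ be the edge appended in iteration $i$; its predecessor $e'$ on its path lies in $\In(\tail(e))$, which is therefore nonempty. Fix $t,\idxset,\bx\neq\bzero\in\msgset,\bz^i$ as in conditions C\ref{i:c1})--C\ref{i:c4}) for $\graph^i$. If $e$ is on no chosen path to $t$, \eqref{eq:2028dq} together with the induction hypothesis (using $w_H((\bz^i)^{\backslash i})\le w_H(\bz^i)$) gives \eqref{eq:22rpo} with no constraint on $\mathbf k_e$. If $e$ is on the $j$th path to $t$, then by \eqref{eq:13yrd} every coordinate of $\fout_t^i(\bx,-\bz^i)$ outside $\idxset$ except the $j$th coincides with the corresponding coordinate of $\fout_t^{i-1}(\bx,-(\bz^i)^{\backslash i})$. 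When $j\in\idxset$, the induction hypothesis with deleted set $\idxset$ finishes the case; when $j\notin\idxset$ but $(\bz^i)_i\neq 0$, dropping the $i$th entry lowers the error weight by one and forces $|\idxset|\le d_t-2$, so the induction hypothesis with deleted set $\idxset\cup\{j\}$ finishes it; both without constraining $\mathbf k_e$. The remaining case is $j\notin\idxset$, $(\bz^i)_i=0$: the induction hypothesis with deleted set $\idxset$ makes $\big(\fout_t^{i-1}(\bx,-(\bz^i)^{\backslash i})\big)^{\backslash\idxset}$ nonzero; if a coordinate other than the $j$th is nonzero we are done, and otherwise the $j$th coordinate --- the output on $e'$ --- is nonzero, so by \eqref{eq:98eroa} the $j$th coordinate of $\fout_t^i(\bx,-\bz^i)$ equals $(\bx\bA_{\Out(s)}^{i-1}-(\bz^i)^{\backslash i})\bF^{i-1}\mathbf k_e$, a nontrivial linear functional of $\mathbf k_e$; requiring it nonzero excludes one proper subspace of $\ffield^{|\In(\tail(e))|}$, precisely the one recorded in line~\ref{line:qii8}.

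To finish, I count the subspaces to be excluded. For a fixed sink $t$ only the last case produces a constraint, and there the data are the index set $\idxset\subseteq\{1,\dots,r_t\}\setminus\{j\}$ and the error pattern $\rho\subseteq\edgeset^{i-1}$ of $(\bz^i)^{\backslash i}$ with $|\idxset|+|\rho|=d_t-1$; by Vandermonde's identity there are $\binom{(r_t-1)+(n_s+i-1)}{d_t-1}\le\binom{r_t+|\edgeset|-2}{d_t-1}$ of these, since $n_s+i-1\le|\edgeset|-1$. Summing over $t$, Algorithm~2 excludes at most $\sum_{t\in\sinkset}\binom{r_t+|\edgeset|-2}{d_t-1}<q$ proper subspaces of $\ffield^{|\In(\tail(e))|}$, and since $|\In(\tail(e))|\ge 1$, Lemma~\ref{lemma:1izea0} provides an admissible $\mathbf k_e$ outside their union, as in line~\ref{line:28dao}. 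This completes the induction, and $\graph^{|\edgeset|-n_s}=\graph$ gives the theorem.

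The step I expect to be the main obstacle is the last reduction in the inductive step: one must show that for a fixed $(t,\idxset,\rho)$ \emph{every} compatible pair $(\bx,\bz^i)$ constrains $\mathbf k_e$ by the \emph{same} subspace, so that recording one subspace per $(\idxset,\rho)$ --- not one per codeword/error pair --- really suffices and the count attains $\sum_t\binom{r_t+|\edgeset|-2}{d_t-1}$. This is exactly where the hypothesis $d_{\min,t}^{i-1}\ge d_t$ is used in force: the requirement that $\fout_t^{i-1}(\bx,-(\bz^i)^{\backslash i})$ be supported on the at most $d_t$ coordinates of $\idxset\cup\{j\}$, together with the minimum-distance bound, pins down the output vector feeding into $e$ up to a scalar, so the associated kernel subspace depends only on $(t,\idxset,\rho)$. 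Everything else --- the base-case weight estimate, the reductions via \eqref{eq:2028dq}, \eqref{eq:13yrd}, \eqref{eq:98eroa}, and the union-of-subspaces bound --- is routine.
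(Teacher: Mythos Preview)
Your argument is essentially the paper's own proof: the same induction on $i$, the same case split according to whether $e$ lies on a path to $t$ and whether $j\in\idxset$, the same reduction via \eqref{eq:2028dq}, \eqref{eq:13yrd}, \eqref{eq:98eroa}, and the same Vandermonde count $\sum_{\ell}\binom{r_t-1}{\ell}\binom{n_s+i-1}{d_t-1-\ell}=\binom{r_t+n_s+i-2}{d_t-1}$. Your inline treatment of the subcase $(\bz^i)_i\neq 0$ is exactly the content of the paper's Lemma~\ref{lemma:18saa}.

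The one place to sharpen is your sketch of the ``main obstacle.'' You phrase it as ``the hypothesis $d_{\min,t}^{i-1}\ge d_t$ \dots\ pins down the output vector feeding into $e$ up to a scalar,'' but what is actually needed---and what you already have as the induction hypothesis---is the full feasible condition at step $i-1$, not merely the minimum-distance inequality. The precise mechanism is a rank argument (the paper's Lemmas~\ref{lemma:4} and~\ref{lemma:1d8a}): the feasible condition at $i-1$ says the linear system $(\fout_t^{i-1}(\bx,-\bz))^{\backslash\idxset}=\bzero$ in the variables $(\bx,\bz)\in\msgset\times\rho^*$ has only the zero solution, so deleting one further coordinate (the $j$th) can raise the solution space by at most one dimension; hence all $(\bx,\bz)$ landing in $\badxz$ for that $(\idxset,\rho)$ are scalar multiples of a single pair, and the forbidden hyperplane for $\mathbf k_e$ is unique. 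Once you state it this way your count of at most $\sum_t\binom{r_t+|\edgeset|-2}{d_t-1}$ excluded hyperplanes is fully justified.
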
 
\begin{proof}[Proof Outline] 
  (See the complete proof in Section \ref{sec:verify}.) The linear
  codebook satisfies the feasible condition for $i=0$. Assume we can
  find local encoding kernels such that the feasible condition is
  satisfied for $i< k$, where $0\leq k-1<|\edgeset|-n_s$.  In the
  $k$th iteration, let $e$ be the edge appended to $\graph^{k-1}$ to
  form $\graph^{k}$.  We find that $\mathbf k_{e}$ only affects
  \eqref{eq:22rpo} for the case such that
  \begin{enumerate}
  \item $e$ is on $j$th path from $s$ to $t$, 
  \item $j\notin \idxset$, and
  \item $(F_t^{k-1}(\bx, -\bz))^{\backslash \idxset\cup
      \{j\}}=\bzero$, where $\bx\neq \bzero\in \mathcal{C}$, $\bz\in
    \ffield^{n_s+k-1}$, $w_H(\bz)=d_t-1-|\idxset|$.
  \end{enumerate}
  For $t$, $\idxset$, $\bx$ and $\bz$ satisfying the above
  condition, we need to choose $\mathbf k_e$ such that
  \begin{equation}\label{eq:constraint}
    (\bx\bA_{\Out(s)}^{k-1}-\bz)\bF^{k-1}\mathbf k_e\neq 0.
  \end{equation}
  We verify that if $q >
  \sum_{t\in\sinkset}\binom{r_t+|\mathcal{E}|-2}{d_t-1}$, we
  can always find such a $\mathbf k_e$.
\end{proof}

Refer to the pseudo code of Algorithm 2 above.
At the beginning, the algorithm finds $r_t$ edge-disjoint
paths from the source node to each sink node $t$, and
initializes $\bF$, $\bA_{\Out(s)}$, and $\bA_{\In(t)}, t\in \sinkset$ by
$\bF^0$, $\bA_{\Out(s)}^0$, and $\bA_{\In(t)}^0, t\in \sinkset$,
respectively.  The algorithm takes as the input a linear codebook
$\msgset$ such that $d_{\min,t}^0 \geq d_t$ for all sink
nodes $t$.  Such a codebook can be efficiently constructed by
using Reed-Solomon codes.
The main part of this algorithm is a loop starting at Line~7
for updating the local encoding kernels for the edges in
$\mathcal{E}\setminus \Out(s)$ in an upstream-to-downstream
order. The choosing of ${\bf k}_e$ is realized by the
pseudo codes between Line~8 and Line~25.

We analyze the time complexity of the algorithm for the
representative special case that $r_t=r$ and $d_t=d$ for all
$t\in\sinkset$, where $r\leq \min_{t\in\sinkset}\mc(s,t)$
and $d\leq r-\msgdim+1$.  For Line~3, the augmenting paths
for all the sinks can be found in time
$\bigO(|\sinkset||\edgeset|r)$ \cite{poly}.  Line~16 and 18
can be realized by solving a system of linear equations
which take time $\bigO(r^3)$ and $\bigO(1)$, respectively,
and each of these two lines is repeated
$\bigO(d|\edgeset||\sinkset|\binom{|\edgeset|}{d-1})$ times.
Line 26 can be solved by the method in Lemma
\ref{lemma:1izea0} in time
$\bigO(\delta|\sinkset|\binom{r+|\edgeset|-2}{d-1}(\delta^2+|\sinkset|\binom{r+|\edgeset|-2}{d-1}))$,
where $\delta$ is the maximum incoming degree of $\graph$,
and this line is repeated $\bigO(|\edgeset|)$ times. Under
the assumption that each edge is on some chosen path from
the source to the sinks, $\delta\leq r|\sinkset|$.  Summing
up all the parts, we obtain the complexity
\begin{align} 
 \bigO (\delta|\edgeset||\sinkset|\xi'(\delta^2 +|\sinkset|\xi')+ r^3d|\edgeset||\sinkset|\xi), 
\end{align} 
where $\xi'=\binom{r+|\edgeset|-2}{d-1}$. 
 
Subsequent to a conference paper of this work \cite{cons}, Matsumoto
\cite{Matsumoto2007} proposed an algorithm to construct network codes
that achieve the refined Singleton bound. In Table \ref{tab:13fag8a},
we compare the performances of Algorithm 1, Algorithm 2 and
Matsumoto's algorithm.  When $n_s$, $\omega$, $\delta$, $d$ and $r$
are fixed (i.e., we regard $|\sinkset|$ and $\edgeset$ as
variables) and $d>1$, the complexities of these algorithms are
$\bigO(|\sinkset|^2|\edgeset|^{2d-2})$,
$\bigO(|\sinkset|^2|\edgeset|^{2d-1})$ and
$\bigO(|\sinkset|^2|\edgeset|^{2d-1})$, respectively.

\begin{table}[htb]  
\renewcommand{\arraystretch}{1.3} 
\caption{Comparison of deterministic construction algorithms of network error-correcting codes.  $\xi=\binom{|\edgeset|}{d-1}$ and $\xi'=\binom{r+|\edgeset|-2}{d-1}$.} 
\centering \label{tab:13fag8a} 
 \begin{tabular}{l||ll} 
 \hline 
   & field size & Time complexity \\ 
   \hline \hline 
   Algorithm \ref{alg:alg-1} &  
   $|\sinkset|\xi$ & $\bigO(\msgdim n_s|\sinkset| \xi (n_s^2  +|\sinkset|\xi) + |\edgeset||\sinkset|n(n+|\sinkset|)))$ \\ 
   \hline 
   Algorithm \ref{alg:algorithm-2} & $|\sinkset|\xi'$ & $\bigO (\delta|\edgeset||\sinkset|\xi'(\delta^2 +|\sinkset|\xi')+ r^3d|\edgeset||\sinkset|\xi)$\\ 
   \hline 
   \cite[Fig. 2]{Matsumoto2007} & $|\sinkset|\xi$ & $\bigO(r|\edgeset||\sinkset|\xi(|\sinkset|\xi+r+d))$\\ 
   \hline 
 \end{tabular}  
\end{table}

\subsection{An Example of Algorithm 2}

	


We give an example of applying Algorithm 2 to
the network $(\graph, s, \{t,u\})$ shown in
Fig.~\ref{fig:complete}. In this network the maximum flow to each sink node
is three. We show how Algorithm 2 outputs a network code with $\omega = 1$,
$r_t=r_u=3$ and $d_{\min,t}=d_{\min,u}=3$. Here the finite field
$\ffield = \text{GF}(2^2) = \{0, 1, \alpha,\alpha^2\}$,
where $\alpha^2+\alpha + 1 = 0$.

The order on the set of edges is labelled in
Fig.~\ref{fig:complete}, and we also refer to an edge by its order. 
From $s$ to each sink node, there are three edge-disjoint
paths. We fix a particular path from $s$ to $t$ given by the sequence of edges
$3,6,8$ and a path from $s$ to $u$ given by the sequence of edges
$3,7,9$. The other edge-disjoint paths can be uniquely determined. 
We can check that each edge is on at least one path.
As we have described, define
$\graph^{0}=(\{s,a, d, e\},\{1,2,3\})$,
$\graph^{1}=(\{s,a, d, e\},\{1,2,3,4\})$ and so on.

We choose the codebook $\mathcal{C}=\langle
(1,\alpha,\alpha^2)\rangle$, which is a Reed-Solomon code. 
Let $\bx=(1,\alpha,\alpha^2)$. 
Note that we only need to check $\bx$ with the feasible
condition. The reason is that the constraint to choose
$\mathbf k_e$ in \eqref{eq:constraint} is
unchanged by multiplying a nonzero elements in $\ffield$ (see also
Section~\ref{sec:verify}).

Notice that nodes $b$, $c$, $d$ and $e$ have only one incoming edges.
We assume WLOG that the nodes $b$, $c$, $d$ and $e$ only copy and
forward their received symbols. We refer the reader to
\cite[Section~17.2]{yeung08b} for an explanation that this assumption does
not change the optimality of our coding design. 

In the following, we show that Algorithm 2 can give
$\beta_{3,6}=\beta_{4,6}=\beta_{3,7}=\beta_{5,7}=1$ and
$\beta_{5,6}=\beta_{4,7}=0$. Together with the local encoding kernels
associated with nodes $b$, $c$, $d$ and $e$, we have a set of local encoding
kernels satisfying the minimum distance constraints.

We skip the first two iterations, in which we assign $\beta_{1,4}=1$
and $\beta_{2,5}=1$. In the third iteration, edge $6$ is added to the
graph and we need to determine
\begin{equation*}
  \mathbf k_6 = \begin{bmatrix} 0 & 0 & \beta_{3,6} & \beta_{4,6}
    & \beta_{5,6} \end{bmatrix}^{\tr}.
\end{equation*}
We have
\begin{equation*}
  \bF^2 = \begin{bmatrix}1 & 0 & 0 & 1 & 0 \\ 0 & 1 & 0 & 0
    & 1 \\  0 & 0 & 1 & 0 & 0 \\ 0 & 0 & 0 & 1 & 0 \\ 0 &
    0 & 0 & 0 & 1 \end{bmatrix}.
\end{equation*}
We first consider node $t$. We see that edge $6$ is on the third
path to $t$.  In this iteration, $\In(t)=\{1,5,6\}$.  We
consider the following four cases of $\idxset$ such that
$3\notin \idxset$:
\begin{enumerate}
\item $\idxset=\emptyset$: Since $\bz_1 = (1,0,0, 0,\alpha)$
  satisfies $(F_t^2(\bx,-\bz_1))^{\backslash 3} = \bzero$, we
  need to choose $\mathbf k_6$ such that
  $(\bx\bA_{\Out(s)}^2-\bz_1)\bF^2\mathbf k_6 \neq 0$. This gives
  \begin{equation}
    \label{eq:con1}
    \beta_{3,6}\neq 0.
  \end{equation}
  We also have $\bz_2 = (1,\alpha,0, 0, 0)$
  satisfies $(F_t^2(\bx,-\bz_2))^{\backslash 3} = \bzero$. This
  error vector imposes the same constraint that
  $\beta_{3,6}\neq 0$.
\item $\idxset=\{1\}$: Since $\bz_3=(0,0,0,0,\alpha)$ satisfies
  $(F_t^2(\bx,-\bz_3))^{\backslash \{1,3\}} = 0$, we
  need to choose $\mathbf k_6$ such that
  $(\bx\bA_{\Out(s)}^2-\bz_3)\bF^2\mathbf k_6 \neq 0$. This gives
  \begin{equation}
    \label{eq:con2}
    \beta_{3,6}\alpha^2 + \beta_{4,6}\neq 0.
  \end{equation}
\item $\idxset=\{2\}$: Similar to the above case, we have
  \begin{equation}
    \label{eq:con3}
    \beta_{3,6}\alpha^2 + \beta_{5,6}\alpha\neq 0. 
  \end{equation}
\item $\idxset=\{1,2\}$: We need
  $\bx\bA_{\Out(s)}^2\bF^2\mathbf k_6 \neq 0$, i.e.,
  \begin{equation}
    \label{eq:con4}
    \beta_{3,6}\alpha^2+ \beta_{4,6} + \beta_{5,6}\alpha \neq 0. 
  \end{equation}
\end{enumerate}

Similarly, we can analyze sink node $u$ and 
obtain the following
constraints on $\mathbf k_6$:
\begin{align}
  \beta_{4,6} & \neq 0 \label{eq:con5} \\
  \beta_{3,6}\alpha^2 + \beta_{4,6} & \neq 0 \label{eq:con6}\\
  \beta_{4,6}+\beta_{5,6}\alpha & \neq 0  \label{eq:con7}\\
  \beta_{3,6}\alpha^2 + \beta_{4,6}+ \beta_{5,6}\alpha & \neq
  0.\label{eq:con8} 
\end{align}
Form \eqref{eq:con1} to \eqref{eq:con8}, we have six
distinct constraints,  which are satisfied by
$\beta_{3,6}=\beta_{4,6}=1$ and $\beta_{5,6}=0$.

\begin{figure} 
  \centering  
	
  \begin{tikzpicture}[scale=1.2] 
    \node[dot] (s) at (0,0) {$s$}; 
    \node[dot] (m1) at (-1.5,-1) {$d$} edge[<-] node[auto]
    {$1$} (s);
    \node[dot] (m2) at (1.5,-1) {$e$} edge[<-] node[right,
    above]
    {$2$} (s);
    \node[dot] (a) at (0,-1.6) {$a$} edge [<-] node[auto]
    {$3$} (s) edge [<-]  node[auto]
    {$4$} (m1) edge [<-] node[auto] {$5$} (m2);
    \node[dot] (b) at (-1,-2.5) {$b$} edge [<-] node[right]
    {$6$} (a);
    \node[dot] (c) at (1,-2.5) {$c$} edge [<-] node[left,right]
    {$7$} (a);
    \node[dot] (t) at (-2,-3.5) {$t$} edge [<-] node[left] {$10$} (m1) edge
    [<-] node[left] {$8$} (b) edge [<-] node[below] {$12$} (c);
    \node[dot] (u) at (2,-3.5) {$u$} edge [<-] node[right] {$11$} (m2) edge [<-]
    node[below] {$13$} (b) edge [<-] node[right] {$9$} (c);
  \end{tikzpicture}

  \caption{This network is used to demonstrate Algorithm 2, in which
    $s$ is the source node, $t$ and $u$ are the sink nodes. The edges
    in the network is labelled by the integers beside.  We design
    a code with $\omega = 1$, $r_t=r_u=3$ and
    $d_{\min,t}=d_{\min,u}=3$ over $\text{GF}(2^2)$.}\label{fig:complete}
\end{figure}
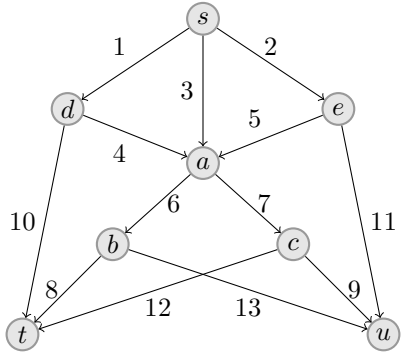

Then we go to the fourth iteration, for which edge $7$ is added to the
graph and we need to determine
\begin{equation*}
  \mathbf k_7 = \begin{bmatrix} 0 & 0 & \beta_{3,7} & \beta_{4,7}
    & \beta_{5,7} & 0\end{bmatrix}^{\tr}.
\end{equation*}
We have
\begin{equation*}
  \bF^3 = \begin{bmatrix}1 & 0 & 0 & 1 & 0 & 1 \\ 0 & 1 & 0 & 0
    & 1 & 0 \\  0 & 0 & 1 & 0 & 0 & 1 \\ 0 & 0 & 0 & 1 & 0 &
    1 \\ 0 & 0 & 0 & 0 & 1 & 0 \\ 0 & 0 & 0 & 0 & 0 & 1 \end{bmatrix}.
\end{equation*}
Edge $7$ is on the second path to $t$. Considering all $\idxset$ such
that $2\notin \idxset$, we obtain the following constraints on $\mathbf k_7$:
\begin{align}
  \beta_{5,7} & \neq 0 \label{eq:don1} \\
  \beta_{3,7}\alpha^2 + \beta_{5,7}\alpha & \neq 0 \\
  \beta_{3,7} + \beta_{4,7}+ \beta_{5,7}\alpha & \neq 0 \\
  \beta_{3,7}\alpha^2 + \beta_{4,7}\alpha^2+ \beta_{5,7}\alpha & \neq 0 \\
  \beta_{3,7}\alpha^2 + \beta_{5,7}\alpha & \neq 0 \\
  \beta_{3,7}\alpha^2+ \beta_{4,7} + \beta_{5,7}\alpha & \neq 0.
\end{align}
Similarly, we can analyze sink node $u$ and obtain the following
constraints on $\mathbf k_7$:
\begin{align}
  \beta_{3,7}+ \beta_{4,7} & \neq 0 \label{eq:don6} \\
  \beta_{3,7}\alpha^2+ \beta_{4,7} & \neq 0 \label{eq:don7} \\
  \beta_{3,7}\alpha^2+ \beta_{4,7}\alpha^2+\beta_{5,7}\alpha
  & \neq 0  \label{eq:don8}\\ 
  \beta_{3,7}\alpha^2 + \beta_{4,7}+\beta_{5,7}\alpha & \neq
  0\label{eq:don9} 
\end{align}
From \eqref{eq:don1} to \eqref{eq:don9}, we have seven
distinct constraints on $\mathbf k_7$, which are satisfied
by $\beta_{3,7}=\beta_{5,7}=1$ and $\beta_{4,7}=0$.

Let us see what would happen if we only consider
$\idxset=\emptyset$.  For this case, in iteration $3$, we have only
two constraints given by \eqref{eq:con1} and \eqref{eq:con5}, which
are satisfied by $\beta_{3,6}=\beta_{4,6}=1$ and
$\beta_{5,6}=\alpha$. We see that these values do not satisfiy
\eqref{eq:con3}.  We now show that it is impossible to find a network
code with $d_{\min,t}=3$ with these values. Construct an error vector
$\bz$ as follows:  $z_1=1$, 
$z_7=-(\beta_{3,7}\alpha^2+\beta_{5,7}\alpha)$ and $z_i=0$ for $i\neq 1,
7$.  We check that
\begin{align*}
  F_t(\bx,\bz) & = \bx \bF_{s,t} + \bz \bF_t \\
  & = (1,\alpha,\alpha^2) \begin{bmatrix}1 & \beta_{4,7} &
    1 \\ 0 & \beta_{5,7} & \alpha & 0 & \beta_{3,7} &
    1  \end{bmatrix} + (1, z_7) \begin{bmatrix} 1 &
    \beta_{4,7} & 1 \\ 0 & 1 & 0 \end{bmatrix} \\
  & = \bzero.
\end{align*}
Thus, $d_{\min,t}\leq w_H(\bz)=2$.


\subsection{Proof of Theorem~\ref{the:0134kapqjf}}
\label{sec:verify}  
 
Theorem \ref{the:0134kapqjf} is proved by induction on $i$.  
The codebook with $d_{\min,t}^0 \geq d_t$ for all $t\in
\sinkset$ satisfies the feasible condition for $i=0$. 
Assume that up to the $(k-1)$th iteration, where $0\leq
k-1<|\edgeset|-n_s$, we can find local encoding kernels such
that the feasible condition is satisfied for all $i \le k$.
In the $k$th iteration, let $e$ be the edge appended to
$\graph^{k-1}$ to form $\graph^{k}$.  We will show that there
exists ${\bf k}_e$ such that the feasible condition
continues to hold for $i = k$.

We first consider a sink node $t$ for which edge $e$ is not on any path from  
the source node $s$ to $\sinknode$. (Such a sink node does
not necessarily exist).   
For all $\idxset$, $\bx$ and $\bz^{k}$ satisfying
C\ref{i:c2})-C\ref{i:c4}) with $k$ in place of $i$,  
we have 
\begin{align} 
    (\fout_\sinkt^{k}(\bx,-\bz^{k}))^{\backslash\idxset}
    & = (\fout_\sinkt^{k-1}(\bx,-(\bz^{k})^{\backslash
      k}))^{\backslash\idxset} \label{eq:18akee} \\ & \neq
    \mathbf{0}, \label{eq:8afla}  
\end{align} 
where (\ref{eq:18akee}) follows from (\ref{eq:2028dq}), and
(\ref{eq:8afla}) follows from the induction hypothesis, i.e.,
the feasible condition is satisfied for $i=k-1$ by noting
$w_H((\bz^{k})^{\setminus k})\leq w_H(\bz^{k})\leq d_t-1-|\idxset|$. 
Therefore, (\ref{eq:22rpo}) holds for $i=k$ regardless of
the choice of ${\bf k}_e$.

For a sink node $t$ such that edge $e$ is on the $j$th edge-disjoint  
path from the source node $s$ to $\sinknode$, we consider  
two scenarios for $\mathcal{L}$, namely  
$j\in \idxset$ and $j\notin \idxset$.  
For all $\mathcal{L}$ satisfying C\ref{i:c2}) and $j\in
\idxset$, and all $\bx$ and $\bz^{k}$ satisfying
C\ref{i:c3}) and C\ref{i:c4}) for $i=k$,  
\begin{align} 
    (\fout_\sinkt^{k}(\bx,-\bz^{k}))^{\backslash\idxset}  & = (\fout_\sinkt^{k-1}(\bx,-(\bz^{k})^{\backslash k}))^{\backslash\idxset} \label{eq:1irqfi}\\ & \neq \mathbf{0}, \label{eq:8g1oapq} 
\end{align} 
where (\ref{eq:1irqfi}) follows from (\ref{eq:13yrd}) and
(\ref{eq:8g1oapq}) follows from the induction hypothesis
using the same argument as the previous case. Therefore,
(\ref{eq:22rpo}) again holds for $i=k$ regardless of the
choice of ${\bf k}_e$.

For all $\mathcal{L}$ satisfying C\ref{i:c2}) and $j \not\in \idxset$, all $\bx$ satisfying C\ref{i:c3}) and all $\bz^{k}$ satisfying C\ref{i:c4}) with $i=k$, 
(\ref{eq:22rpo}) holds for $i=k$ if and only if either 
\begin{equation}  
(\fout_t^{k}(\bx,-\bz^{k}))^{\backslash\idxset\cup\{j\}} \ne \bzero  
\label{B1}  
\end{equation}  
or  
\begin{equation}  
(\fout_t^{k}(\bx,-\bz^{k}))_j \ne 0. 
\label{B2}
\end{equation}  
By (\ref{eq:13yrd}) and (\ref{eq:98eroa}), (\ref{B1}) and (\ref{B2})  
are equivalent to  
\begin{equation}\label{eq:25qea}  
  (\fout_t^{k-1}(\bx,-(\bz^{k})^{\backslash k}))^{\backslash\idxset\cup\{j\}}  
\neq \mathbf{0},  
\end{equation}  
and  
\begin{equation}\label{eq:26d2a}  
  (\bx\bA_{\Out(s)}^{k-1}-(\bz^{k})^{\backslash k}))\bF^{k-1}{\bf k}_e  
  -(\bz^{k})_{k}
  \neq 0,  
\end{equation}  
respectively. Note that ${\bf k}_e$ is involved in (\ref{eq:26d2a}) but not in (\ref{eq:25qea}).

For an index set $\idxset$ satisfying C\ref{i:c2}) and $j \not\in \idxset$, let $\badxz^{k}_\idxset$ be the set of all $(\bx,\bz^{k})$ that do  
not satisfy (\ref{eq:25qea}), where $\bx$ satisfies C\ref{i:c3}) and $\bz^{k}$ satisfies C\ref{i:c4}) for $i=k$.  
We need to find a proper ${\bf k}_e$ such that for any $(\bx,\bz^{k})\in \badxz^{k}_\idxset$,  $(\bx,\bz^{k})$ satisfies (\ref{eq:26d2a}). 
In the following technical lemmas, we first prove some properties of $\badxz^{k}_\idxset$. 
 
\begin{lemma}\label{lemma:18saa}  
If the feasible condition holds for $i=k-1$, then for any $(\bx,\bz^{k})\in\badxz^{k}_\idxset$,  
$w_H(\bz^{k})=d_t-1-|\idxset|$ and $(\bz^{k})_{k}=0$.  
\end{lemma}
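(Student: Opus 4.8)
The plan is to argue by contradiction, using as the only real ingredient the induction hypothesis that the feasible condition holds for $i=k-1$, applied not to the set $\idxset$ itself but to the enlarged index set $\idxset\cup\{j\}$; this enlargement is precisely what absorbs both the extra coordinate $j$ and the newly appended coordinate of the error vector.

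First I would unpack membership in $\badxz^{k}_\idxset$: it means $\bx\neq\bzero$, $\bx\in\msgset$, $w_H(\bz^{k})\leq d_t-1-|\idxset|$, and $(\fout_t^{k-1}(\bx,-(\bz^{k})^{\backslash k}))^{\backslash\idxset\cup\{j\}}=\mathbf{0}$. Next I would record the elementary identity that deleting the coordinate $(\bz^{k})_{k}$ (the one corresponding to the newly appended edge) lowers the Hamming weight by $1$ when $(\bz^{k})_{k}\neq 0$ and leaves it unchanged otherwise, so $w_H((\bz^{k})^{\backslash k})\leq w_H(\bz^{k})$ with equality failing exactly when $(\bz^{k})_{k}\neq 0$. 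Then I would assume the assertion fails, i.e.\ that $(\bz^{k})_{k}\neq 0$ or $w_H(\bz^{k})<d_t-1-|\idxset|$ (the bound $w_H(\bz^{k})\leq d_t-1-|\idxset|$ always holding by C\ref{i:c4})). In either case the identity yields $w_H((\bz^{k})^{\backslash k})\leq d_t-2-|\idxset|$, and since Hamming weights are nonnegative this forces in addition $|\idxset|+1\leq d_t-1$. Hence $\idxset\cup\{j\}$ (of size $|\idxset|+1$, as $j\notin\idxset$ and $1\leq j\leq r_t$) together with the $(n_s+k-1)$-dimensional error vector $(\bz^{k})^{\backslash k}$ satisfies conditions C\ref{i:c1})--C\ref{i:c4}) for the network $\graph^{k-1}$, so the feasible condition for $i=k-1$ gives $(\fout_t^{k-1}(\bx,-(\bz^{k})^{\backslash k}))^{\backslash\idxset\cup\{j\}}\neq\mathbf{0}$, contradicting the equality displayed above. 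Therefore $(\bz^{k})_{k}=0$ and $w_H(\bz^{k})=d_t-1-|\idxset|$.

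The argument is essentially bookkeeping, so I do not anticipate a serious obstacle; the one point I would check carefully is that replacing $\idxset$ by $\idxset\cup\{j\}$ and $\bz^{k}$ by $(\bz^{k})^{\backslash k}$ really preserves the size constraints C\ref{i:c2}) and C\ref{i:c4}) at level $k-1$ so that the induction hypothesis applies — but, as shown, both are automatic once the assertion is assumed to fail, via $0\leq w_H((\bz^{k})^{\backslash k})\leq d_t-2-|\idxset|$.
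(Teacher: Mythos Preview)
Your proof is correct and follows essentially the same route as the paper: both argue by contradiction, applying the feasible condition at level $k-1$ with the enlarged index set $\idxset\cup\{j\}$ to force $w_H((\bz^{k})^{\backslash k})>d_t-2-|\idxset|$, and then squeeze to get both conclusions. The only cosmetic difference is that the paper separates out the boundary case $|\idxset|=d_t-1$ explicitly, whereas you absorb it into the contradiction via the observation $0\leq w_H((\bz^{k})^{\backslash k})\leq d_t-2-|\idxset|$; both treatments are equivalent.
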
  
\begin{proof}  
Fix $(\bx,\bz^{k})\in\badxz^{k}_\idxset$. 
  If $|\idxset|=d_t-1$, since  
$w_H(\bz^{k})\leq d_t-1-|\idxset|=0$,  
the lemma is true.  
If $0\leq |\idxset|< d_t-1$, we now prove that  
$w_H((\bz^{k})^{\backslash k})> d_t-2-|\idxset|$. 
If $w_H((\bz^{k})^{\backslash k}) \leq d_t-2-|\idxset|$,  
by the assumption that the feasible condition holds for $i=k-1$,  
\begin{align} 
 (F_t^{k-1}(\bx,-(\bz^{k})^{\backslash k}))^{\idxset\cup \{j\}} \neq \bzero, 
\end{align} 
i.e., $(\bx,\bz^{k})$ satisfies (\ref{eq:25qea}), a contradiction to $(\bx,\bz^{k})\in\badxz^{k}_\idxset$. 
Therefore 
\begin{align} 
 d_t-1-|\idxset| & \leq w_H((\bz^{k})^{\backslash k}) \\ 
  & \leq  w_H(\bz^{k}) \\ 
  & \leq  d_t-1-|\idxset|. 
\end{align} 
Hence, $w_H((\bz^{k})^{\backslash k}) = w_H(\bz^{k}) =  d_t-1-|\idxset|$. This also implies that $(\bz^{k})_{k}=0$. 
\end{proof}

\begin{lemma} \label{lemma:4}  
Let $\mathbf M$ be a matrix, and let $j$
be a column index of $\mathbf M$.  
If a system of linear equations $x \mathbf M=\bzero$ with
$x$ as the variable has only the zero solution, then  
$x \mathbf M^{\backslash j}=\bzero$ has at most a one-dimensional  
solution space. 
\end{lemma}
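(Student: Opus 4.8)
The plan is to realize the solution space of $x\mathbf{M}^{\backslash j}=\bzero$ as the image of an injective linear map into the one-dimensional space $\ffield$, from which the bound on its dimension is immediate.

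First I would set $W=\{x : x\mathbf{M}^{\backslash j}=\bzero\}$, the solution space in question (a space of row vectors). The key observation is that for $x\in W$, every component of the row vector $x\mathbf{M}$ vanishes except possibly the one in column $j$; writing $\mathbf{u}_j$ for the row vector with a $1$ in position $j$ and $0$ elsewhere, this says $x\mathbf{M}=\mu(x)\,\mathbf{u}_j$, where $\mu(x)\in\ffield$ denotes the $j$th component of $x\mathbf{M}$.

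Next I would observe that $\mu:W\to\ffield$ is linear, since $x\mapsto x\mathbf{M}$ is linear and extracting the $j$th component is linear, and that it is injective: if $\mu(x)=0$ then $x\mathbf{M}=\bzero$, so by hypothesis $x=\bzero$, hence $\ker\mu=\{\bzero\}$. An injective linear map from $W$ into the one-dimensional space $\ffield$ forces $\dim W\le 1$, which is exactly the assertion.

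The argument is routine linear algebra, so there is no genuine obstacle; the only points requiring care are the bookkeeping of rows versus columns (here $x$ acts on the left of $\mathbf{M}$ and $\mathbf{M}^{\backslash j}$ deletes a column), and noting the equivalent rank-theoretic phrasing: deleting one column of $\mathbf{M}$ lowers its rank by at most one, so the left null space of $\mathbf{M}^{\backslash j}$ has dimension at most one more than that of $\mathbf{M}$, which was $\{\bzero\}$ by assumption.
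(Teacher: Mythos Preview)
Your argument is correct and amounts to the same linear-algebraic observation as the paper's proof: the paper argues via rank--nullity that deleting one column of $\mathbf{M}$ drops the rank by at most one, so the left null space grows by at most one dimension, while you package the same fact as an injective linear map $\mu:W\to\ffield$. You yourself note this equivalence in your final paragraph, so there is no substantive difference between the two proofs.
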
  
\begin{proof}  
  The number of columns of $M$ is at least the number of rows of $M$,
  otherwise the system of linear equations $x \mathbf M=\bzero$ cannot
  have a unique solution.  Let $m$ be the number of rows in $\mathbf
  M$. We have $\rank(\mathbf M)=m$.  Let $\text{Null}(\mathbf
  M^{\backslash j})$ be the null space of $\mathbf M^{\backslash j}$
  defined as
\begin{equation*}
  \text{Null}(\mathbf M^{\backslash j}) = \{\bx: \bx \mathbf M^{\backslash j} = 0\}.
\end{equation*}
By the rank-nullity theorem of linear algebra, we have
$$\rank(\mathbf M^{\backslash j}) + \dim(\text{Null}(\mathbf
M^{\backslash j})) = m.$$  
Hence,
\begin{align*}
  \dim(\text{Null}(\mathbf M^{\backslash j})) 
   & = \rank(\mathbf M)-\rank(\mathbf M^{\backslash j})  \\
   & \leq 1.
\end{align*}
The proof is completed by noting that $\text{Null}(\mathbf
M^{\backslash j})$ is the solution space of $x \mathbf
M^{\backslash j}=\bzero$ with $x$ as the variable.
\end{proof}

\begin{lemma} \label{lemma:1d8a}  
Let $\rho$ be an error pattern  with $|\rho|=d_t-1-|\idxset|$, where $0\leq |\idxset|\leq d_t-1$. If the feasible condition holds for $i=k-1$, the span of  
all $(\bx,\bz^{k})\in\badxz^{k}_\idxset$ with $\bz^{k} \in \rho^*$ is either empty or a one-dimensional linear space.   
\end{lemma}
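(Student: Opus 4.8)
The plan is to exhibit a linear subspace $L$ with $S\subset L$ and $\dim L\le 1$, where $S:=\badxz^{k}_\idxset\cap\{(\bx,\bz^{k}):\bz^{k}\in\rho^{*}\}$; since a nonempty $S$ contains a pair with $\bx\neq\bzero$ (the definition of $\badxz^{k}_\idxset$ requires $\bx\neq\bzero$), this yields the claimed dichotomy. The relevant case is $j\notin\idxset$ (otherwise $\idxset\cup\{j\}=\idxset$), and we may assume $S\neq\emptyset$. Since $\mathcal C$ is linear, write $\bx=\infvec\genematr$ for a fixed full-rank generator matrix $\genematr$ of $\mathcal C$. By Lemma~\ref{lemma:18saa}, every $\bz^{k}$ occurring in $S$ has $(\bz^{k})_{k}=0$ and $w_H(\bz^{k})=|\rho|$; with $\rho_{\bz^{k}}\subset\rho$ this forces $\rho_{\bz^{k}}=\rho$, so the restriction $\bz_\rho\in\ffield^{|\rho|}$ of $\bz^{k}$ to $\rho$ has \emph{all} entries nonzero, and $(\bz^{k})^{\backslash k}=\bz_\rho\bA_\rho^{k-1}$. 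Set $\mathbf{M}:=\bA_\rho^{k-1}\bF^{k-1}(\bA_{\In(t)}^{k-1})^{\tr}$, let $\mathbf{H}^{+}$ be the matrix with top block $\genematr\bA_{\Out(s)}^{k-1}\bF^{k-1}\bigl((\bA_{\In(t)}^{k-1})^{\tr}\bigr)^{\backslash\idxset}$ and bottom block $-(\mathbf{M})^{\backslash\idxset}$, and let $\mathbf{H}=(\mathbf{H}^{+})^{\backslash j}$, where $j$ now denotes the column of $\mathbf{H}^{+}$ associated with the $j$th incoming edge of $t$ in $\graph^{k-1}$. A direct computation gives $\begin{bmatrix}\infvec & \bz_\rho\end{bmatrix}\mathbf{H}=\bigl(\fout_t^{k-1}(\bx,-\bz_\rho\bA_\rho^{k-1})\bigr)^{\backslash\idxset\cup\{j\}}$, so the linear subspace $L:=\{(\bx,\bz^{k}):\bx\in\mathcal C,\ \rho_{\bz^{k}}\subset\rho,\ (\bz^{k})_{k}=0,\ (\fout_t^{k-1}(\bx,-(\bz^{k})^{\backslash k}))^{\backslash\idxset\cup\{j\}}=\bzero\}$ --- which contains $S$ --- is the image, under the injective linear map $(\infvec,\bz_\rho)\mapsto(\bx,\bz^{k})$, of the solution space of $\begin{bmatrix}\infvec & \bz_\rho\end{bmatrix}\mathbf{H}=\bzero$. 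By Lemma~\ref{lemma:4} it is therefore enough to show that $\begin{bmatrix}\infvec & \bz_\rho\end{bmatrix}\mathbf{H}^{+}=\bzero$ --- equivalently, $(\fout_t^{k-1}(\bx,-\bz_\rho\bA_\rho^{k-1}))^{\backslash\idxset}=\bzero$ with $\bx=\infvec\genematr$ and $w_H(\bz_\rho\bA_\rho^{k-1})\le|\rho|=d_t-1-|\idxset|$ --- has only the zero solution.

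If $\infvec\neq\bzero$, then $\bx\neq\bzero$ and, since $0\le|\idxset|\le d_t-1$, the feasible condition for $i=k-1$ with index set $\idxset$ gives $(\fout_t^{k-1}(\bx,-\bz_\rho\bA_\rho^{k-1}))^{\backslash\idxset}\neq\bzero$ --- a contradiction; hence $\infvec=\bzero$, and the equation reduces to $\bz_\rho(\mathbf{M})^{\backslash\idxset}=\bzero$. To conclude $\bz_\rho=\bzero$ I would show that $(\mathbf{M})^{\backslash\idxset\cup\{j\}}$ --- hence \emph{a fortiori} $(\mathbf{M})^{\backslash\idxset}$, which has one more column --- has full row rank $|\rho|$ (trivial when $|\rho|=0$). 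Suppose not, say $\mathbf{u}(\mathbf{M})^{\backslash\idxset\cup\{j\}}=\bzero$ for some $\mathbf{u}\in\ffield^{|\rho|}\setminus\{\bzero\}$. Take $(\bx_{0},\bz_{0}^{k})\in S$ with $\rho$-restriction $\bz_{0,\rho}$ (all entries nonzero), pick $p$ with $\mathbf{u}_{p}\neq0$, and set $\alpha=-(\bz_{0,\rho})_{p}/\mathbf{u}_{p}\neq0$. Let $\tilde{\bz}^{k}$ be the error vector whose $\rho$-restriction is $\bz_{0,\rho}+\alpha\mathbf{u}$ and which is zero elsewhere (in particular in coordinate $k$); it has a zero in coordinate $p$, so $w_H(\tilde{\bz}^{k})\le|\rho|-1<d_t-1-|\idxset|$, while by linearity $(\fout_t^{k-1}(\bx_{0},-(\tilde{\bz}^{k})^{\backslash k}))^{\backslash\idxset\cup\{j\}}=(\fout_t^{k-1}(\bx_{0},-(\bz_{0}^{k})^{\backslash k}))^{\backslash\idxset\cup\{j\}}-\alpha\,\mathbf{u}(\mathbf{M})^{\backslash\idxset\cup\{j\}}=\bzero$ (the first term vanishes because $(\bx_{0},\bz_{0}^{k})\in\badxz^{k}_\idxset$). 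Thus $(\bx_{0},\tilde{\bz}^{k})\in\badxz^{k}_\idxset$, contradicting Lemma~\ref{lemma:18saa}, which forces its error part to have weight exactly $d_t-1-|\idxset|$. This gives the claim, hence $\bz_\rho=\bzero$.

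Putting the two cases together, $\begin{bmatrix}\infvec & \bz_\rho\end{bmatrix}\mathbf{H}^{+}=\bzero$ has only the zero solution, so Lemma~\ref{lemma:4} makes the solution space of $\begin{bmatrix}\infvec & \bz_\rho\end{bmatrix}\mathbf{H}=\bzero$ at most one-dimensional; hence $\dim L\le 1$, and the span of $S$ is trivial when $S=\emptyset$ and exactly one-dimensional otherwise. I expect the only real difficulty to be the full-row-rank claim for $(\mathbf{M})^{\backslash\idxset\cup\{j\}}$: this is exactly where one needs that members of $\badxz^{k}_\idxset$ have weight \emph{exactly} $d_t-1-|\idxset|$ (Lemma~\ref{lemma:18saa}), not merely at most that much, so that displacing $\bz_{0,\rho}$ along a kernel direction strictly lowers the weight and manufactures a pair that Lemma~\ref{lemma:18saa} forbids.
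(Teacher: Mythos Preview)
Your argument is correct and follows the same skeleton as the paper's proof: both reduce to Lemma~\ref{lemma:4} by showing that the linear system $(\fout_t^{k-1}(\bx,-(\bz^{k})^{\backslash k}))^{\backslash\idxset}=\bzero$ over $\bx\in\mathcal C$, $\bz^{k}\in\rho^{*}$ has only the trivial solution, so that deleting the $j$th column leaves at most a one-dimensional solution space containing $\badxz^{k}_\idxset\cap\{\bz^{k}\in\rho^{*}\}$. The paper justifies the triviality step in one line by citing the feasible condition at $i=k-1$; you instead split into the case $\bx\neq\bzero$ (handled directly by the feasible condition) and the case $\bx=\bzero$ (handled by your full-row-rank claim for $(\mathbf{M})^{\backslash\idxset\cup\{j\}}$, proved via a weight-lowering perturbation and Lemma~\ref{lemma:18saa}). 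This extra care is warranted --- the feasible condition as stated constrains only pairs with $\bx\neq\bzero$ --- so your version is the more complete of the two, but the underlying strategy is identical.
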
  
\begin{proof}  
Consider the equation  
\begin{equation} \label{eq:8zlida} 
 (\fout_t^{k-1}(\bx,-(\bz^{k})^{\backslash k}))^{\backslash\idxset} = \mathbf{0} 
\end{equation} 
with $\bx \in \msgset$ and $\bz^{k} \in \rho^*$ as variables.  
Since $\msgset$ and $\rho^*$ are both vector spaces, (\ref{eq:8zlida}) is a system of linear equations. 
By the assumption that the feasible condition holds for $i=k-1$,  
(\ref{eq:8zlida}) has only the zero solution.  
By Lemma \ref{lemma:4}, the system of linear equations 
\begin{equation*}  
 (\fout_t^{k-1}(\bx,-(\bz^{k})^{\backslash k}))^{\backslash\idxset\cup\{j\}} = \mathbf{0}, 
\end{equation*} 
with $\bx \in \msgset$ and $\bz^{k} \in \rho^*$ as variables, has at most a one-dimensional solution space.  
\end{proof}  
 
 
\begin{lemma}\label{lemma:5}  
If the feasible condition holds for $i=k-1$, there exist at most  
$\binom{n_s+k-1}{d_t-1-|\idxset|}q^{|\In(\tail(e))|-1}$ values of 
${\bf k}_e$ such that (\ref{eq:26d2a}) does not hold  for some  
$(\bx,\bz^{k})\in\badxz^{k}_\idxset$.  
\end{lemma}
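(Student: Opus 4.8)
\emph{Proof plan.} The plan is to turn \eqref{eq:26d2a}, restricted to the pairs in $\badxz^{k}_\idxset$, into a single homogeneous linear condition on $\mathbf k_e$; to split $\badxz^{k}_\idxset$ according to the support of its error vectors; to show that on each such part the set of ``bad'' $\mathbf k_e$ is exactly one hyperplane of $\ffield^{|\In(\tail(e))|}$; and to finish with a union bound over the supports. First I would feed in Lemma~\ref{lemma:18saa}: every $(\bx,\bz^{k})\in\badxz^{k}_\idxset$ has $(\bz^{k})_{k}=0$ and $w_H(\bz^{k})=d_t-1-|\idxset|$. Writing $\mathbf v(\bx,\bz^{k})=(\bx\bA_{\Out(s)}^{k-1}-(\bz^{k})^{\backslash k})\bF^{k-1}$, condition \eqref{eq:26d2a} then reads simply $\mathbf v(\bx,\bz^{k})\,\mathbf k_e\neq 0$ on $\badxz^{k}_\idxset$. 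Only the entries of $\mathbf k_e$ indexed by $\In(\tail(e))$ are free, and by \eqref{eq:24kafa} (in view of \eqref{eq:ak78a}) the restriction of $\mathbf v(\bx,\bz^{k})$ to those coordinates equals $\fout^{k-1}_{\In(\tail(e))}(\bx,-(\bz^{k})^{\backslash k})$; hence $\mathbf k_e\mapsto\mathbf v(\bx,\bz^{k})\mathbf k_e$ is a linear functional on $\ffield^{|\In(\tail(e))|}$, and it is the zero functional if and only if $\fout^{k-1}_{\In(\tail(e))}(\bx,-(\bz^{k})^{\backslash k})=\bzero$.

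Next I would partition $\badxz^{k}_\idxset$ by the error support $\rho=\rho_{\bz^{k}}$. By the weight statement just invoked, each such $\rho$ avoids the edge $e$, so $\rho\subseteq\edgeset^{k-1}$ with $|\rho|=d_t-1-|\idxset|$, whence there are at most $\binom{n_s+k-1}{d_t-1-|\idxset|}$ possibilities; moreover the pairs in $\badxz^{k}_\idxset$ with support $\rho$ are precisely those with $\bz^{k}\in\rho^*$. By Lemma~\ref{lemma:1d8a} the span of this set of pairs is at most one-dimensional. If it is trivial, this support contributes no bad $\mathbf k_e$; otherwise I would fix a spanning pair $(\bx_\rho,\bz_\rho)$, which must have $\bx_\rho\neq\bzero$, and note that every pair in $\badxz^{k}_\idxset$ with support $\rho$ is a nonzero scalar multiple of $(\bx_\rho,\bz_\rho)$ (nonzero because a pair in $\badxz^{k}_\idxset$ has a nonzero codeword part). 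Since $(\bx,\bz^{k})\mapsto\mathbf v(\bx,\bz^{k})$ is linear, the $\mathbf k_e$ violating \eqref{eq:26d2a} for \emph{some} such pair form exactly the kernel of $\mathbf k_e\mapsto\mathbf v(\bx_\rho,\bz_\rho)\mathbf k_e$.

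The crux is to show this kernel is a proper hyperplane, i.e.\ that $\fout^{k-1}_{\In(\tail(e))}(\bx_\rho,-(\bz_\rho)^{\backslash k})\neq\bzero$; I would argue this by contradiction using the feasible condition at iteration $k-1$. Let $f$ be the edge of the $j$th chosen path from $s$ to $t$ with $\head(f)=\tail(e)$. Since $e$ is the $k$th edge appended in upstream-to-downstream order, $f\in\edgeset^{k-1}$, and $f$ is the most downstream edge of the truncation of that path in $\graph^{k-1}$, hence the $j$th incoming edge of $t$ in $\graph^{k-1}$. If $\fout^{k-1}_{\In(\tail(e))}(\bx_\rho,-(\bz_\rho)^{\backslash k})=\bzero$, then its $f$-coordinate vanishes, i.e.\ $(\fout^{k-1}_t(\bx_\rho,-(\bz_\rho)^{\backslash k}))_j=0$; together with $(\fout^{k-1}_t(\bx_\rho,-(\bz_\rho)^{\backslash k}))^{\backslash\idxset\cup\{j\}}=\bzero$, which holds because $(\bx_\rho,\bz_\rho)$ fails \eqref{eq:25qea}, this gives $(\fout^{k-1}_t(\bx_\rho,-(\bz_\rho)^{\backslash k}))^{\backslash\idxset}=\bzero$, contradicting the feasible condition for $i=k-1$ applied with the same $t$ and $\idxset$, the nonzero codeword $\bx_\rho$, and the error vector $(\bz_\rho)^{\backslash k}$, whose weight is $d_t-1-|\idxset|$.

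With the crux in hand, each nonempty part of the partition contributes exactly $q^{|\In(\tail(e))|-1}$ bad values of $\mathbf k_e$ and the trivial parts contribute none; summing over the at most $\binom{n_s+k-1}{d_t-1-|\idxset|}$ supports $\rho$ yields the claimed bound. I expect the crux step to be the only real obstacle: the remaining steps are bookkeeping on top of Lemmas~\ref{lemma:18saa} and~\ref{lemma:1d8a}, but excluding a ``degenerate'' support $\rho$ whose associated hyperplane would be all of $\ffield^{|\In(\tail(e))|}$ is exactly what keeps the bound from collapsing, and it is precisely here that the induction hypothesis --- the feasible condition at $k-1$ --- must be invoked a second time, for the very same index set $\idxset$.
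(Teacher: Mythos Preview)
Your proposal is correct and follows essentially the same approach as the paper's own proof: both reduce \eqref{eq:26d2a} on $\badxz^{k}_\idxset$ to a homogeneous linear condition via Lemma~\ref{lemma:18saa}, partition by the error pattern $\rho$ and invoke Lemma~\ref{lemma:1d8a} to get a single representative per pattern, and then use the feasible condition at $i=k-1$ together with membership in $\badxz^{k}_\idxset$ to show that the $j$th coordinate $(\fout_t^{k-1}(\bx_\rho,-(\bz_\rho)^{\backslash k}))_j$ is nonzero, yielding a proper hyperplane of bad $\mathbf k_e$. Your contrapositive presentation of the crux and your explicit identification of the predecessor edge $f$ are merely stylistic variations on the paper's direct argument.
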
  
\begin{proof}  
For $(\bx_0,\bz_0^{k})\in \badxz^{k}_\idxset$, by Lemma~\ref{lemma:18saa}, $(\bz_0^{k})_{k}=0$. 
Thus, all the ${\bf k}_e$ satisfying  
\begin{equation}  
  \label{eq:28jjr4}  
  (\bx_0\bA_{\Out(s)}^k-(\bz_0^{k})^{\backslash k})\bF^k{\bf k}_e  =  0  
\end{equation}  
do not satisfy (\ref{eq:26d2a}) for $(\bx_0,\bz^{k}_0)\in
\badxz^{k}_\idxset$.    
To count the number of solutions of (\ref{eq:28jjr4}), we notice that 
\begin{align} 
 (\fout_t^{k-1}(\bx_0,-(\bz_0^{k})^{\backslash k}))^{\backslash\idxset} \neq \mathbf{0}, 
\end{align} 
by the feasible condition holding for $i=k-1$,  
and 
\begin{align} 
 (\fout_t^{k-1}(\bx_0,-(\bz_0^{k})^{\backslash k}))^{\backslash\idxset\cup\{j\}} = \mathbf{0}, 
\end{align} 
since $(\bx_0,\bz^{k}_0)\in \badxz^{k}_\idxset$. 
Thus,  
\begin{equation*}
 (\fout_t^{k-1}(\bx_0,-(\bz_0^{k})^{\backslash k}))_j= ((\bx_0\bA_{\Out(s)}^{k-1}-(\bz_0^{k})^{\backslash k})\bF^{k-1}(\bA_{\In(t)}^{k-1})^\tr)_j\neq \mathbf{0},
\end{equation*} 
which gives a nonzero component of
$(\bx_0\bA_{\Out(s)}^{k-1}-(\bz_0^{k})^{\backslash k})\bF^{k-1}$ corresponds
to the edge that precedes edge $e$ on the $j$th path from $s$ to $t$.
This shows that the components of
$(\bx_0\bA_{\Out(s)}^{k-1}-(\bz_0^{k})^{\backslash k})\bF^{k-1}$ corresponding to the edges in $\In(\tail(e))$ are
not all zero. On the other hand, a component of ${\bf k}_e$ can
possibly be nonzero if and only if it corresponds to an edge in
$\In(\tail(e))$.  Therefore, the solution space of ${\bf k}_e$ in
(\ref{eq:28jjr4}) is an $\mathbb{F}_q^{|\In(\tail(e))|-1}$-dimensional
subspace.

By Lemma \ref{lemma:18saa}, for each $(\bx,\bz^{k})\in
\badxz^{k}_\idxset$, $\bz^{k}$ must match an error
pattern $\rho$ with $|\rho|=d_t-1-|\idxset|$ and $e\notin
\rho$.  
Since there are totally $n_s+k-1$ edges in $\graph^{k}$
excluding $e$, there are $\binom{n_s+k-1}{d_t-1-|\idxset|}$
error patterns with size $d_t-1-|\idxset|$.

Consider an error pattern $\rho$ with $|\rho|=d_t-1-|\idxset|$ and $e\notin \rho$. 
By Lemma \ref{lemma:1d8a}, if $(\bx_0,\bz^{k}_0)\in \badxz^{k}_\idxset$ with $\bz^{k}_0\in\rho^*$, all $(\bx,\bz^{k})\in \badxz^{k}_\idxset$ with $\bz^{k}\in \rho^*$ can be expressed as $(\alpha\bx_0,\alpha\bz^{k}_0)$ with nonzero $\alpha\in \ffield$. 
Since we obtain the same solutions of ${\bf k}_e$ in (\ref{eq:28jjr4}) when $\bx_0$ and $\bz_0^{k}$ are replaced by $\alpha\bx_0$ and $\alpha\bz^{k}_0$, respectively,  
for a particular pattern $\rho$, we only need to consider one $(\bx_0,\bz^{k}_0)\in \badxz^{k}_\idxset$ with $\bz^{k}_0\in\rho^*$. 
 
Upon considering all error patterns $\rho$ with $|\rho|=d_t-1-|\idxset|$ and $e\notin \rho$, we conclude that there exist at most  
$\binom{n_s+k-1}{d_t-1-|\idxset|}q^{|\In(\tail(e))|-1}$ values of ${\bf k}_e$ not  
satisfying $(\ref{eq:26d2a})$ for some $(\bx,\bz^{k})\in\badxz^{k}_\idxset$. 
\end{proof}  
  
Considering the worst case that for all $\sinkt\in\sinkset$, edge $e$  
is on an edge-disjoint path  
from the source node $s$ to sink node $\sinkt$, and considering all the index set $\idxset$ with $0\leq |\idxset|\leq d_t-1$ and $j\notin\idxset$ for each sink node $t$, we have at most  
\begin{eqnarray}  
& & \sum_{t\in \sinkset}\sum_{l=0}^{d_t-1}  
\binom{r_t-1}{l}\binom{n_s+k-1}{d_t-1-l}q^{|\In(\tail(e))|-1} \nonumber\\  
&= &\sum_{t\in \sinkset}\binom{r_t+n_s+k-2}{d_t-1} q^{|\In(\tail(e))|-1} \\ 
& \leq & \sum_{t\in\mathcal{T}}\binom{r_t+|\mathcal{E}|-2}{d_t-1}q^{|\In(\tail(e))|-1} \label{eq:32kaoq} 
\end{eqnarray}  
vectors that cannot be chosen as ${\bf k}_e$. Note that (\ref{eq:32kaoq}) is justified because $0\leq k-1< |\edgeset|-n_s$. 
Since $q>\sum_{t\in\mathcal{T}}\binom{r_t+|\mathcal{E}|-2}{d_t-1}$,  
there exists a choice of ${\bf k}_e$ such that  
for all $\mathcal{L}$ satisfying C\ref{i:c2}) and $j \not\in \idxset$, all $\bx$ satisfying C\ref{i:c3}), and all $\bz^{k}$ satisfying C\ref{i:c4}) for $i=k$, 
(\ref{eq:22rpo}) holds for $i=k$. Together with the other cases (where the choice of ${\bf k}_e$ is immaterial), we have proved the existence of a ${\bf k}_e$ such that the feasible condition holds for $i=k$. 
 

\section{Concluding Remarks}

This work, together with the previous work \cite{yang08eu},
gives a framework for coherent network error correction. The
work \cite{yang08eu} characterizes the error
correction/detection capability of a general transmission
system with network coding being a special case.  The
problems concerned here are the coding bounds and the code
construction for network error correction.

In this work, refined versions of the Hamming
bound, the Singleton bound and the Gilbert-Varshamov bound
for network error correction have been presented with simple
proofs based on the distance measures developed in
\cite{yang08eu}. These bounds are improvements over the ones
in \cite{nwc_err, nec1,nec2} for the linear network coding
case.  Even though these bounds are stated based on the
Hamming weight as the weight measure on the error vectors,
they can also be applied to the weight measures in
\cite{zhang08, adv, weight} because of the equivalence
relation among all these weight measures (See
\cite{yang08eu,yangthesis}).

Like the original version of the Singleton bound
\cite{nwc_err, nec1}, the refined Singleton bound for linear
network codes proved in this paper continues to be
tight. Two different construction algorithms have been
presented and both of them can achieve the refined Singleton
bound.  The first algorithm finds a codebook based on a
given set of local encoding kernels, which simply constructs
an MDS code when the problem setting is the classical case.
The second algorithm constructs a set of of local encoding
kernels based on a given classical error-correcting code
satisfying a certain minimum distance requirement by
recursively choosing the local encoding kernels that
preserve the required minimum distance properties.


There are many problems to be solved towards application of network
error correction. Our algorithms require a large field size to
guarantee the existence of network codes with large minimum
distances. One future work is to consider how to relax this field size
requirement. Fast decoding algorithms of network error-correcting
codes are also desired.  Moreover, network error correction in cyclic
networks is sill lack of investigation.

\section*{Acknowledgment}

The authors thank the reviewers and the associate
editor for their precise and insightful comments. 
Those comments help us to improve various aspects of this paper.
The authors also thank Prof. Zhen Zhang for
the valuable discussion and Prof. Sidharth Jaggi for his suggestions.

The work of Raymond W.~Yeung was partially supported
by a grant from the Research Grant Committee of the Hong Kong 
Special Administrative Region, China (RGC Ref.~No.\
CUHK 2/06C) and a grant from Cisco System, Inc.




\end{document}